%
%
%
%
\documentclass[10pt, twocolumn]{IEEEtran}

\usepackage{psfrag}
\usepackage{amssymb}
\usepackage{amsmath}
\usepackage{pifont}
\usepackage{cite}
\usepackage{graphics}
\usepackage{graphicx}
\usepackage{ctable}
\usepackage{epsfig}
\usepackage{subfigure}
\usepackage{amscd}
\usepackage{hyperref}
\usepackage{bm}
\usepackage{tikz}
\usepackage{algorithmic}
\usepackage{algorithm}

\newtheorem{theorem}{Theorem}[section]
\newtheorem{remark}{Remark}
\newtheorem{lemma}[theorem]{Lemma}

\newtheorem{corollary}[theorem]{Corollary}

\begin{document}

\title{Generalized Orthogonal Matching Pursuit}

\author{Jian~Wang,~\IEEEmembership{Student Member,~IEEE},
        Seokbeop~Kwon,~\IEEEmembership{Student Member,~IEEE},
        and Byonghyo~Shim,~\IEEEmembership{Senior Member,~IEEE}
\thanks{J. Wang, S. Kwon, and B. Shim are with School of Information and Communication, Korea University, Seoul, Korea (email: \{jwang,sbkwon,bshim\}@isl.korea.ac.kr).}
\thanks{Copyright (c) 2012 IEEE. Personal use of this material is permitted. However, permission to use this material for any other purposes must be obtained from the IEEE by sending a request to pubs-permissions@ieee.org}
\thanks{This work was supported by the KCC (Korea Communications Commission), Korea, under the R\&D program supervised by the KCA (Korea Communication Agency) (KCA-12-911-01-110) and the NRF grant funded by the Korea government (MEST) (No. 2011-0012525). A part of this paper was presented in Asilomar Conference on Signals, Systems \& Computers, Nov., 2011.}
}

\maketitle
\begin{abstract}
As a greedy algorithm to recover sparse signals from compressed
measurements, orthogonal matching pursuit (OMP) algorithm has
received much attention in recent years. In this paper, we introduce
an extension of the OMP for pursuing efficiency in reconstructing
sparse signals. Our approach, henceforth referred to as generalized
OMP (gOMP), is literally a generalization of the OMP in the sense
that multiple $N$ indices are identified per iteration. Owing to the
selection of multiple ``correct" indices, the gOMP algorithm is
finished with much smaller number of iterations when compared to the
OMP. We show that the gOMP can perfectly reconstruct any $K$-sparse
signals ($K > 1$), provided that the sensing matrix satisfies the RIP with
$\delta_{NK} < \frac{\sqrt{N}}{\sqrt{K} + 3 \sqrt{N}}$. We also
demonstrate by empirical simulations that the gOMP has excellent
recovery performance comparable to $\ell_1$-minimization technique
with fast processing speed and competitive computational complexity.
\end{abstract}

\begin{keywords}
Compressive sensing (CS), orthogonal matching pursuit, sparse recovery,
restricted isometry property (RIP).
\end{keywords}

\IEEEpeerreviewmaketitle

\setcounter{page}{1}

\section{Introduction}


As a paradigm to acquire sparse signals at a rate significantly
below Nyquist rate, compressive sensing has attracted much attention
in recent years
\cite{donoho2001uncertainty,donoho2006sparse,candes2005decoding,candes2006robust,
candes2007sparsity, candes2008restricted,davenport2010analysis,tropp2007signal,baraniuk2008simple,raginsky2010compressed,wang2011exact,gao2011compressive,khajehnejad2011sparse}.
The goal of compressive sensing is to recover the sparse vector using
a small number of linearly transformed measurements. The process of
acquiring compressed measurements is referred to as {\it sensing}
while that of recovering the original sparse signals from compressed
measurements is called {\it reconstruction}. In the sensing
operation, $K$-sparse signal vector $\mathbf{x}$, i.e.,
$n$-dimensional vector having at most $K$ non-zero elements, is
transformed into $m$-dimensional measurements $\mathbf{y}$ via a
matrix multiplication with $\mathbf{\Phi}$.
The measurement is expressed as
\begin{equation}
\label{eq:modl} {\mathbf{y = \Phi x}}.
\end{equation}
Since $n > m$ for most of the compressive sensing scenarios, the
system in (\ref{eq:modl}) can be classified as an underdetermined
system having more unknowns than observations.
Clearly, it is in general impossible to obtain an accurate reconstruction of the
original input $\mathbf{x}$ using conventional ``inverse" transform
of $\mathbf{\Phi}$.
%
%
%
Whereas, it is now well known that with a prior information on the signal sparsity and a
condition imposed on $\mathbf{\Phi}$, $\mathbf{x}$ can be
reconstructed by solving the ${\ell_1}$-minimization problem
\cite{candes2008restricted}:
\begin{eqnarray}
\label{eq:basic2} \min_{\mathbf{x} } \|\mathbf{x}\|_{{1}}
\hspace{0.7cm} \mbox{subject to} \hspace{0.3cm}\mathbf{\Phi x = y}.
\end{eqnarray}
%
A widely used condition of $\mathbf{\Phi}$ ensuring the exact
recovery of $\mathbf{x}$ is called \textit{restricted
isometry property} (RIP) \cite{candes2005decoding}.
A sensing matrix $\mathbf{\Phi}$ is said to satisfy the RIP of order
$K$ if there exists a constant $\delta \in (0,1)$ such that
\begin{eqnarray} \label{eq:RIP}
\left( {1 - {\delta}} \right)\left\| {\mathbf{x}} \right\|_2^2 \leq
\left\| {{\mathbf{\Phi x}}} \right\|_2^2 \leq \left( {1 + {\delta}}
\right)\left\| {\mathbf{x}} \right\|_2^2
\end{eqnarray}
for any $K$-sparse vector $\mathbf{x}$ ($\left\|\mathbf{x}\right\|_0
\leq K$).
In particular, the minimum of all constants $\delta$ satisfying
(\ref{eq:RIP}) is referred to as an {\it isometry constant} $\delta_K$.
It has been shown that  $\mathbf{x}$ can be perfectly recovered by solving ${\ell_1}$-minimization problem if $\delta_{2K} < \sqrt{2} - 1$ \cite{candes2008restricted}.
While ${\ell_1}$-norm is convex and hence the problem can be solved
via linear programming (LP) technique, the complexity associated
with the LP is cubic (i.e., $O (n^3)$) in the size of the original vector to be
recovered \cite{sarvotham2006compressed} so that the complexity is burdensome for many real applications.

Recently, greedy algorithms sequentially investigating the support
of $\mathbf{x}$ have received considerable attention as cost
effective alternatives of the LP approach.
Algorithms in this category include orthogonal matching pursuit
(OMP) \cite{tropp2007signal}, regularized OMP (ROMP)\cite{needell2010signal}, stagewise OMP (StOMP) \cite{donoho2006sparse}, subspace pursuit
(SP) \cite{dai2009subspace}, and compressive sampling matching
pursuit (CoSaMP) \cite{needell2009cosamp}.
%
As a representative method in the greedy algorithm family, the OMP
has been widely used due to its simplicity and competitive
performance.
Tropp and Gilbert \cite{tropp2007signal} showed that, for a
$K$-sparse vector $\mathbf{x}$ and an $m \times n$ Gaussian
sensing matrix $\mathbf{\Phi}$, the OMP recovers $\mathbf{x}$ from
$\mathbf{y} = \mathbf{\Phi x}$ with overwhelming probability if the number of measurements follows $m \sim K \log n$.
Wakin and Davenport showed that the OMP can exactly reconstruct all
$K$-sparse vectors if ${\delta _{K + 1}} <  \frac{1}{3\sqrt K }$
\cite{davenport2010analysis} and Wang and Shim recently improved the
condition to ${\delta _{K + 1}} < \frac{1}{\sqrt{K} + 1}$
\cite{wang2012Recovery}.

The main principle behind the OMP is simple and intuitive: in each
iteration, a column of $\mathbf{\Phi}$ maximally correlated with the
residual is chosen ({\bf identification}), the index of this column
is added to the list ({\bf augmentation}), and then the vestige of
columns in the list is eliminated from the measurements, generating
a new residual used for the next iteration ({\bf residual update}).
%
%
%
%
%
Among these, computational complexity of the OMP is dominated by the
identification and the residual update steps.
In the $k$-th iteration, the identification requires a matrix-vector
multiplication so that the number of floating point operations (flops) becomes $(2m - 1)n$.
Main operation of the residual update is to compute the estimate of
$\mathbf{x}$, which is completed by obtaining the least squares (LS)
solution and the required flops is approximately $4km$.
Additionally, $2 km$ flops are required to perform the residual
update. Considering that the algorithm requires $K$ iterations, the
total number of flops of the OMP is about $2 Kmn + 3 K^2m$.
Clearly, the sparsity $K$ plays an important role in the complexity
of the OMP. 
When the signal being recovered is not very sparse, therefore, the OMP
may not be an excellent choice.

%
%
There have been some studies on the modification of the OMP, mainly on the
identification step, to improve the computational
efficiency and recovery performance. In \cite{donoho2006sparse}, a
method identifying more than one indices in each iteration was
proposed. In this approach, referred to as the StOMP, indices whose magnitude of correlation exceeds a
deliberately designed threshold are chosen.
It is shown that while achieving performance comparable to
$\ell_1$-minimization technique, the StOMP runs much faster than the OMP as well
as $\ell_1$-minimization technique \cite{donoho2006sparse}.
%
%
In \cite{needell2010signal}, another variation of the OMP, so called
ROMP, was proposed. After choosing a set of $K$ indices with largest
correlation in magnitude, the ROMP narrows down the candidates by
selecting a subset satisfying a predefined regularization rule.
It is shown that the ROMP algorithm exactly recovers $K$-sparse
signals under $\delta_{2K} < 0.03/\sqrt{\log K}$
\cite{needell2009uniform}.
While the main focus of the StOMP and ROMP algorithm is on the modification of the identification step, the SP and CoSaMP algorithm require additional operation, called pruning step, to refine the signal estimate recursively.

Our approach lies on the similar ground of these approaches in the
sense that we pursue reduction in complexity through the modification on the identification step of the OMP. Specifically, towards the reduction of complexity and speeding-up the
execution time of the algorithm, we choose multiple indices in each
iteration. While previous efforts employ special treatment on the
identification step such as thresholding\cite{donoho2006sparse} (for StOMP) or
regularization\cite{needell2010signal} (for ROMP), the proposed method pursues
direct extension of the OMP by choosing indices corresponding to $N$
($\geq 1$) largest correlation in magnitude.
Therefore, our approach, henceforth referred to as {\it generalized OMP}
(gOMP), is literally  a generalization of the OMP and embraces
the OMP as a special case ($N = 1$).
Owing to the selection of multiple indices, multiple ``correct"
indices (i.e., indices in the support set) are added to the list and the algorithm is finished with much smaller number of iterations when compared to the OMP. Indeed, in both empirical simulations and complexity analysis, we observe that the gOMP achieves substantial reduction in the number of calculations with competitive reconstruction performance.

The primary contributions of this paper are twofold:
\begin{itemize}
\item We present an extension of the OMP, termed gOMP, for pursuing efficiency in reconstructing
sparse signals.  
Our empirical simulation shows that the recovery performance of
the gOMP is comparable to the LP technique as well as modified OMP algorithms (e.g., CoSaMP and
StOMP).
\item We develop a perfect recovery condition of the gOMP. To be specific, we show that the RIP of order $NK$ with $\delta_{NK} < \frac{\sqrt{N}}{\sqrt{K} + 3 \sqrt{N}}$ ($K > 1$) is sufficient for the gOMP to exactly recover any $K$-sparse vector within $K$ iterations (Theorem \ref{thm:summary_gOMPgood}).
As a special case of the gOMP, we show that a sufficient
condition of the OMP is given by $\delta_{K+1} <  \frac{1}{\sqrt{K} +
1}$. Also, we extend our work to the reconstruction of sparse signals in the presence of noise and obtain the bound of the estimation error.
\end{itemize}

It has been brought to our attention that in parallel to our effort,
orthogonal multi matching pursuit (OMMP) or orthogonal super greedy
algorithm (OSGA) \cite{liu2012orthogonal} suggested a similar
treatment to the one posted in this paper. Similar approach has also
been introduced in \cite{maleh2011improved}. Nevertheless, our work
is sufficiently distinct from these works in the sufficient recovery
condition analysis. Further, we also provide an analysis of the
noisy scenario (upper bound of the recovery distortion in
$\ell_2$-norm) for which there is no counterpart in the OMMP and
OSGA study.

The rest of this paper is organized as follows.
In Section II, we introduce the proposed gOMP algorithm and provide empirical experiments on the reconstruction performance. In Section III, we provide the RIP based analysis of the gOMP guaranteeing the perfect reconstruction of $K$-sparse signals. We also revisit the OMP algorithm as a special case of the gOMP and obtain a sufficient condition ensuring the recovery of $K$-sparse signals.
In Section IV, we study the reconstruction performance of the gOMP under noisy measurement scenario. In Section V, we discuss complexity of the gOMP algorithm and conclude the paper in Section VI.

We briefly summarize notations used in this paper.
Let  $\Omega = \{1,2,\cdots,n\}$ then $T = \{\hspace{1mm}i
\hspace{1mm} |\hspace{1mm} i \in \Omega,  x_i\neq 0\}$ denotes the
support of vector $\mathbf{x}$.
For $D \subseteq \Omega$, $|D|$ is the cardinality of $D$.
$T - D = T \backslash \left( T \cap D \right)$ is the set of all
elements contained in $T$ but not in $D$.
${{\mathbf{x}}_D} \in \mathbb{R}^{|D|}$ is a restriction of the
vector ${\mathbf{x}}$ to the elements with indices in $D$.
${{\mathbf{\Phi}}_D} \in {\mathbb{R}^{m \times \left| D \right|}}$ is a
submatrix of ${\mathbf{\Phi}}$ that only contains columns indexed by
$D$.
If $\mathbf{\Phi}_D$ is full column rank, then $\mathbf{\Phi}_D^\dagger =
(\mathbf{\Phi}'_D\mathbf{\Phi}_D)^{-1}\mathbf{\Phi}'_D$ is the pseudoinverse
of $\mathbf{\Phi}_D$.
$span(\mathbf{\Phi}_D)$ is the span of columns in $\mathbf{\Phi}_D$.
$\mathbf{P}_{D}=\mathbf{\Phi}_{D} \mathbf{\Phi}_{D}^\dagger$ is the
projection onto $span(\mathbf{\Phi}_{D})$.
$\mathbf{P}_{D}^\bot = \mathbf{I}-\mathbf{P}_{D}$ is the projection
onto the orthogonal complement of $span(\mathbf{\Phi}_{D})$.

\section{gOMP Algorithm}\label{sec:algorithm}
%
In each iteration of the gOMP algorithm, correlations between
columns of $\mathbf{\Phi}$ and the modified measurements (residual)
are compared and indices of the columns corresponding to $N$ maximal
correlation are chosen as the new elements of the estimated support
set $\Lambda^k$. As a trivial case, when $N = 1$, gOMP returns to the OMP.
Denoting the $N$ indices as $\phi (1), \cdots, \phi (N)$ where
$\phi(i) = \arg \mathop {\max} \limits_{j:j \in \Omega \backslash
\{\phi(i-1),\cdots,\phi(1)\}
}|\langle\mathbf{r}^{k-1},\mathbf{\varphi}_{j}\rangle|$, the
extended support set at the $k$-th iteration becomes $\Lambda^k =
\Lambda^{k-1}\cup \{\phi (1),\cdots,\phi (N)\}.$
After obtaining the LS solution ${{\mathbf{\hat x}}_{\Lambda^{k}}} =
\arg \mathop
{\min}\limits_{\mathbf{u}}{\left\|\mathbf{y}-\mathbf{\Phi}_{\Lambda^{k}}\mathbf{u}\right\|}_{2}
= \mathbf{\Phi}_{\Lambda^{k}}^\dag \mathbf{y}$, the residual
${{\mathbf{r}}^k}$ is updated by subtracting
$\mathbf{\Phi}_{\Lambda^{k}} {{\mathbf{\hat x}}_{\Lambda^{k}}}$ from
the measurements $\mathbf{y}$.
In other words, the projection of $\mathbf{ y}$ onto the orthogonal
complement space of $span(\mathbf{\Phi}_{\Lambda^k} )$ becomes the
new residual (i.e., ${{\mathbf{r}}^k} =
\mathbf{P}_{\Lambda^{k}}^\bot \mathbf{y}$).
These operations are repeated until either the iteration number
reaches maximum $k_{\max} = \min(K, \frac{m}{N})$ or the
$\ell_2$-norm of the residual falls below a prespecified threshold
($\| \mathbf{r}^k \|_2 \leq \epsilon$).


%
%
%
%
%
%

It is worth mentioning that the residual $\mathbf{r}^k$ of the gOMP
is orthogonal to the columns of $\mathbf{\Phi}_{\Lambda^k}$ since
\begin{eqnarray}
\left\langle {{{\mathbf{\Phi }}_{{\Lambda ^k}}},{{\mathbf{r}}^k}} \right\rangle  &=& \left\langle {{{\mathbf{\Phi }}_{{\Lambda ^k}}},{\mathbf{P}}_{{\Lambda ^k}}^ \bot {\mathbf{y}}} \right\rangle   \\
& = & {\mathbf{\Phi }}_{{\Lambda ^k}}'{\mathbf{P}}_{{\Lambda ^k}}^ \bot {\mathbf{y}} \\
& = & \label{eq:brk2} {\mathbf{\Phi }}_{{\Lambda ^k}}'{\left(
{{\mathbf{P}}_{{\Lambda ^k}}^ \bot }
  \right)'}{\mathbf{y}}  \\
&=& \label{eq:brk33} {\left( {{\mathbf{P}}_{{\Lambda ^k}}^ \bot
{{\mathbf{\Phi }}_{{\Lambda ^k}}}} \right)'}{\mathbf{y}} =
{\mathbf{0}}
\end{eqnarray}
where (\ref{eq:brk2}) follows from the symmetry of ${{\mathbf{P}}_{{\Lambda ^k}}^ \bot }$ (${{\mathbf{P}}_{{\Lambda^k}}^ \bot } = \left({{\mathbf{P}}_{{\Lambda ^k}}^ \bot }\right)')$ and (\ref{eq:brk33}) is due to $${{\mathbf{P}}_{{\Lambda ^k}}^ \bot
{{\mathbf{\Phi }}_{{\Lambda ^k}}}} = \left(
\mathbf{I}-\mathbf{P}_{\Lambda ^k} \right) \mathbf{\Phi }_{\Lambda
^k} = \mathbf{\Phi}_{\Lambda ^k} - \mathbf{\Phi}_{\Lambda ^k}
\mathbf{\Phi}_{\Lambda ^k}^\dagger \mathbf{\Phi}_{\Lambda ^k} =
\mathbf{0}.$$  Here we note that this property is satisfied when
$\mathbf{\Phi}_{\Lambda^k}$ has full column rank, which is true
if $k \leq {m/N}$ in the gOMP operation.
It is clear from this observation that indices in ${\Lambda^k}$
cannot be re-selected in the succeeding iterations and the
cardinality of ${\Lambda^k}$ becomes simply $kN$.
When the iteration loop of the gOMP is finished, therefore, it is
possible that the final support set $\Lambda^s$ contains indices not
in $T$. Note that, even in this situation, the final result is
unaffected and the original signal is recovered because
\begin{eqnarray}
\hat{\mathbf{x}}_{\Lambda^s} 
&=& \mathbf{\Phi}_{{{\Lambda }^{s}}}^{\dagger }\mathbf{y}  \\
&=& \label{eq:bff1} {{\left( \mathbf{\Phi }_{{{\Lambda
}^s}}'{{\mathbf{\Phi }}_{{{\Lambda }^s}}}
\right)}^{-1}}\mathbf{\Phi }_{{{\Lambda }^s}}'{{\mathbf{\Phi
}}_{T}}{{\mathbf{x}}_{T}}     \\
&=& \label{eq:bff2} {{\left( \mathbf{\Phi }_{{{\Lambda
}^s}}'{{\mathbf{\Phi }}_{{{\Lambda }^s}}} \right)}^{-1}}\mathbf{\Phi }_{{{\Lambda }^s}}' \left( {\mathbf{\Phi}_{{{\Lambda}^s}}}{{\mathbf{x}}_{{\Lambda }^s}} \right)   \nonumber     \\
& & -{{\left( \mathbf{\Phi }_{{{\Lambda
}^s}}'{{\mathbf{\Phi }}_{{{\Lambda }^s}}} \right)}^{-1}}\mathbf{\Phi }_{{{\Lambda }^s}}' {\mathbf{\Phi}_{{{\Lambda}^s - T}}}{{\mathbf{x}}_{{\Lambda }^s - T}}   \\
 &=& \label{eq:bff3} {\mathbf{x}}_{{\Lambda }^s},
\end{eqnarray}
where (\ref{eq:bff2}) follows from the fact that ${\mathbf{x}}_{{\Lambda }^s - T } = \mathbf{0}$.
From this observation, we deduce that as long as at least one
correct index is found in each iteration of the gOMP, we can ensure
that the original signal is perfectly recovered within $K$
iterations. In practice, however, the number of correct indices
being selected is usually more than one so that the required number of
iterations is much smaller than $K$.

%
%
%
%

\begin{table*}
\begin{center}
\caption{The gOMP Algorithm} \label{tab:gOMP} 
\begin{tabular}{ll}  \hline
\textbf{Input}:  &measurements $\mathbf{y} \in \mathbb{R}^{m}$, \\
        &sensing matrix $\mathbf{\Phi} \in \mathbb{R}^{m \times n}$, \\
        &sparsity $K$,\\
        &number of indices for each selection $N$ ($N \leq K$ and $N \leq m/K$).\\
\textbf{Initialize}: &iteration count $k = 0$, \\
            &residual vector $\mathbf{r}^{0} = \mathbf{y}$, \\
            &estimated support set $\Lambda^{0} = \emptyset$. \\  \hline \\
\textbf{While}  &$\| \mathbf{r}^{k} \|_2 > \epsilon$ and $ k < \min \{K,m/N \}$ \textbf{do} \\
&$k = k + 1$.\\
            & (\textbf{Identification})~~~~~ Select indices $\{\phi(i)\}_{i = 1,2,\cdots,N}$  corresponding \\
            & \hspace{24mm}to $N$ largest entries (in magnitude) in $\mathbf{\Phi}' \mathbf{r}^{k-1}$.\\
            & (\textbf{Augmentation}) ~~~~$\Lambda^k = \Lambda^{k-1}\cup \{\phi(1),\cdots,\phi(N)\}$.\\
            & (\textbf{Estimation}) ~~~~~~~ ${{\mathbf{\hat x}}_{\Lambda^{k}}} = \arg \mathop {\min}\limits_{\mathbf{u}}{\left\|\mathbf{y}-\mathbf{\Phi}_{\Lambda^{k}}\mathbf{u}\right\|}_{2}$. \\
            & (\textbf{Residual Update}) ~${{\mathbf{r}}^k} = {\mathbf{y}} -{{\mathbf{\Phi }}_{\Lambda^{k}}}{{\mathbf{\hat{x}}_{\Lambda^{k}}}} $.\\
\textbf{End}  \\
\textbf{Output}:  & the estimated signal   $\hat{\mathbf{ x }} =
\arg \mathop{\min}\limits_{\mathbf{u}:{\text{supp}}\left(
{\mathbf{u}} \right) = \Lambda^{k}}{\left\| {{\mathbf{y - \Phi u}}}
\right\|_2}$.\\  \hline
\end{tabular}
\end{center}
\end{table*}


In order to observe the empirical performance of the gOMP algorithm, we performed computer simulations.
In our experiment, we use the testing strategy in
\cite{candes2005error,dai2009subspace} which measures the
effectiveness of recovery algorithms by checking the empirical
frequency of exact reconstruction in the noiseless environment. By
comparing the maximal sparsity level of the underlying sparse
signals at which the perfect recovery is ensured (this point is
often called {\it critical sparsity} \cite{dai2009subspace}),
accuracy of the reconstruction algorithms can be compared
empirically.
In our simulation, the following algorithms are considered.
\begin{enumerate}
\item LP technique for solving $\ell_1$-minimization problem (http://cvxr.com/cvx/).
\item OMP algorithm.
\item gOMP algorithm.
\item StOMP with false alarm control (FAC) based thresholding (http://sparselab.stanford.edu/).\footnote{Since FAC scheme outperforms false discovery control (FDC) scheme, we exclusively use FAC scheme in our simulation.}
    %
\item ROMP algorithm    \\
(http://www.cmc.edu/pages/faculty/DNeedell).
\item CoSaMP algorithm  \\
(http://www.cmc.edu/pages/faculty/DNeedell).
\end{enumerate}

\begin{figure}[t]
\begin{center}
\includegraphics[width=95mm]{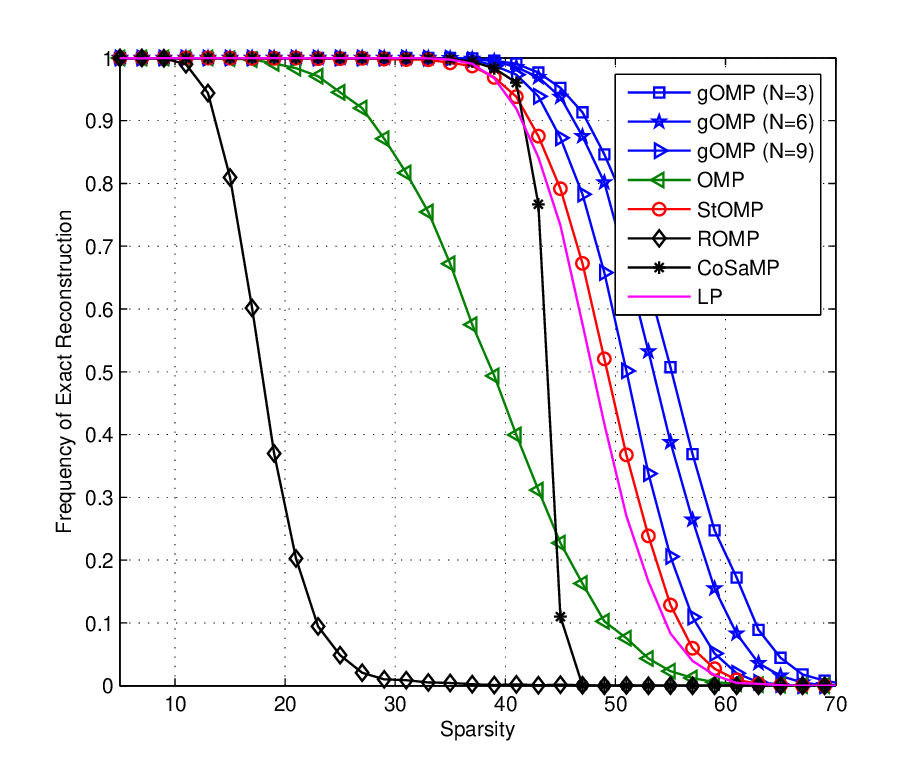}
 \caption{Reconstruction performance for $K$-sparse Gaussian signal
vector as a function of sparsity $K$.}\label{fig:GaussER}
\end{center}
\end{figure}

\begin{figure}[t]
\begin{center}
\includegraphics[width=95mm]{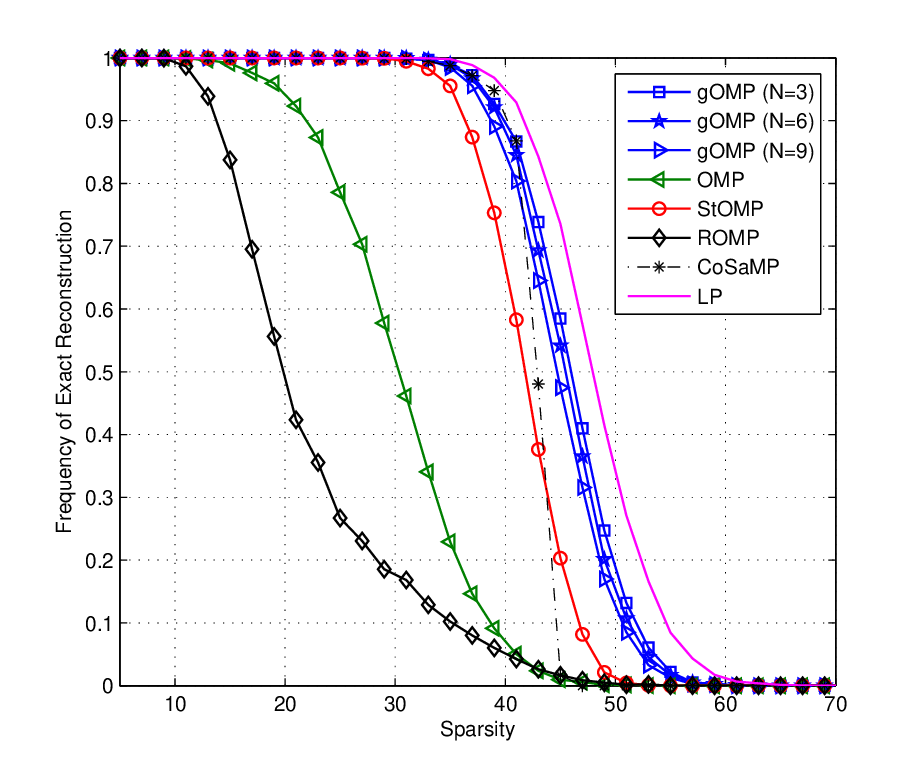}
 \caption{Reconstruction performance for $K$-sparse PAM signal
vector as a function of sparsity $K$.} \label{fig:PAMER}
\end{center}
\end{figure}

In each trial, we construct $m\times n$ ($m = 128$ and $n = 256$)
sensing matrix $\mathbf{\Phi}$ with entries drawn independently from
Gaussian distribution $\mathcal{N}(0, \frac{1}{m})$. In addition, we
generate a $K$-sparse vector $\mathbf{x}$ whose support is chosen at
random. We consider two types of sparse signals; Gaussian signals
and pulse amplitude modulation (PAM) signals. Each nonzero element
of Gaussian signals is drawn from standard Gaussian and that in PAM
signals is randomly chosen from the set $\{\pm 1, \pm 3\}$.
It should be noted that the gOMP algorithm should satisfy $N \leq K$ and $N \leq m/K$ and thus the $N$ value should not exceed $\sqrt m$ ($N \leq \sqrt m$). In view of this, we choose $N = 3, 6, 9$ in our simulations.
For each recovery algorithm, we perform at least $5,000$ independent
trials and plot the empirical frequency of exact reconstruction.

In Fig. \ref{fig:GaussER}, we provide the recovery performance as a function of the sparsity level $K$.
Clearly, higher critical sparsity implies better empirical reconstruction performance.
The simulation results reveal that the critical sparsity of
the gOMP algorithm is larger than that of the ROMP, OMP, and StOMP algorithms.
Even compared to the LP technique and CoSaMP, the gOMP exhibits slightly better recovery performance.
Fig. \ref{fig:PAMER} provides results for the PAM input signals. We
observe that the overall behavior is similar to the case of Gaussian
signals except that the $\ell_1$-minimization is better than the gOMP.
Overall, we observe that the gOMP algorithm is competitive for both Gaussian and PAM input scenarios.

\begin{figure}[t]
\begin{center}
\includegraphics[width=95mm]{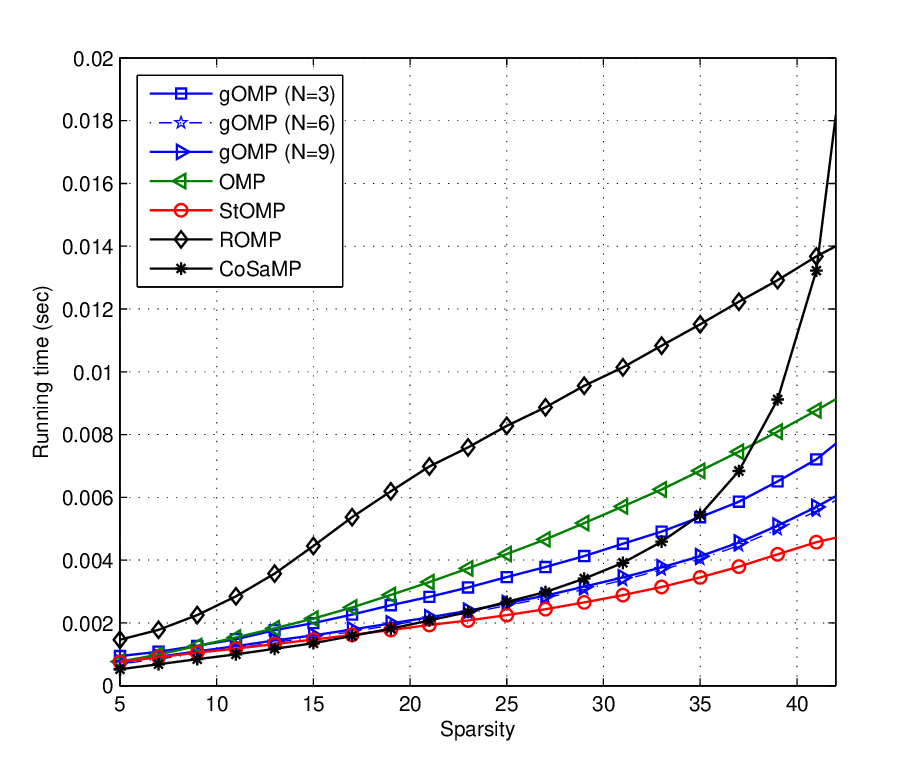}
 \caption{Running time as a function of sparsity $K$. Note that the
running time of the $\ell_1$-minimization is not in the figure since
the time is more than order of magnitude higher than the time of
other algorithms.} \label{fig:run_time}
\end{center}
\end{figure}

In Fig. \ref{fig:run_time}, the running time (average of Gaussian
and PAM signals) for recovery algorithms is provided. The running
time is measured using the MATLAB program under quad-core 64-bit
processor and Window $7$ environments.\footnote{Note that we do not
use any option for enabling the multithread operations.} Note that
we do not add the result of LP technique simply because the running
time is more than order of magnitude higher than that of all other
algorithms.
Overall, we observe that the running time of StOMP, CoSaMP, gOMP,
and OMP is more or less similar when the signal vector is sparse
(i.e., when $K$ is small).
However, when the signal vector becomes less sparse (i.e., when $K$ is
large), the running time of the CoSaMP and OMP increases much faster
than that of the gOMP and StOMP. In particular, while the running
time of the OMP, StOMP, and gOMP increases linearly over $K$, that
for the CoSaMP seems to increase quadratically over $K$.
Among algorithms under test, the running time of the StOMP and gOMP ($N = 6, 9$) is smallest.

\section{RIP based Recovery Condition Analysis}
\label{sec:mainresults}
In this section, we analyze the RIP based condition under which the
gOMP can perfectly recover $K$-sparse signals.
First, we analyze the condition ensuring a success at the first
iteration ($k =1$). Success means that at least one correct index is
chosen in the iteration. Next, we study the condition ensuring the
success in the non-initial iteration ($k > 1$).
By combining two conditions, we obtain the sufficient condition of the gOMP algorithm
guaranteeing the perfect recovery of $K$-sparse signals.
The following lemmas are useful in our analysis.

\begin{lemma}[Lemma 3 in \cite{candes2005decoding,dai2009subspace}] \label{lem:mono}
If the sensing matrix satisfies the RIP of both orders $K_1$ and
$K_2$, then ${\delta _{K_1}} \leq  {\delta _{K_2}}$ for any $K_1
\leq K_2$. This property is referred to as the monotonicity of the
isometry constant.
\end{lemma}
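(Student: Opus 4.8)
The plan is to exploit the trivial set inclusion that every $K_1$-sparse vector is also $K_2$-sparse whenever $K_1 \leq K_2$, together with the fact that $\delta_{K_1}$ and $\delta_{K_2}$ are, by definition, the \emph{smallest} constants for which the RIP bound (\ref{eq:RIP}) holds over the respective classes of sparse vectors. The hypothesis that $\mathbf{\Phi}$ satisfies the RIP of both orders $K_1$ and $K_2$ ensures that $\delta_{K_1}$ and $\delta_{K_2}$ both exist and lie in $(0,1)$, so everything below is well defined.

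First I would recast the isometry constant in its variational form. Since a candidate $\delta$ makes (\ref{eq:RIP}) hold for all $K$-sparse $\mathbf{x}$ exactly when it dominates the worst-case deviation of $\|\mathbf{\Phi x}\|_2^2 / \|\mathbf{x}\|_2^2$ from $1$, the minimal such $\delta$ is
\begin{equation}
\delta_K = \sup_{\mathbf{x} \neq \mathbf{0},\ \|\mathbf{x}\|_0 \leq K} \left| \frac{\left\| \mathbf{\Phi x} \right\|_2^2}{\left\| \mathbf{x} \right\|_2^2} - 1 \right| .
\end{equation}
Next I would invoke $\{\mathbf{x} : \|\mathbf{x}\|_0 \leq K_1\} \subseteq \{\mathbf{x} : \|\mathbf{x}\|_0 \leq K_2\}$, which holds because $K_1 \leq K_2$. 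Taking the supremum of the nonnegative quantity above over the smaller index set cannot exceed the supremum over the larger one, and hence $\delta_{K_1} \leq \delta_{K_2}$.

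Alternatively, and avoiding the variational rewriting altogether, I would argue directly from the definition: put $\delta = \delta_{K_2}$. Then (\ref{eq:RIP}) with this $\delta$ holds for every $K_2$-sparse vector, so in particular for every $K_1$-sparse vector; thus $\delta$ is an admissible RIP constant of order $K_1$, and since $\delta_{K_1}$ is the minimum of all such admissible constants, $\delta_{K_1} \leq \delta = \delta_{K_2}$.

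There is essentially no obstacle in this proof; the only point requiring a sentence of care is making sure the definition of $\delta_K$ is read as an infimum over admissible constants (equivalently, a supremum over sparse vectors), so that enlarging the feasible class of vectors can only enlarge — never shrink — the constant. Everything else is a one-line consequence of monotonicity of suprema under set inclusion.
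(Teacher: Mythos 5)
Your proof is correct and is exactly the standard argument: since the paper itself does not reprove this lemma but cites it from \cite{candes2005decoding,dai2009subspace}, your direct observation that every $K_1$-sparse vector is $K_2$-sparse, so that $\delta_{K_2}$ is an admissible RIP constant of order $K_1$ and therefore dominates the minimal one, matches the proof those references give. No gaps; both the variational form and the one-line admissibility argument are valid, and the second suffices on its own.
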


\begin{lemma} [Consequences of RIP \cite{needell2009cosamp,candes2005decoding}]
\label{lem:rips} For $I \subset \Omega$, if $\delta_{\left| I
\right|} < 1$ then for any ${\mathbf{u}} \in {\mathbb{R}^{\left| I
\right|}}$,
\begin{eqnarray}
\left( {1 - {\delta _{\left| I \right|}}} \right){\left\|
{\mathbf{u}} \right\|_2}  \leq {\left\| {{\mathbf{\Phi
}}_I'{{\mathbf{\Phi }}_I}{\mathbf{u}}} \right\|_2}
 \leq \left( {1 + {\delta _{\left| I
\right|}}} \right){\left\| {\mathbf{u}} \right\|_2}, \nonumber \\
\frac {1} { {1 + {\delta _{\left| I \right|}}}}  {\left\|
{\mathbf{u}} \right\|_2}  \leq { \| {  \left({\mathbf{\Phi }}_I'
{{\mathbf{\Phi }}_I}\right)^{-1} {\mathbf{u}}}  \|_2}
 \leq \frac {1} {1 - {\delta _{\left| I
\right|}}} {\left\| {\mathbf{u}} \right\|_2}. \nonumber
\end{eqnarray}
\end{lemma}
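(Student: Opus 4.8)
The plan is to derive both double inequalities from the RIP inequality (\ref{eq:RIP}) specialized to vectors supported on $I$, by reading off the eigenvalues of the Gram matrix $\mathbf{A} := \mathbf{\Phi}_I'\mathbf{\Phi}_I$. First I would note that any $\mathbf{u} \in \mathbb{R}^{|I|}$ is the restriction to $I$ of an $|I|$-sparse vector $\mathbf{x}$ with $\mathbf{x}_I = \mathbf{u}$, so that $\mathbf{\Phi x} = \mathbf{\Phi}_I \mathbf{u}$ and $\|\mathbf{x}\|_2 = \|\mathbf{u}\|_2$. Applying (\ref{eq:RIP}) with $\delta = \delta_{|I|}$ then gives
\[
(1-\delta_{|I|})\|\mathbf{u}\|_2^2 \le \|\mathbf{\Phi}_I \mathbf{u}\|_2^2 = \mathbf{u}'\mathbf{\Phi}_I'\mathbf{\Phi}_I \mathbf{u} \le (1+\delta_{|I|})\|\mathbf{u}\|_2^2
\]
for every $\mathbf{u}$. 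This says precisely that the symmetric positive semidefinite matrix $\mathbf{A}$ has Rayleigh quotient confined to $[1-\delta_{|I|},\,1+\delta_{|I|}]$, hence all eigenvalues of $\mathbf{A}$ lie in that interval. Since $\delta_{|I|}<1$, the smallest eigenvalue is strictly positive, so $\mathbf{A}$ is invertible (equivalently $\mathbf{\Phi}_I$ has full column rank), and the eigenvalues of $\mathbf{A}^{-1}$ lie in $[\frac{1}{1+\delta_{|I|}},\,\frac{1}{1-\delta_{|I|}}]$.

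Next I would convert this spectral information into the claimed norm bounds via the spectral theorem. Diagonalize $\mathbf{A} = \mathbf{Q}\mathbf{\Lambda}\mathbf{Q}'$ with $\mathbf{Q}$ orthogonal and $\mathbf{\Lambda} = \mathrm{diag}(\lambda_1,\dots,\lambda_{|I|})$, each $\lambda_i \in [1-\delta_{|I|},\,1+\delta_{|I|}]$. For any $\mathbf{u}$ put $\mathbf{v} = \mathbf{Q}'\mathbf{u}$, so $\|\mathbf{v}\|_2 = \|\mathbf{u}\|_2$ and $\|\mathbf{A}\mathbf{u}\|_2^2 = \|\mathbf{\Lambda}\mathbf{v}\|_2^2 = \sum_i \lambda_i^2 v_i^2$; since $\lambda_i^2 \in [(1-\delta_{|I|})^2,\,(1+\delta_{|I|})^2]$, this quantity is sandwiched between $(1-\delta_{|I|})^2\|\mathbf{u}\|_2^2$ and $(1+\delta_{|I|})^2\|\mathbf{u}\|_2^2$, which yields the first double inequality after taking square roots. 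Repeating the identical computation with $\mathbf{\Lambda}^{-1}$ in place of $\mathbf{\Lambda}$ and using $\lambda_i^{-1} \in [\frac{1}{1+\delta_{|I|}},\,\frac{1}{1-\delta_{|I|}}]$ gives the second double inequality.

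I do not expect a genuine obstacle: this lemma is a textbook consequence of the RIP. The one step requiring a moment of care is the passage from the quadratic-form bound furnished directly by (\ref{eq:RIP}) to bounds on the vector norms $\|\mathbf{A}\mathbf{u}\|_2$ and $\|\mathbf{A}^{-1}\mathbf{u}\|_2$ — this genuinely uses that $\mathbf{A}$ is symmetric, hence orthogonally diagonalizable, rather than merely that its Rayleigh quotient is controlled. A secondary bookkeeping point is to justify that $(\mathbf{\Phi}_I'\mathbf{\Phi}_I)^{-1}$ in the statement is well defined, which is immediate since $\delta_{|I|}<1$ forces $\lambda_{\min}(\mathbf{A}) \ge 1-\delta_{|I|} > 0$.
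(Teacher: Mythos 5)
Your argument is correct and complete: reading the RIP as a two-sided bound on the Rayleigh quotient of the Gram matrix $\mathbf{\Phi}_I'\mathbf{\Phi}_I$, then passing to eigenvalue bounds via orthogonal diagonalization, is exactly the standard derivation of this lemma, and you correctly flag the one point needing care (quadratic-form bounds do not directly give operator-norm bounds without symmetry) as well as the invertibility of the Gram matrix. The paper itself states this lemma without proof, citing the literature, so there is no in-paper argument to compare against; your proof is the canonical one found in the cited references.
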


\begin{lemma}[Lemma 2.1 in \cite{candes2008restricted} and Lemma 1 in \cite{dai2009subspace}]
\label{lem:correlationrip} Let ${I_1}, {I_2}\subset \Omega$ be two
disjoint sets ($I_1 \cap I_2 = \emptyset$). If ${\delta _{|I_1| +
|I_2|} }< 1$, then
\begin{eqnarray}
{\left\| {{\mathbf{\Phi }}_{{I_1}}' {{\mathbf{\Phi }}}{\mathbf{u}}}
\right\|_2} = {\left\| {{\mathbf{\Phi }}_{{I_1}}' {{\mathbf{\Phi
}}_{{I_2}}}{\mathbf{u}_{I_2}}} \right\|_2}  \leq  {\delta _{|I_1| +
|I_2|} }{\left\| {\mathbf{u}} \right\|_2} \nonumber
\end{eqnarray}
holds for any $\mathbf{u}$ supported on $I_2$.
\end{lemma}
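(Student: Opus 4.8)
The plan is to reduce the stated spectral bound to an elementary bilinear estimate and then establish that estimate by the parallelogram/polarization argument that is standard for RIP matrices. First, note that the claimed equality $\|\mathbf{\Phi}_{I_1}'\mathbf{\Phi}\mathbf{u}\|_2 = \|\mathbf{\Phi}_{I_1}'\mathbf{\Phi}_{I_2}\mathbf{u}_{I_2}\|_2$ is immediate: since $\mathbf{u}$ is supported on $I_2$ we have $\mathbf{\Phi}\mathbf{u} = \mathbf{\Phi}_{I_2}\mathbf{u}_{I_2}$ and $\|\mathbf{u}\|_2 = \|\mathbf{u}_{I_2}\|_2$, so only the inequality requires work. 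Writing $\mathbf{w} = \mathbf{\Phi}_{I_1}'\mathbf{\Phi}_{I_2}\mathbf{u}_{I_2} \in \mathbb{R}^{|I_1|}$, I would use $\ell_2$-duality to write $\|\mathbf{w}\|_2 = \sup_{\mathbf{v} \in \mathbb{R}^{|I_1|},\,\|\mathbf{v}\|_2 = 1} \langle \mathbf{v}, \mathbf{w}\rangle = \sup_{\|\mathbf{v}\|_2 = 1} \langle \mathbf{\Phi}_{I_1}\mathbf{v}, \mathbf{\Phi}_{I_2}\mathbf{u}_{I_2}\rangle$. This converts the target into a uniform bound on the cross-correlation $\langle \mathbf{\Phi}_{I_1}\mathbf{v}, \mathbf{\Phi}_{I_2}\mathbf{u}_{I_2}\rangle$ over unit vectors $\mathbf{v}$ supported on $I_1$.

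The core step is the bilinear inequality: for any $\mathbf{a}$ supported on $I_1$ and $\mathbf{b}$ supported on $I_2$, one has $|\langle \mathbf{\Phi}\mathbf{a}, \mathbf{\Phi}\mathbf{b}\rangle| \leq \delta_{|I_1|+|I_2|}\|\mathbf{a}\|_2\|\mathbf{b}\|_2$. By homogeneity I may assume $\|\mathbf{a}\|_2 = \|\mathbf{b}\|_2 = 1$. Because $I_1$ and $I_2$ are disjoint, the vectors $\mathbf{a}\pm\mathbf{b}$ are supported on $I_1 \cup I_2$ with $\|\mathbf{a}\pm\mathbf{b}\|_2^2 = \|\mathbf{a}\|_2^2 + \|\mathbf{b}\|_2^2 = 2$, so the RIP of order $|I_1|+|I_2|$ in (\ref{eq:RIP}) applies to both. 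Invoking the polarization identity $\langle \mathbf{\Phi}\mathbf{a}, \mathbf{\Phi}\mathbf{b}\rangle = \tfrac14\big(\|\mathbf{\Phi}(\mathbf{a}+\mathbf{b})\|_2^2 - \|\mathbf{\Phi}(\mathbf{a}-\mathbf{b})\|_2^2\big)$, then bounding the first term above by $(1+\delta_{|I_1|+|I_2|})\cdot 2$ and the second below by $(1-\delta_{|I_1|+|I_2|})\cdot 2$, I obtain $\langle \mathbf{\Phi}\mathbf{a}, \mathbf{\Phi}\mathbf{b}\rangle \leq \delta_{|I_1|+|I_2|}$; the symmetric substitution $\mathbf{b}\to-\mathbf{b}$ gives the matching lower bound, so $|\langle \mathbf{\Phi}\mathbf{a}, \mathbf{\Phi}\mathbf{b}\rangle| \leq \delta_{|I_1|+|I_2|}$.

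Finally I would combine the two pieces: applying the bilinear inequality with $\mathbf{a} = \mathbf{v}$ (unit, supported on $I_1$) and $\mathbf{b} = \mathbf{u}_{I_2}$ (supported on $I_2$) yields $\langle \mathbf{\Phi}_{I_1}\mathbf{v}, \mathbf{\Phi}_{I_2}\mathbf{u}_{I_2}\rangle \leq \delta_{|I_1|+|I_2|}\|\mathbf{u}\|_2$ for every unit $\mathbf{v}$, and taking the supremum recovers $\|\mathbf{w}\|_2 \leq \delta_{|I_1|+|I_2|}\|\mathbf{u}\|_2$. The one point I would be most careful about is the disjointness bookkeeping that makes $\|\mathbf{a}\pm\mathbf{b}\|_2^2 = 2$ exactly and licenses the use of $\delta_{|I_1|+|I_2|}$ (rather than a larger order); no step presents a genuine obstacle, since the hypothesis $\delta_{|I_1|+|I_2|} < 1$ is precisely what is needed to apply (\ref{eq:RIP}) to the sparse vectors $\mathbf{a}\pm\mathbf{b}$ on the union $I_1 \cup I_2$.
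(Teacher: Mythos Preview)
Your argument is correct; it is precisely the standard polarization/parallelogram proof of this cross-correlation bound. Note, however, that the paper does not give its own proof of this lemma at all --- it is stated as a known result and attributed to \cite{candes2008restricted} and \cite{dai2009subspace}. The proof you have written is essentially the one found in those references, so there is nothing to compare: your approach matches the intended (cited) one.
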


\subsection{Condition for Success at the Initial Iteration}
\label{subsec:ggggg1}
As mentioned, if at least one index is correct among $N$ indices
selected, we say that the gOMP makes a success in the iteration. The
following theorem provides a sufficient condition guaranteeing the
success of the gOMP in the first iteration.

\begin{theorem}\label{thm:firstsuccess}
Suppose $\mathbf{x} \in \mathbb{R}^n$ is a $K$-sparse signal ($K \geq N$), then
the gOMP algorithm makes a success in the first iteration if
%
\begin{eqnarray}  \label{eq:firstsuccess}
\delta_{K + N} < \frac{\sqrt{N}}{\sqrt{K}+\sqrt{N}}.
\end{eqnarray}
\end{theorem}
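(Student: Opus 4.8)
The plan is to compare, at the first iteration, the largest correlation among \emph{correct} columns (those indexed by $T$) with the $N$-th largest correlation among \emph{incorrect} columns (those indexed by $\Omega \setminus T$). Since $\mathbf{r}^0 = \mathbf{y} = \mathbf{\Phi}_T \mathbf{x}_T$, a success occurs as long as at least one correct index survives the selection of the $N$ largest correlations; this is guaranteed if the maximal correlation over $T$ strictly exceeds the $N$-th largest correlation over $\Omega \setminus T$. So I would introduce two quantities: a lower bound $\alpha$ on $\max_{i \in T} |\langle \mathbf{r}^0, \bm{\varphi}_i \rangle|$, and an upper bound $\beta$ on the $N$-th largest value of $|\langle \mathbf{r}^0, \bm{\varphi}_j \rangle|$ over $j \notin T$, and then show $\alpha > \beta$ under \eqref{eq:firstsuccess}.

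For the lower bound $\alpha$: the vector of correct correlations is $\mathbf{\Phi}_T' \mathbf{y} = \mathbf{\Phi}_T' \mathbf{\Phi}_T \mathbf{x}_T$, whose $\ell_2$-norm is at least $(1 - \delta_K)\|\mathbf{x}_T\|_2 \geq (1-\delta_{K+N})\|\mathbf{x}\|_2$ by Lemma~\ref{lem:rips} and the monotonicity of Lemma~\ref{lem:mono}. Since this is a vector of $K$ entries, its maximum entry in magnitude is at least $\frac{1}{\sqrt{K}}(1-\delta_{K+N})\|\mathbf{x}\|_2$, giving $\alpha = \frac{(1-\delta_{K+N})\|\mathbf{x}\|_2}{\sqrt{K}}$. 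For the upper bound $\beta$: let $F$ denote the set of $N$ incorrect indices with the largest correlations; then $F \cap T = \emptyset$, so $|F| + |T| = N + K$, and the $N$-th largest incorrect correlation satisfies $\beta \sqrt{N} \leq \|\mathbf{\Phi}_F' \mathbf{y}\|_2 = \|\mathbf{\Phi}_F' \mathbf{\Phi}_T \mathbf{x}_T\|_2 \leq \delta_{K+N}\|\mathbf{x}_T\|_2 = \delta_{K+N}\|\mathbf{x}\|_2$ by Lemma~\ref{lem:correlationrip}. Hence $\beta \leq \frac{\delta_{K+N}\|\mathbf{x}\|_2}{\sqrt{N}}$.

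The conclusion then follows by algebra: $\alpha > \beta$ is implied by $\frac{1-\delta_{K+N}}{\sqrt{K}} > \frac{\delta_{K+N}}{\sqrt{N}}$, which rearranges to $\sqrt{N}(1-\delta_{K+N}) > \sqrt{K}\,\delta_{K+N}$, i.e. $\delta_{K+N}(\sqrt{K} + \sqrt{N}) < \sqrt{N}$, i.e. exactly $\delta_{K+N} < \frac{\sqrt{N}}{\sqrt{K}+\sqrt{N}}$, which is hypothesis \eqref{eq:firstsuccess}. Thus at least one correct index is among the $N$ selected, so the gOMP succeeds at the first iteration.

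I expect the only subtle point — rather than a genuine obstacle — to be correctly formalizing ``the $N$-th largest incorrect correlation'': one must argue that if the top $N$ incorrect correlations were \emph{all} at least as large as every correct correlation, then in particular $\beta$ (the smallest of these $N$) would be $\geq \alpha$, contradicting $\alpha > \beta$; so at least one of the $N$ selected indices must be correct. A minor edge case is $K < N$ or $|\Omega \setminus T| < N$, but in the regime of interest ($n$ large, $N$ small) the bounding set $F$ of size $N$ exists and the argument goes through; one should also note $\delta_{K+N} < 1$ so that Lemmas~\ref{lem:rips} and~\ref{lem:correlationrip} apply and $\mathbf{\Phi}_T$ is full column rank (guaranteeing $\|\mathbf{x}_T\|_2 > 0 \Rightarrow \mathbf{\Phi}_T'\mathbf{y} \neq \mathbf{0}$), which is immediate from \eqref{eq:firstsuccess} since $\frac{\sqrt{N}}{\sqrt{K}+\sqrt{N}} < 1$.
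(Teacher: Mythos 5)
Your proof is correct and uses essentially the same argument as the paper: both hinge on the lower bound $(1-\delta_{K})\|\mathbf{x}\|_2/\sqrt{K}$ (via Lemma~\ref{lem:rips} and the $\ell_2$-to-max inequality over the $K$ entries of $\mathbf{\Phi}_T'\mathbf{y}$) against the upper bound $\delta_{K+N}\|\mathbf{x}\|_2/\sqrt{N}$ (via Lemma~\ref{lem:correlationrip} applied to a size-$N$ set disjoint from $T$), followed by monotonicity. The only cosmetic difference is that you phrase it as ``max correct correlation versus $N$-th largest incorrect correlation'' (the $\alpha$/$\beta$ style the paper reserves for the non-initial iterations), while the paper runs the same two bounds as a contradiction on $\|\mathbf{\Phi}_{\Lambda^1}'\mathbf{y}\|_2$ under the assumption $\Lambda^1\cap T=\emptyset$.
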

%
%
\begin{proof}
Let $\Lambda^1$ denote the set of $N$ indices chosen in the first
iteration. Then, elements of $\mathbf{\Phi }_{{{\Lambda^1}}}'
\mathbf{y}$ are $N$ significant elements in ${\mathbf{\Phi
}}'\mathbf{y}$ and thus
\begin{eqnarray} \label{eq:try}
{{\left\| \mathbf{\Phi }_{{{\Lambda^1}}}' \mathbf{y} \right\|}_{2}}
&=& \mathop {\max} \limits_{\left| I \right| = N }
\sqrt{\sum\limits_{i\in I}{{{\left| \left\langle {{\varphi
}_{i}},\mathbf{y} \right\rangle  \right|}^{2}}}}
\end{eqnarray}
where $\varphi_i$ denotes the $i$-th column in $\mathbf{\Phi}$.
Further, we have
\begin{eqnarray} \label{eq:try}
\frac{1}{\sqrt{N}}{{\left\| \mathbf{\Phi }_{{{\Lambda^1}}}'
\mathbf{y} \right\|}_{2}} &=& \frac{1}{\sqrt{N}} \mathop {\max}
\limits_{\left| I \right| = N } \sqrt{\sum\limits_{i\in I}{{{\left|
\left\langle {{\varphi }_{i}},\mathbf{y} \right\rangle
\right|}^{2}}}} \\
& = &  \mathop {\max} \limits_{\left| I \right| = N }
\sqrt{\frac{1}{ |I| }\sum\limits_{i\in I}{{{\left| \left\langle
{{\varphi }_{i}},\mathbf{y} \right\rangle
\right|}^{2}}}} \\
\label{eq:zz232} & \geq &  \sqrt{ \frac{1}{ {|T|}} \sum\limits_{i\in
T}{{{\left| \left\langle {{\varphi }_{i}},\mathbf{y} \right\rangle
\right|}^{2}}}}\\
& = & \frac{ 1 }{\sqrt{K}} {{\left\| \mathbf{\Phi }_{T}'\mathbf{y}
\right\|}_{2}}
\end{eqnarray}
where (\ref{eq:zz232}) is from the fact that the average of $N$-best correlation power is larger than or equal to the average of $K$ (true) correlation power.
Using this together with $ \mathbf{y} = {{\mathbf{\Phi}}_{T}}\mathbf{x}_{T} $, we have
\begin{equation}
{{\left\| \mathbf{\Phi }_{{{\Lambda^1}}}' \mathbf{y} \right\|}_{2}}
\geq  \sqrt{\frac{ N }{K}}{{\left\| \mathbf{\Phi
}_{T}'{{\mathbf{\Phi }}_{T}}\mathbf{x}_{T} \right\|}_{2}}
\geq  \sqrt{\frac{ N }{K}}\left( 1-{{\delta }_{K}}
\right){{\left\| \mathbf{x} \right\|}_{2}} \label{eq:try1}
\end{equation}
where the second inequality is from Lemma \ref{lem:rips}.

On the other hand, when no correct index is chosen in the first
iteration (i.e., $\Lambda^1 \cap T = \emptyset$),
\begin{eqnarray}
{{\left\| \mathbf{\Phi }_{{{\Lambda^1 } }}'\mathbf{y}
\right\|}_{2}}={{\left\| \mathbf{\Phi }_{{{\Lambda^1 }
}}'{{\mathbf{\Phi }}_{T}}{{\mathbf{x}}_{T}} \right\|}_{2}}  \leq
{{\delta }_{K +N}}{{\left\| \mathbf{x} \right\|}_{2}},
\end{eqnarray}
where the inequality follows from Lemma \ref{lem:correlationrip}.
This inequality contradicts (\ref{eq:try1}) if
\begin{eqnarray} \label{eq:gggal}
{{\delta }_{K +N}}{{\left\| \mathbf{x} \right\|}_{2}} < \sqrt{\frac{
N }{K}}\left( 1-{{\delta }_{K}} \right){{\left\| \mathbf{x}
\right\|}_{2}}.
\end{eqnarray}
Note that, under (\ref{eq:gggal}), at least one correct index is
chosen in the first iteration.
Since $\delta_K  \leq \delta_{K + N}$ by Lemma \ref{lem:mono},
\eqref{eq:gggal} holds true when
\begin{eqnarray}
{{\delta }_{K + N}}{{\left\| \mathbf{x} \right\|}_{2}} <
\sqrt{\frac{ N }{K}}\left( 1-{{\delta }_{K + N}} \right){{\left\|
\mathbf{x} \right\|}_{2}}.
\end{eqnarray}
Equivalently,
\begin{eqnarray}
\delta_{K + N} < \frac{\sqrt{N}}{\sqrt{K}+\sqrt{N}}.
\end{eqnarray}
In summary, if $\delta_{K + N} <\frac{\sqrt{N}}{\sqrt{K}+\sqrt{N}}$, then $\Lambda^1$ contains at
least one element of $T$ in the first iteration of the gOMP.
\end{proof}

\subsection{Condition for Success in Non-initial Iterations}
\label{subsec:ggggg2}
In this subsection, we investigate the condition guaranteeing the
success of the gOMP in non-initial iterations.

\vspace{10pt}
\begin{theorem}\label{thm:atleast1}
Suppose $N \leq \min \{K , \frac{m}{K}\}$ and the gOMP has performed $k$ iterations ($1 \leq k < K$) successfully. Then under the condition
\begin{eqnarray}
{\delta _{NK}} < \frac{{\sqrt N }} {{\sqrt K  + 3 \sqrt N }},
\label{eq:good}
\end{eqnarray}
the gOMP will make a success at the $(k + 1)$-th condition.
\end{theorem}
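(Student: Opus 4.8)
The plan is to argue by contradiction: assume that the $(k+1)$-th iteration selects \emph{no} index from the set of not-yet-found correct indices $W := T\setminus\Lambda^k$, and then show this forces $\delta_{NK}$ to violate \eqref{eq:good}. Throughout write $\delta:=\delta_{NK}$. Since the first $k$ iterations were successful we have $|\Lambda^k\cap T|\ge k$, hence $|W|\le K-k$, while $|\Lambda^k|=kN$; also $\mathbf{r}^k=\mathbf{P}_{\Lambda^k}^\bot\mathbf{y}=\mathbf{P}_{\Lambda^k}^\bot\mathbf{\Phi}_W\mathbf{x}_W$ because the $\Lambda^k$-part of $\mathbf{\Phi x}$ is annihilated by $\mathbf{P}_{\Lambda^k}^\bot$. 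Since we are in the $(k+1)$-th iteration the loop has not terminated, so $\mathbf{r}^k\neq\mathbf{0}$, which gives $W\neq\emptyset$ and $\|\mathbf{x}_W\|_2>0$ (entries of $\mathbf{x}$ on $T$ are nonzero). A preliminary bookkeeping step records that every RIP order entering the argument — $|W|$, $|\Lambda^k|$, $|W|+|\Lambda^k|$, $N+|W|$, $N+|\Lambda^k|=(k+1)N$ — is at most $NK$ (using $1\le k\le K-1$ and $N\ge1$), so by monotonicity (Lemma~\ref{lem:mono}) each associated isometry constant is $\le\delta$; this also forces $\mathbf{\Phi}_{\Lambda^k}$ to be full column rank, so the pseudoinverse is well defined.

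First I would lower-bound the largest correlation over the correct indices, $\alpha:=\max_{j\in W}|\langle\mathbf{r}^k,\varphi_j\rangle|\ge|W|^{-1/2}\|\mathbf{\Phi}_W'\mathbf{r}^k\|_2$. Expanding $\mathbf{\Phi}_W'\mathbf{r}^k=\mathbf{\Phi}_W'\mathbf{\Phi}_W\mathbf{x}_W-\mathbf{\Phi}_W'\mathbf{P}_{\Lambda^k}\mathbf{\Phi}_W\mathbf{x}_W$, the first term is $\ge(1-\delta)\|\mathbf{x}_W\|_2$ by Lemma~\ref{lem:rips}; for the second I write $\mathbf{P}_{\Lambda^k}\mathbf{\Phi}_W\mathbf{x}_W=\mathbf{\Phi}_{\Lambda^k}(\mathbf{\Phi}_{\Lambda^k}'\mathbf{\Phi}_{\Lambda^k})^{-1}\mathbf{\Phi}_{\Lambda^k}'\mathbf{\Phi}_W\mathbf{x}_W$ and apply Lemma~\ref{lem:correlationrip} to the disjoint pair $(W,\Lambda^k)$ twice together with the inverse bound of Lemma~\ref{lem:rips}, obtaining a bound $\frac{\delta^2}{1-\delta}\|\mathbf{x}_W\|_2$. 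Combining and using $1-\delta-\frac{\delta^2}{1-\delta}=\frac{1-2\delta}{1-\delta}$ and $|W|\le K-k$ gives $\alpha\ge\frac{1}{\sqrt{K-k}}\cdot\frac{1-2\delta}{1-\delta}\|\mathbf{x}_W\|_2$.

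Next, under the contradiction hypothesis the $N$ chosen indices form a set $F\subseteq\Omega\setminus(T\cup\Lambda^k)$, and each $j\in F$ satisfies $|\langle\mathbf{r}^k,\varphi_j\rangle|\ge\alpha$ (otherwise the $\alpha$-attaining index of $W$ would have been selected), so $\sqrt{N}\,\alpha\le\|\mathbf{\Phi}_F'\mathbf{r}^k\|_2$. Bounding $\|\mathbf{\Phi}_F'\mathbf{r}^k\|_2$ by the same expansion as above — now exploiting that $F$ is disjoint from both $W$ and $\Lambda^k$ — yields $\|\mathbf{\Phi}_F'\mathbf{r}^k\|_2\le\delta\|\mathbf{x}_W\|_2+\frac{\delta^2}{1-\delta}\|\mathbf{x}_W\|_2=\frac{\delta}{1-\delta}\|\mathbf{x}_W\|_2$. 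Chaining this with the lower bound on $\alpha$ and cancelling $\|\mathbf{x}_W\|_2$ and the common factor $\frac{1}{1-\delta}$ leaves $\frac{\sqrt{N}(1-2\delta)}{\sqrt{K-k}}\le\delta$, i.e.\ $\delta\ge\frac{\sqrt{N}}{\sqrt{K-k}+2\sqrt{N}}$. Since $k\ge1$ we have $\sqrt{K-k}\le\sqrt{K-1}<\sqrt{K}$, so this implies $\delta\ge\frac{\sqrt{N}}{\sqrt{K}+2\sqrt{N}}$, contradicting \eqref{eq:good}. Hence at least one index of $W$ is chosen and the $(k+1)$-th iteration is a success.

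The main obstacle is the lower bound in the second paragraph: $\mathbf{r}^k$ is only a \emph{projected} copy of $\mathbf{\Phi}_W\mathbf{x}_W$, and the cross term $\mathbf{\Phi}_W'\mathbf{P}_{\Lambda^k}\mathbf{\Phi}_W\mathbf{x}_W$ must be controlled uniformly over the unknown geometry of $\Lambda^k$; this is precisely where the $+2\sqrt{N}$ in the denominator originates (one $\delta$ from the direct inner product, a further $\frac{\delta^2}{1-\delta}$ from the projection term, played off against the $\sqrt{N}$ versus $\sqrt{K-k}$ counting on the two sides). The remaining care is purely bookkeeping: verifying that all invoked isometry constants really have order $\le NK$ and that $\mathbf{\Phi}_{\Lambda^k}$ is full column rank, both of which follow from $k\le K-1$, $N\ge1$, and $\delta_{NK}<1$.
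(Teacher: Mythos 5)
Your argument is correct, and its overall architecture is the paper's: decompose $\mathbf{r}^k=\mathbf{P}_{\Lambda^k}^\bot\mathbf{\Phi}_{T-\Lambda^k}\mathbf{x}_{T-\Lambda^k}$, upper-bound the correlations at the wrongly chosen indices, lower-bound the best correlation over $T-\Lambda^k$, and compare; your upper-bound half (Lemma~\ref{lem:correlationrip} on the disjoint pairs plus the inverse bound of Lemma~\ref{lem:rips}) is exactly the paper's Lemma~\ref{lem:upperbound}, and your bookkeeping of RIP orders matches \eqref{eq:monoto}. Where you genuinely diverge is the lower bound on $\beta_1=\max_{j\in T-\Lambda^k}|\langle\mathbf{r}^k,\varphi_j\rangle|$: the paper's Lemma~\ref{lem:lowerbound} routes through $\|\mathbf{r}^k\|_2^2\le\|\mathbf{x}_{T-\Lambda^k}\|_1\,\beta_1$ together with a separate lower bound on $\|\mathbf{r}^k\|_2^2$, producing the coefficient $1-\delta-\frac{(1+\delta)\delta^2}{(1-\delta)^2}$, whereas you bound $\beta_1\ge|T-\Lambda^k|^{-1/2}\|\mathbf{\Phi}_{T-\Lambda^k}'\mathbf{r}^k\|_2$ and expand $\mathbf{\Phi}_{T-\Lambda^k}'\mathbf{r}^k$ by the reverse triangle inequality, producing the larger coefficient $1-\delta-\frac{\delta^2}{1-\delta}=\frac{1-2\delta}{1-\delta}$. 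This symmetric treatment of the two sides buys something concrete: with your constants the comparison closes \emph{exactly} to $\delta<\frac{\sqrt{N}}{\sqrt{K-k}+2\sqrt{N}}$, whereas the paper's passage from \eqref{eq:sufficientommp4} to \eqref{eq:sufficientommp3} silently drops a $2\delta^3\sqrt{N}$ term on the unfavourable side (e.g.\ with $\sqrt{K-l}=\sqrt{N}$ and $\delta=\frac13$, \eqref{eq:sufficientommp4} reads $\frac13>\frac12$), so your derivation in fact tightens a small slack in the paper's own manipulation. The only cosmetic caveat is that you work with $K-k$ rather than the sharper $K-l$ ($l\ge k$ correct indices found so far), but since both are relaxed to $K$ at the end this costs nothing; your handling of the degenerate cases ($\mathbf{r}^k=\mathbf{0}$, full column rank of $\mathbf{\Phi}_{\Lambda^k}$) is also consistent with the paper's.
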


As mentioned, newly selected $N$ indices are not overlapping with
previously selected ones and hence $| \Lambda ^k | = kN$. Also,
under the hypothesis that the gOMP has performed $k$ iterations
successfully, ${\Lambda ^k}$ contains at least $k$ correct indices.
In other words, the number of correct indices $l$ in $\Lambda^k$ becomes
$$l = | T \cap \Lambda^k| \geq k.$$
Note that we only consider the case where $\Lambda^k$ does not
include all correct indices ($ l < K $) since otherwise the
reconstruction task is already finished.\footnote{When all the
correct indices are chosen ($T \subseteq \Lambda^k$) then the
residual $\mathbf{r}^k = \mathbf{0}$ and hence the gOMP algorithm is
finished already.}
Hence, we can safely assume that the set containing the rest of the correct indices is nonempty ($T - {\Lambda ^k} \neq
\emptyset$).

\begin{figure}[t]
\begin{center}
\includegraphics[width=70mm]{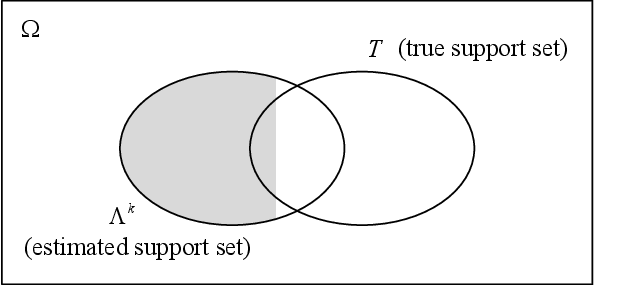}
 \caption{Set diagram of $\Omega$, $T$, and $\Lambda^k$.}
\label{fig:set}
\end{center}
\end{figure}

Key ingredients in our proof are 1) the upper bound $\alpha_N$ of
the $N$-th largest correlation in magnitude between $\mathbf{r}^{k}$ and
columns indexed by $F = \Omega \backslash (\Lambda^k \cup T)$ (i.e.,
the set of remaining incorrect indices) and 2) the lower bound
$\beta_1$ of the largest correlation in magnitude between
${{\mathbf{r}}^k}$ and columns whose indices belong to $T - {\Lambda
^k}$ (i.e., the set of remaining correct indices).
If $\beta_1$ is larger than $\alpha_N$, then $\beta_1$ is contained
in the top $N$ among all values of $| \langle
{{\varphi_j},{{\mathbf{r}}^k}} \rangle |$ and hence at least one
correct index is chosen in the $(k+1)$-th iteration.

The following two lemmas provide the upper bound of $\alpha_N$ and
the lower bound of $\beta_1$, respectively.

%
%
\begin{lemma} \label{lem:upperbound}
Let $  \alpha_i  = | \langle {{\varphi_{\phi(i)}},{{\mathbf{r}}^k}}
\rangle| $ where $\phi(i) = \arg \mathop {\max }\limits_{j:j \in  F
\backslash \left\{ {\phi(1), \cdots, \phi(i - 1) } \right\}} \left|
{\left\langle {{\varphi _j},{{\mathbf{r}}^k}} \right\rangle }
\right| $ so that $\alpha_i$ are ordered in magnitude ($\alpha_1
\geq \alpha_2 \geq \cdots$). Then, in the $(k + 1)$-th iteration in
the gOMP algorithm, $\alpha_N$, the $N$-th
largest correlation in magnitude between ${{\mathbf{r}}^k}$ and $\{
\varphi_i \}_{i \in F}$, satisfies
\begin{equation}
\alpha_N \leq \left( {{\delta _{N + K - l}} + \frac{{{\delta _{N +
Nk}}{\delta _{Nk + K - l}}}} {{1 - {\delta _{Nk}}}}}
\right)\frac{{{{\left\| {{{\mathbf{x}}_{T - {\Lambda ^k}}}}
\right\|}_2}}} {{\sqrt N }}.
\end{equation}
\end{lemma}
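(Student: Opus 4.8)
The plan is to bound $\alpha_N$ — the $N$-th largest value of $|\langle \varphi_j, \mathbf{r}^k\rangle|$ over $j \in F$ — by passing from the $N$ largest individual correlations to the $\ell_2$-norm of a length-$N$ subvector of $\mathbf{\Phi}_F'\mathbf{r}^k$, and then controlling that norm via the RIP. Concretely, let $\Gamma = \{\phi(1),\dots,\phi(N)\} \subseteq F$ be the indices of the $N$ largest correlations. Since $\alpha_N$ is the smallest of $N$ equal-or-larger quantities, $\sqrt{N}\,\alpha_N \le \left\|\mathbf{\Phi}_\Gamma'\mathbf{r}^k\right\|_2$, so it suffices to upper-bound $\left\|\mathbf{\Phi}_\Gamma'\mathbf{r}^k\right\|_2$ and divide by $\sqrt N$. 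This mirrors the averaging trick already used in the proof of Theorem~\ref{thm:firstsuccess}.

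The core step is to expand $\mathbf{r}^k$. Writing $\mathbf{r}^k = \mathbf{P}_{\Lambda^k}^\bot \mathbf{y} = \mathbf{P}_{\Lambda^k}^\bot \mathbf{\Phi}_{T}\mathbf{x}_T = \mathbf{P}_{\Lambda^k}^\bot \mathbf{\Phi}_{T-\Lambda^k}\mathbf{x}_{T-\Lambda^k}$ (the columns indexed by $T\cap\Lambda^k$ are killed by $\mathbf{P}_{\Lambda^k}^\bot$), and using $\mathbf{P}_{\Lambda^k}^\bot = \mathbf{I} - \mathbf{\Phi}_{\Lambda^k}(\mathbf{\Phi}_{\Lambda^k}'\mathbf{\Phi}_{\Lambda^k})^{-1}\mathbf{\Phi}_{\Lambda^k}'$, I would split
$$
\mathbf{\Phi}_\Gamma'\mathbf{r}^k = \mathbf{\Phi}_\Gamma'\mathbf{\Phi}_{T-\Lambda^k}\mathbf{x}_{T-\Lambda^k} - \mathbf{\Phi}_\Gamma'\mathbf{\Phi}_{\Lambda^k}(\mathbf{\Phi}_{\Lambda^k}'\mathbf{\Phi}_{\Lambda^k})^{-1}\mathbf{\Phi}_{\Lambda^k}'\mathbf{\Phi}_{T-\Lambda^k}\mathbf{x}_{T-\Lambda^k}.
$$
For the first term: $\Gamma \subseteq F$ and $T-\Lambda^k$ are disjoint, with $|\Gamma| = N$ and $|T-\Lambda^k| = K-l$, so Lemma~\ref{lem:correlationrip} gives a bound of $\delta_{N+K-l}\,\|\mathbf{x}_{T-\Lambda^k}\|_2$. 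For the second term: apply the triangle inequality through the three factors — Lemma~\ref{lem:correlationrip} on $\mathbf{\Phi}_\Gamma'\mathbf{\Phi}_{\Lambda^k}$ (disjoint, sizes $N$ and $Nk$, giving $\delta_{N+Nk}$), Lemma~\ref{lem:rips} on the inverse Gram matrix $(\mathbf{\Phi}_{\Lambda^k}'\mathbf{\Phi}_{\Lambda^k})^{-1}$ (giving the factor $\frac{1}{1-\delta_{Nk}}$), and Lemma~\ref{lem:correlationrip} again on $\mathbf{\Phi}_{\Lambda^k}'\mathbf{\Phi}_{T-\Lambda^k}$ (disjoint, sizes $Nk$ and $K-l$, giving $\delta_{Nk+K-l}$). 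Chaining these yields $\frac{\delta_{N+Nk}\,\delta_{Nk+K-l}}{1-\delta_{Nk}}\,\|\mathbf{x}_{T-\Lambda^k}\|_2$ for the second term. Adding the two bounds and dividing by $\sqrt N$ gives exactly the claimed inequality.

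The main obstacle — really a bookkeeping subtlety rather than a deep one — is keeping the RIP orders straight: Lemma~\ref{lem:correlationrip} requires the order to be the sum of the two disjoint set cardinalities, and one must verify each such combined order ($N+K-l$, $N+Nk$, $Nk$, $Nk+K-l$) is dominated by $NK$ so that monotonicity (Lemma~\ref{lem:mono}) can later collapse everything to $\delta_{NK}$ in the theorem's proof; here, however, the lemma is stated in the finer-grained form, so I would just present the raw bound. One should also note in passing that full column rank of $\mathbf{\Phi}_{\Lambda^k}$ — needed for the inverse to exist and for $\mathbf{P}_{\Lambda^k}^\bot\mathbf{\Phi}_{\Lambda^k}=\mathbf{0}$ — holds because $|\Lambda^k| = Nk \le NK$ and $\delta_{NK}<1$.
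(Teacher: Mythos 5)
Your proposal is correct and follows essentially the same route as the paper's proof in Appendix A: the same decomposition of $\mathbf{\Phi}_W'\mathbf{r}^k$ via $\mathbf{P}_{\Lambda^k}^\bot = \mathbf{I}-\mathbf{P}_{\Lambda^k}$, the same applications of Lemmas \ref{lem:rips} and \ref{lem:correlationrip} with the same RIP orders, and the same passage from $\|\mathbf{\Phi}_W'\mathbf{r}^k\|_2$ to $\sqrt{N}\,\alpha_N$. The only cosmetic difference is that you bound $\|\mathbf{\Phi}_W'\mathbf{r}^k\|_2^2 \ge N\alpha_N^2$ directly, whereas the paper detours through the $\ell_1$--$\ell_2$ norm inequality; the two are equivalent.
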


\begin{proof}
See Appendix \ref{app:upperbound}.
\end{proof}
\vspace{7pt}
\begin{lemma} \label{lem:lowerbound}
Let $\beta_i  = | \langle \varphi_{\phi(i)} , {\mathbf{r}}^k
\rangle| $ where $\phi(i) = \arg \mathop {\max }\limits_{j:j \in (T
- {\Lambda ^k}) \backslash \left\{{ \phi(1), \cdots, \phi (i - 1)}
\right\}} \left| {\left\langle {{\varphi _j},{{\mathbf{r}}^k}}
\right\rangle } \right| $ so that $\beta_i$ are ordered in magnitude
($\beta_1 \geq \beta_2 \geq \cdots$).  Then in the $(k + 1)$-th
iteration in the gOMP algorithm, $\beta_1$, the largest correlation
in magnitude between ${{\mathbf{r}}^k}$ and $\{\varphi_i \}_{i \in T
- {\Lambda ^k}}$, satisfies
\begin{eqnarray}
\beta_1 &\geq& \left( {1 - {\delta _{K - l}} - \frac{{\sqrt {1 + {\delta _{K - l}}} \sqrt {1 + {\delta _{Nk}}} {\delta _{Nk + K - l}}}}
{{1 - {\delta _{Nk}}}}} \right) \nonumber   \\
        & & \times \frac{{{{\left\| {{{\mathbf{x}}_{T - {\Lambda ^k}}}} \right\|}_2}}}
{{\sqrt {K - l} }}.
\end{eqnarray}
\end{lemma}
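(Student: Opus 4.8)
The plan is to bound $\beta_1$ from below by the energy of $\mathbf{\Phi}_{T-\Lambda^k}'\mathbf{r}^k$ and then estimate that energy through the RIP. Since $\beta_1=\max_{j\in T-\Lambda^k}|\langle\varphi_j,\mathbf{r}^k\rangle|$ is the largest of the $K-l$ numbers $|\langle\varphi_j,\mathbf{r}^k\rangle|$ (recall $K-l=|T-\Lambda^k|\ge 1$, since the case $l=K$ is already excluded), it dominates their root-mean-square:
\[
\beta_1\ \ge\ \left(\frac{1}{K-l}\sum_{j\in T-\Lambda^k}|\langle\varphi_j,\mathbf{r}^k\rangle|^2\right)^{1/2}\ =\ \frac{\|\mathbf{\Phi}_{T-\Lambda^k}'\mathbf{r}^k\|_2}{\sqrt{K-l}}.
\]
Hence it suffices to show $\|\mathbf{\Phi}_{T-\Lambda^k}'\mathbf{r}^k\|_2\ \ge\ \left(1-\delta_{K-l}-\tfrac{1+\delta_{Nk}}{(1-\delta_{Nk})^2}\,\delta_{Nk+K-l}^2\right)\|\mathbf{x}_{T-\Lambda^k}\|_2$.

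Write $W:=T-\Lambda^k$, so $|W|=K-l$ and $|\Lambda^k|=Nk$. Using $\mathbf{y}=\mathbf{\Phi}_T\mathbf{x}_T=\mathbf{\Phi}_{T\cap\Lambda^k}\mathbf{x}_{T\cap\Lambda^k}+\mathbf{\Phi}_W\mathbf{x}_W$ together with $\mathbf{P}_{\Lambda^k}^\bot\mathbf{\Phi}_{T\cap\Lambda^k}=\mathbf{0}$ (the columns indexed by $T\cap\Lambda^k$ lie in $span(\mathbf{\Phi}_{\Lambda^k})$), one obtains $\mathbf{r}^k=\mathbf{P}_{\Lambda^k}^\bot\mathbf{\Phi}_W\mathbf{x}_W$. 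Expanding $\mathbf{P}_{\Lambda^k}^\bot=\mathbf{I}-\mathbf{\Phi}_{\Lambda^k}(\mathbf{\Phi}_{\Lambda^k}'\mathbf{\Phi}_{\Lambda^k})^{-1}\mathbf{\Phi}_{\Lambda^k}'$ and left-multiplying by $\mathbf{\Phi}_W'$ gives
\[
\mathbf{\Phi}_W'\mathbf{r}^k=\mathbf{\Phi}_W'\mathbf{\Phi}_W\mathbf{x}_W-\mathbf{\Phi}_W'\mathbf{\Phi}_{\Lambda^k}(\mathbf{\Phi}_{\Lambda^k}'\mathbf{\Phi}_{\Lambda^k})^{-1}\mathbf{\Phi}_{\Lambda^k}'\mathbf{\Phi}_W\mathbf{x}_W ,
\]
so by the triangle inequality $\|\mathbf{\Phi}_W'\mathbf{r}^k\|_2\ge\|\mathbf{\Phi}_W'\mathbf{\Phi}_W\mathbf{x}_W\|_2-\|\mathbf{\Phi}_W'\mathbf{\Phi}_{\Lambda^k}(\mathbf{\Phi}_{\Lambda^k}'\mathbf{\Phi}_{\Lambda^k})^{-1}\mathbf{\Phi}_{\Lambda^k}'\mathbf{\Phi}_W\mathbf{x}_W\|_2$.

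For the first term, Lemma \ref{lem:rips} with $I=W$ yields $\|\mathbf{\Phi}_W'\mathbf{\Phi}_W\mathbf{x}_W\|_2\ge(1-\delta_{K-l})\|\mathbf{x}_W\|_2$. For the subtracted term I would peel off the factors from the inside out: Lemma \ref{lem:correlationrip} applied to the disjoint sets $\Lambda^k$ and $W$ gives $\|\mathbf{\Phi}_{\Lambda^k}'\mathbf{\Phi}_W\mathbf{x}_W\|_2\le\delta_{Nk+K-l}\|\mathbf{x}_W\|_2$; applying $(\mathbf{\Phi}_{\Lambda^k}'\mathbf{\Phi}_{\Lambda^k})^{-1}$ multiplies the norm by at most $\tfrac{1}{1-\delta_{Nk}}$ (second inequality of Lemma \ref{lem:rips}); and since the resulting vector is supported on $\Lambda^k$, a second use of Lemma \ref{lem:correlationrip} multiplies by at most $\delta_{Nk+K-l}$. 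This bounds the subtracted term by $\tfrac{\delta_{Nk+K-l}^2}{1-\delta_{Nk}}\|\mathbf{x}_W\|_2\le\tfrac{1+\delta_{Nk}}{(1-\delta_{Nk})^2}\delta_{Nk+K-l}^2\|\mathbf{x}_W\|_2$; combining with the first term and dividing by $\sqrt{K-l}$ finishes the proof. (Alternatively, keeping $\mathbf{P}_{\Lambda^k}$ grouped and using $\|\mathbf{\Phi}_{\Lambda^k}\mathbf{u}\|_2\le\sqrt{1+\delta_{Nk}}\,\|\mathbf{u}\|_2$ for the middle factor reproduces the constant $\tfrac{1+\delta_{Nk}}{(1-\delta_{Nk})^2}$ directly.)

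The main obstacle is the bookkeeping of isometry orders: one must confirm that each isometry constant appearing — $\delta_{K-l}$, $\delta_{Nk}$, and $\delta_{Nk+K-l}$ — is controlled by the available RIP order, i.e. that $Nk+K-l\le NK$, which follows from $k<K$ and $l\ge k$ since $Nk+K-l\le Nk+K-k=(N-1)k+K\le NK$; Lemma \ref{lem:mono} then applies and in particular $\delta_{Nk}\le\delta_{NK}<1$ makes $(\mathbf{\Phi}_{\Lambda^k}'\mathbf{\Phi}_{\Lambda^k})^{-1}$ well defined. Beyond this, one only needs to check that every vector fed into Lemmas \ref{lem:rips}--\ref{lem:correlationrip} is supported on the claimed index set and that the two sets in each use of Lemma \ref{lem:correlationrip} are genuinely disjoint; the remaining manipulations are routine norm estimates.
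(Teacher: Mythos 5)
Your proof is correct, and it takes a genuinely different route from the paper's. The paper pivots on the residual energy $\|\mathbf{r}^k\|_2^2$: it writes $\|\mathbf{r}^k\|_2^2 = |\langle \mathbf{\Phi}_{T-\Lambda^k}\mathbf{x}_{T-\Lambda^k},\mathbf{r}^k\rangle| \le \|\mathbf{x}_{T-\Lambda^k}\|_1\,\beta_1 \le \sqrt{K-l}\,\|\mathbf{x}_{T-\Lambda^k}\|_2\,\beta_1$ for the upper bound, and uses the exact Pythagorean identity $\|\mathbf{P}_{\Lambda^k}^\bot \mathbf{v}\|_2^2 = \|\mathbf{v}\|_2^2 - \|\mathbf{P}_{\Lambda^k}\mathbf{v}\|_2^2$ with $\mathbf{v}=\mathbf{\Phi}_{T-\Lambda^k}\mathbf{x}_{T-\Lambda^k}$ for the lower bound; the factor $\frac{1+\delta_{Nk}}{(1-\delta_{Nk})^2}$ then arises from squaring the chain $\sqrt{1+\delta_{Nk}}\cdot\frac{1}{1-\delta_{Nk}}\cdot\delta_{Nk+K-l}$. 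You instead pivot on $\|\mathbf{\Phi}_{T-\Lambda^k}'\mathbf{r}^k\|_2$, relate it to $\beta_1$ by the max-versus-root-mean-square inequality, and lower-bound it by a triangle inequality on the decomposition $\mathbf{\Phi}_W'\mathbf{\Phi}_W\mathbf{x}_W - \mathbf{\Phi}_W'\mathbf{P}_{\Lambda^k}\mathbf{\Phi}_W\mathbf{x}_W$ with $W:=T-\Lambda^k$ --- structurally this mirrors the paper's proof of the companion upper bound in Lemma \ref{lem:upperbound} rather than its proof of this lemma. A pleasant byproduct is that your cross-term bound is $\frac{\delta_{Nk+K-l}^2}{1-\delta_{Nk}}\|\mathbf{x}_W\|_2$, which is strictly smaller than the paper's $\frac{(1+\delta_{Nk})\delta_{Nk+K-l}^2}{(1-\delta_{Nk})^2}\|\mathbf{x}_W\|_2$; before you deliberately weaken it to match the stated constant, you have actually proved a slightly sharper version of the lemma, which would propagate to a marginally milder requirement on $\delta_{NK}$ in Theorem \ref{thm:atleast1}. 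The order bookkeeping and disjointness checks you flag are indeed satisfied ($Nk+K-l\le NK$ as you argue), and each application of Lemmas \ref{lem:rips} and \ref{lem:correlationrip} is to a vector supported on the claimed set, so the argument is complete.
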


\begin{proof}
See Appendix \ref{app:lowerbound}.
\end{proof}

\vspace{10pt}

We now have all ingredients to prove Theorem \ref{thm:atleast1}.

{\bf Proof of Theorem \ref{thm:atleast1}}

\begin{proof}
A sufficient condition under which at least one correct index is
selected at the $(k + 1)$-th step can be described as
\begin{eqnarray}
\alpha_N < \beta_1. \label{eq:goodgood}
\end{eqnarray}

Noting that $1 \leq k \leq l< K$ and $1 < N \leq K$ and also using
the monotonicity of the restricted isometry constant in Lemma
\ref{lem:mono}, we have
\begin{eqnarray}
K - l < NK \nonumber &\rightarrow& \delta _{K - l} < \delta _{NK}, \\
Nk + K - l < NK &\rightarrow&  \delta _{Nk + K - l} < \delta
_{NK}, \nonumber \\
Nk < NK &\rightarrow& \delta _{Nk} < \delta _{NK}, \nonumber  \\
N + Nk \leq NK &\rightarrow&  \delta _{N + Nk} \leq \delta _{NK}.
\label{eq:monoto}
\end{eqnarray}
%
%
%
From Lemma \ref{lem:upperbound} and (\ref{eq:monoto}), we have
\begin{eqnarray} \label{eq:36}
\alpha_N \!\!\!\! &\leq& \!\!\!\! \lefteqn{ \left( {{\delta _{N + K - l}} + \frac{{{\delta _{N +
Nk}}{\delta _{Nk + K - l}}}} {{1 - {\delta _{Nk}}}}} \right) } \nonumber \\
\!\!\! & & \times \frac{{\left\| {{\mathbf{x}}_{T-{\Lambda^{k}}}}
\right\|}_{2}} {{\sqrt N }}  \\
&\leq& \!\!\!\! \left( {{\delta _{NK}} +
\frac{{{\delta _{NK}^2} }} {{1 - {\delta _{NK}}}}}
\right)\frac{{\left\| {{\mathbf{x}}_{T-{\Lambda^{k}}}}
\right\|}_{2}} {{\sqrt N }} \\
&=& \label{eq:36} \!\!\!\! {\frac{{{\delta _{NK} } }} {{1 - {\delta
_{NK}}}}} \frac{{\left\| {{\mathbf{x}}_{T-{\Lambda^{k}}}}
\right\|}_{2}} {{\sqrt N }}.
\end{eqnarray}
%
Also, from Lemma \ref{lem:lowerbound} and (\ref{eq:monoto}), we have
%
\begin{eqnarray}
\beta_1 &\geq& \lefteqn{ \left( {1 - {\delta _{K - l}} - \frac{{\sqrt {1 \!\!+\!\!
{\delta _{K - l}}} \sqrt {1 \!\!+\!\! {\delta _{Nk}}} {\delta _{Nk + K -
l}}}} {{1 - {\delta _{Nk}}}}} \!\! \right) } \nonumber \\
& & \times \frac{{{{\left\|{{{\mathbf{x}}_{T - {\Lambda ^k}}}}\! \right\|}_2}}} {{\sqrt {K - l}}}  \\
&\geq& \left( {1 - {\delta _{NK}} - \frac{{\sqrt {1 + {\delta
_{NK}}} \sqrt {1 + {\delta _{NK}}} {\delta _{NK}}}} {{1 - {\delta
_{NK}}}}} \right) \nonumber \\
& & \times \frac{{{{\left\| {{{\mathbf{x}}_{T - {\Lambda
^k}}}} \right\|}_2}}}
{{\sqrt {K - l} }}  \\
&=& \label{eq:37} \frac{{1 - 3{\delta _{NK}}}} {{1 - {\delta
_{NK}}}}\frac{{{{\left\| {{{\mathbf{x}}_{T - {\Lambda ^k}}}}
\right\|}_2}}} {{\sqrt {K - l} }}.
\end{eqnarray}
%
%
Using \eqref{eq:36} and \eqref{eq:37}, we obtain the sufficient
condition of (\ref{eq:goodgood}) as
\begin{eqnarray} \label{eq:sufficientommp4}
\frac{{1 - 3{\delta _{NK}}}} {{1 - {\delta _{NK}}}}\frac{{{{\left\|
{{{\mathbf{x}}_{T - {\Lambda ^k}}}} \right\|}_2}}} {{\sqrt {K - l}
}} >  {\frac{{{\delta _{NK} } }} {{1 - {\delta _{NK}}}}}
\frac{{\left\| {{\mathbf{x}}_{T-{\Lambda^{k}}}} \right\|}_{2}}
{{\sqrt N }}.
\end{eqnarray}
%
After some manipulations, we have
\begin{eqnarray}
{\delta _{NK}} < \frac{{\sqrt N }} {{\sqrt{K - l}  +  3 \sqrt N }}.
\label{eq:sufficientommp3}
\end{eqnarray}
Since $\sqrt{K - l} < \sqrt{K}$, (\ref{eq:sufficientommp3}) holds if
\begin{eqnarray}
{\delta _{NK}} < \frac{{\sqrt N }} {{\sqrt {K}  + 3 \sqrt N }},
\end{eqnarray}
which completes the proof.
\end{proof}

%
%
%
%
\subsection{Overall Sufficient Condition} \label{subsec:ggggg3}
Thus far, we investigated conditions guaranteeing the success of the
gOMP algorithm in the initial iteration ($k = 1$) and non-initial iterations ($k > 1$). We now
combine these results to describe the sufficient condition of the gOMP algorithm ensuring the
perfect recovery of $K$-sparse signals.

Recall from Theorem \ref{thm:firstsuccess} that the gOMP makes a
success in the first iteration if
\begin{eqnarray}
\delta_{K + N} < \frac{\sqrt{N}}{\sqrt{K}+\sqrt{N}}. \label{eq:A}
\end{eqnarray}
Also, recall from Theorem \ref{thm:atleast1} that if the previous
$k$ iterations were successful,
then the gOMP will be successful for the $(k + 1)$-th iteration if
\begin{eqnarray}
\delta_{ NK } < \frac{\sqrt{N}}{\sqrt{K} + 3 \sqrt{N}}.  \label{eq:B}
\end{eqnarray}
In essence, the overall sufficient condition is determined by the stricter condition between (\ref{eq:A}) and (\ref{eq:B}).

%
%
%
\begin{theorem}[Sufficient condition of gOMP]\label{thm:summary_gOMPgood}
Let $N \leq \min \{K , \frac{m}{K}\}$, then the gOMP algorithm perfectly recovers any $K$-sparse vector $\mathbf{x}$ from $\mathbf{y} = \mathbf{ \Phi x}$ via at most $K$ iterations if the sensing matrix $\mathbf{\Phi }$ satisfies the RIP with isometry constant
\begin{eqnarray}
\label{eq:allsuccess}
&\delta_{ NK } < \frac{\sqrt{N}}{\sqrt{K} + 3 \sqrt{N}} \hspace{1cm} \mbox{for}~ K > 1, \\
&\delta_2      < \frac{1}{2} \hspace{2.5cm} \mbox{for}~ K = 1.
\end{eqnarray}
\end{theorem}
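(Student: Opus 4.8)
The plan is to obtain the statement as a direct synthesis of Theorems~\ref{thm:firstsuccess} and~\ref{thm:atleast1} together with the least-squares observation recorded in~\eqref{eq:bff1}--\eqref{eq:bff3}. The governing principle, already isolated in Section~\ref{sec:algorithm}, is that gOMP recovers $\mathbf{x}$ as soon as it selects at least one index of $T$ in every iteration: if iteration $k$ is a success (in the sense of Section~\ref{subsec:ggggg1}) for every $k$, then $|T\cap\Lambda^k|$ increases by at least one per iteration, so after at most $K$ iterations $T\subseteq\Lambda^{k}$, whence $\mathbf{r}^{k}=\mathbf{0}$ and the terminating least-squares step returns $\hat{\mathbf{x}}_{\Lambda^{*}}=\mathbf{x}_{\Lambda^{*}}$ because $\mathbf{x}_{\Lambda^{*}-T}=\mathbf{0}$. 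Hence it suffices to show that the stated RIP hypothesis forces a success in iteration $1$ and, inductively, in iteration $k+1$ whenever iterations $1,\dots,k$ were successful.

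First I would dispose of the case $K=1$. Here $\mathbf{y}=x_i\varphi_i$ for the single support index $i$, so by Lemmas~\ref{lem:rips} and~\ref{lem:correlationrip} we have $|\langle\varphi_i,\mathbf{y}\rangle|\ge(1-\delta_1)|x_i|$ while $|\langle\varphi_j,\mathbf{y}\rangle|\le\delta_2|x_i|$ for every $j\neq i$; since $\delta_1\le\delta_2<\tfrac12$ (the first inequality by Lemma~\ref{lem:mono}), the correct index carries strictly the largest correlation and is therefore among the $N$ selected indices, so the algorithm terminates after one iteration regardless of $N$, which is exactly the condition $\delta_2<\tfrac12$.

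For $K>1$ the argument is an induction on the iteration index. The base case $k=1$ is precisely Theorem~\ref{thm:firstsuccess}, which requires $\delta_{K+N}<\frac{\sqrt N}{\sqrt K+\sqrt N}$, and the inductive step ``$k$ successes imply iteration $k+1$ is a success'' is precisely Theorem~\ref{thm:atleast1}, which requires $\delta_{NK}<\frac{\sqrt N}{\sqrt K+2\sqrt N}$. The remaining task is to check that the single hypothesis $\delta_{NK}<\frac{\sqrt N}{\sqrt K+2\sqrt N}$ implies the first-iteration condition as well. For $N\ge2$ and $K\ge2$ one has $NK-(K+N)=(K-1)(N-1)-1\ge0$, so $K+N\le NK$ and hence $\delta_{K+N}\le\delta_{NK}$ by the monotonicity of the isometry constant (Lemma~\ref{lem:mono}); combined with the elementary comparison $\frac{\sqrt N}{\sqrt K+2\sqrt N}<\frac{\sqrt N}{\sqrt K+\sqrt N}$ this gives $\delta_{K+N}\le\delta_{NK}<\frac{\sqrt N}{\sqrt K+\sqrt N}$, so~\eqref{eq:B} is the stricter of the two conditions and subsumes~\eqref{eq:A}. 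Therefore, under~\eqref{eq:B}, every iteration of gOMP is a success and perfect recovery within $K$ iterations follows (note that $\delta_{NK}<1$ forces $m\ge NK$, so the iteration cap $\min\{K,m/N\}$ equals $K$ and the induction is allowed to run its full length). The boundary case $N=1$ (pure OMP) is instead covered by the OMP-specific condition $\delta_{K+1}<\frac{1}{\sqrt K+1}$ quoted in the statement, which rests on the sharpened first-iteration and non-initial-iteration analysis of the $N=1$ specialization rather than on~\eqref{eq:B}.

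I expect the only delicate point to be the bookkeeping in the last step: confirming the order inequality $K+N\le NK$ in the regime $K,N\ge2$ and that the order-$NK$ hypothesis dominates the order-$(K+N)$ one. All of the substantive analytic work --- the upper bound on $\alpha_N$, the lower bound on $\beta_1$, and the RIP estimates behind them (Lemmas~\ref{lem:upperbound} and~\ref{lem:lowerbound}) --- has already been discharged in proving Theorems~\ref{thm:firstsuccess} and~\ref{thm:atleast1}, so this theorem is essentially an induction whose base case and inductive step are quoted verbatim from those two results.
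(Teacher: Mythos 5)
Your overall architecture --- induction with Theorem~\ref{thm:firstsuccess} as the base case, Theorem~\ref{thm:atleast1} as the inductive step, and the least-squares identity \eqref{eq:bff1}--\eqref{eq:bff3} to conclude exact recovery once $T\subseteq\Lambda^k$ --- is exactly the paper's route, and your handling of the case $K=1$ and of the case $N\ge 2,\ K\ge 2$ (via $NK-(K+N)=(K-1)(N-1)-1\ge 0$ and Lemma~\ref{lem:mono}) matches the paper's Cases~3 and~1. The remark that $\delta_{NK}<1$ forces $m\ge NK$ so the iteration cap is $K$ is a correct detail the paper leaves implicit.

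There is, however, a genuine gap in your treatment of $N=1$, $K>1$. The theorem as stated covers this case, and its hypothesis there reads $\delta_{K}<\frac{1}{\sqrt{K}+2}$ (put $N=1$ in \eqref{eq:allsuccess}). You dismiss this case by saying it is ``covered by the OMP-specific condition $\delta_{K+1}<\frac{1}{\sqrt{K}+1}$,'' but that condition is not the hypothesis of the theorem being proved, and it is not implied by it: the two constraints involve isometry constants of different orders, and $\delta_{K}\le\delta_{K+1}$ runs in the wrong direction, so a bound on $\delta_{K}$ gives no bound on $\delta_{K+1}$. For the same reason you cannot simply invoke Theorem~\ref{thm:firstsuccess} (which needs $\delta_{K+1}<\frac{1}{\sqrt{K}+1}$) under the hypothesis $\delta_{K}<\frac{1}{\sqrt{K}+2}$. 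The paper closes this hole (its Case~2) by importing an external sufficient condition for OMP of order $K$, namely $\delta_{K}<\frac{\sqrt{K-1}}{\sqrt{K-1}+K}$ from \cite{wang2011Near}, and checking that $\frac{1}{\sqrt{K}+2}<\frac{\sqrt{K-1}}{\sqrt{K-1}+K}$ for $K>1$, so that the stated order-$K$ bound is indeed sufficient. Without such an order-$K$ result (or a reformulation of the theorem that excludes $N=1$), your proof does not establish the statement for $N=1$.
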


\begin{proof}
In order to prove the theorem, the following three cases need to be
considered.
\begin{itemize}
%
%
\item Case 1 [$N > 1, K > 1$]: \\
In this case, $NK \geq K + N$ and hence $\delta _{NK} \geq \delta
_{K + N}$ and also $ \frac{{\sqrt N }} {{\sqrt K + \sqrt N }} >
\frac{{\sqrt N }} {{\sqrt K  +  3 \sqrt N }}$. Thus, (\ref{eq:B}) is
stricter than (\ref{eq:A}) and the general condition becomes
\begin{eqnarray} \label{eq:43}
\delta_{ NK } < \frac{\sqrt{N}}{\sqrt{K} + 3 \sqrt{N}}.
\end{eqnarray}

%
%
\item Case 2 [$N = 1, K > 1$]: \\In this case, the
general condition should be the stricter condition between ${\delta
_{K}} < \frac{1} {{\sqrt K + 3}}$ and $\delta_{K + 1} < \frac{1}
{\sqrt K + 1}$.
Unfortunately, since $\delta_K \leq \delta_{K + 1}$ and
$\frac{1}{\sqrt{K} + 3} \leq \frac{1}{\sqrt{K} + 1}$, one cannot
compare two conditions directly. As an indirect way, we borrow a
sufficient condition guaranteeing the perfect recovery of the gOMP
for $N = 1$ as
\begin{eqnarray} \label{eq:goodbound}
{{\delta }_{K}} < \frac{\sqrt{K-1}}{\sqrt{K-1}+K}.
\end{eqnarray}
Readers are referred to \cite{wang2012Near} for the proof of
(\ref{eq:goodbound}).
Since $\frac{1}{{\sqrt K + 3}}  < \frac{\sqrt{K-1}}{\sqrt{K-1}+K}$
for $K > 1$, the sufficient condition for Case 2 becomes
\begin{eqnarray} \label{eq:44}
\delta_{K} < \frac{1}{\sqrt K + 3}.
\end{eqnarray}
It is interesting to note that \eqref{eq:44} can be nicely combined with the result of Case 1 since applying $N = 1$ in \eqref{eq:43} will result in \eqref{eq:44}.

%
%
\item Case 3 [$K = 1$]: \\Since $\mathbf{x}$ has a
single nonzero element ($K = 1$), $\mathbf{x}$ should be recovered in the
first iteration. Let $u$ be the index of nonzero element, then the
exact recovery of $\mathbf{x}$ is ensured regardless of $N$
if
\begin{eqnarray}
\label{eq:best} \left| \left\langle {{\varphi}_{u}},\mathbf{y}
\right\rangle  \right| = \max_i \left| \left\langle
{{\varphi}_{i}},\mathbf{y} \right\rangle  \right|.
\end{eqnarray}
The condition ensuring (\ref{eq:best}) is obtained by applying $N =
K = 1$ for Theorem \ref{thm:firstsuccess} and is given by
$\delta_{2} < \frac{1}{2}.$
\end{itemize}
\end{proof}

\begin{remark} [$\delta_{2K}$ based recovery condition]
We can express our condition with a small order of isometry constant.
By virtue of \cite[Corollary 3.4]{needell2009cosamp} ($\delta_{cK} \leq c \delta_{2K}$ for positive integer $c$), the proposed bound holds whenever $\delta_{2K} < \frac{1}
{\sqrt{NK} + 3 N}$.

\end{remark}

\vspace{3mm}

\begin{remark} [Comparison with previous work]\label{rem:comparison}
It is worth mentioning that there have been previous efforts to
investigate the sufficient condition for this algorithm.
In particular, the condition $\delta_{NK} < \frac{\sqrt{N}} {(2 + \sqrt
2) \sqrt K}$ was established in \cite[Theorem
2.1]{liu2012orthogonal}.
The proposed bound $\delta_{NK} < \frac{\sqrt{N}} {\sqrt{K} + 3 \sqrt{N}}$ holds the advantage over this bound if $N < \frac{(3 + 2\sqrt{2})K} {9} \approx 0.65 K$.
Since $N$ is typically much smaller than $K$, the proposed bound offers better recovery condition in many practical scenarios.\footnote{In fact, $N$ should not be large since the
inequality $N \leq \frac{m}{K}$ must be guaranteed.}
\end{remark}

\vspace{3mm}

\begin{remark} [Measurement size of sensing matrix] \label{rem:size}
It is well known that an $m \times n$ random sensing matrix whose entries are i.i.d.
with Gaussian distribution $N(0, \frac{1}{m})$ obeys the RIP ($\delta_K < \varepsilon$) with overwhelming probability if the dimension of the measurements satisfies \cite{baraniuk2008simple}
\begin{eqnarray} \label{eq:bara}
m = O \left(\frac{ K \log\frac{n}{K}}{\varepsilon^2}\right).
\end{eqnarray}
In \cite{davenport2010analysis}, it is shown that the OMP requires
$m = O \left( K^2 \log ({n/K})\right)$ random measurements for
reconstructing $K$-sparse signal.
Plugging \eqref{eq:allsuccess} into \eqref{eq:bara}, we also get the
same result.
\end{remark}

\subsection{Sufficient Condition of OMP}
\label{sec:mainresult_OMP}
In this subsection, we put our focus on the OMP algorithm which is the
special case of the gOMP algorithm for $N = 1$.
For sure, one can immediately obtain the condition of the OMP $\delta_{K} < \frac{1}{\sqrt{K} + 3}$ by applying $N = 1$ to Theorem \ref{thm:summary_gOMPgood}.
Our result is an improved version of this and based on the fact that
the non-initial step of the OMP process is the same as the initial
step since the residual is considered as a new measurement
preserving the sparsity $K$ of an input vector $\mathbf{x}$
\cite{wang2012Near,wang2012Recovery}.
In this regard, a condition guaranteeing to select a correct index in the first iteration can be readily extended to the general condition without incurring any loss.

\begin{corollary}[Direct consequence of Theorem \ref{thm:firstsuccess}] \label{cor:firstsuccess}
Suppose $\mathbf{x} \in \mathbb{R}^{n}$ is $K$-sparse, then the OMP
algorithm recovers an index in $T$ from $\mathbf{y} = \mathbf{\Phi
x} \in \mathbb{R}^{m}$ in the first iteration if
$\delta_{K + 1} < \frac{1}{\sqrt{K} + 1}$.
\end{corollary}

We now state that the first iteration condition is extended to any
iteration of the OMP algorithm.
\begin{lemma}[Wang and Shim\cite{wang2012Recovery}] \label{lem:good1}
Suppose that the first $k$ iterations ($1 \leq k \leq K - 1$) of the
OMP algorithm are successful (i.e., $\Lambda^k \subset T$), then the
$(k + 1)$-th iteration is also successful (i.e., $t^{k + 1} \in T$)
under $\delta_{K+1} < \frac{1}{\sqrt{K} + 1 }$.
\end{lemma}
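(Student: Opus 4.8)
The plan is to reduce the $(k+1)$-th iteration of OMP to a \emph{first} iteration, exploiting the observation already flagged in the text: after $k$ successful iterations the residual $\mathbf{r}^k$ is itself a measurement vector of a $K$-sparse signal through the \emph{same} sensing matrix $\mathbf{\Phi}$. Concretely, since $N = 1$ the hypothesis $\Lambda^k \subset T$ gives $|\Lambda^k| = k$, and writing $W = T - \Lambda^k$ (nonempty because $k \le K-1$), the orthogonality of $\mathbf{r}^k$ to $span(\mathbf{\Phi}_{\Lambda^k})$ established in Section \ref{sec:algorithm} yields $\mathbf{r}^k = \mathbf{P}_{\Lambda^k}^\bot \mathbf{y} = \mathbf{P}_{\Lambda^k}^\bot \mathbf{\Phi}_W \mathbf{x}_W$. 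Each column of $\mathbf{P}_{\Lambda^k}^\bot \mathbf{\Phi}_W$ equals $\varphi_j - \mathbf{P}_{\Lambda^k}\varphi_j$ for some $j \in W \subseteq T$, and both $\varphi_j$ and $\mathbf{P}_{\Lambda^k}\varphi_j \in span(\mathbf{\Phi}_{\Lambda^k}) \subseteq span(\mathbf{\Phi}_T)$; hence $\mathbf{r}^k \in span(\mathbf{\Phi}_T)$, i.e. there is a vector $\mathbf{x}'$ with ${\rm supp}(\mathbf{x}') \subseteq T$ (so $\|\mathbf{x}'\|_0 \le K$) such that $\mathbf{r}^k = \mathbf{\Phi}\mathbf{x}'$.

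Next I would observe that the index chosen by OMP at the $(k+1)$-th step is exactly $t^{k+1} = \arg\max_{j}|\langle \varphi_j, \mathbf{r}^k\rangle| = \arg\max_{j}|\langle \varphi_j, \mathbf{\Phi}\mathbf{x}'\rangle|$, which is precisely the index a \emph{first} OMP iteration would select for the signal $\mathbf{x}'$ and measurement $\mathbf{r}^k = \mathbf{\Phi}\mathbf{x}'$. Since the proof of Theorem \ref{thm:firstsuccess} (equivalently Corollary \ref{cor:firstsuccess} with $N = 1$) uses only the structural relation ``measurement $=\mathbf{\Phi}$ times a $K$-sparse vector'' together with the RIP of $\mathbf{\Phi}$, it transfers verbatim to the pair $(\mathbf{x}', \mathbf{r}^k)$: under $\delta_{K+1} < \frac{1}{\sqrt{K}+1}$ the maximizing index lies in ${\rm supp}(\mathbf{x}') \subseteq T$, so $t^{k+1} \in T$, which is the asserted success of the $(k+1)$-th iteration. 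This route is exactly why one gets $\frac{1}{\sqrt K+1}$ rather than the weaker $\frac{1}{\sqrt K+2}$ that falls out of Theorem \ref{thm:summary_gOMPgood} at $N=1$: the cross-term bookkeeping of Lemmas \ref{lem:upperbound}--\ref{lem:lowerbound} is bypassed entirely.

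One technical point needs care: ${\rm supp}(\mathbf{x}')$ may be a proper subset of $T$, so strictly one applies Theorem \ref{thm:firstsuccess} with the true sparsity $K' = \|\mathbf{x}'\|_0 \le K$, which asks only for $\delta_{K'+1} < \frac{1}{\sqrt{K'}+1}$; but $\delta_{K'+1} \le \delta_{K+1}$ by the monotonicity of the isometry constant (Lemma \ref{lem:mono}) and $\frac{1}{\sqrt{K'}+1} \ge \frac{1}{\sqrt{K}+1}$, so the stated hypothesis $\delta_{K+1} < \frac{1}{\sqrt K+1}$ still suffices. (The degenerate case $W = \emptyset$, i.e. $T \subseteq \Lambda^k$, cannot occur since $k \le K-1$; there $\mathbf{r}^k = \mathbf{0}$ and OMP would have halted.) I expect the only genuine obstacle is making the reduction airtight — specifically, arguing rigorously, via $\mathbf{r}^k \in span(\mathbf{\Phi}_T)$ plus the support-size bookkeeping above, that ``$\mathbf{r}^k$ is a fresh $K$-sparse measurement'' rather than treating it heuristically; once that is in place, the conclusion is an immediate invocation of Theorem \ref{thm:firstsuccess}.
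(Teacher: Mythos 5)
Your proposal is correct and follows essentially the same route as the paper: both show $\mathbf{r}^k \in span(\mathbf{\Phi}_T)$, write $\mathbf{r}^k = \mathbf{\Phi}\mathbf{x}'$ with $\mathrm{supp}(\mathbf{x}') \subseteq T$, and then invoke Corollary \ref{cor:firstsuccess} to conclude the selected index lies in $T$. Your added bookkeeping (monotonicity for the case $\|\mathbf{x}'\|_0 < K$ and the exclusion of the degenerate case $T \subseteq \Lambda^k$) only makes the paper's argument more airtight.
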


Combining Corollary \ref{cor:firstsuccess} and Lemma \ref{lem:good1}, and also noting that indices in $\Lambda^k$ are not selected again in the succeeding iterations (since the index chosen in the $(k + 1)$-th step belongs to $T - \Lambda^k$), one can conclude that $\Lambda^K = T$ and the OMP algorithm recovers original signal $\mathbf{x}$ in
exactly $K$ iterations under $\delta_{K + 1} < \frac{1}{\sqrt{K} +
1}$.

The following theorem formally describes the sufficient condition of
the OMP algorithm.
\begin{theorem}[Wang and Shim\cite{wang2012Recovery}] \label{thm:totalOMP}
Suppose $ \mathbf{x} $ is $K$-sparse vector, then the OMP algorithm
recovers $\mathbf{x}$ from $\mathbf{y} = \mathbf{\Phi x}$ under
\begin{eqnarray}
\delta_{K + 1} < \frac{1}{\sqrt{K} + 1}. \label{eq:nearoptimalbound}
\end{eqnarray}
\end{theorem}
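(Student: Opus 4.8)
Almost all of the work for Theorem~\ref{thm:totalOMP} has already been carried out in the preceding material, so the plan is to assemble the pieces into a clean induction on the iteration count $k$. I would show that, under $\delta_{K+1} < \frac{1}{\sqrt{K}+1}$, the OMP maintains the invariant $\Lambda^k \subseteq T$ with $|\Lambda^k| = k$ for every $1 \le k \le K$. The base case $k=1$ is exactly Corollary~\ref{cor:firstsuccess} (which is itself Theorem~\ref{thm:firstsuccess} specialized to $N=1$): the index selected in the first iteration lies in $T$, so $\Lambda^1 \subseteq T$ and $|\Lambda^1|=1$.

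For the inductive step, assume $\Lambda^k \subseteq T$ with $|\Lambda^k| = k < K$. The key observation is the one underlying Lemma~\ref{lem:good1}: since $\mathbf{y} = \mathbf{\Phi}_T \mathbf{x}_T$ and $\mathbf{\Phi}_{\Lambda^k}$ is a submatrix of $\mathbf{\Phi}_T$, the residual $\mathbf{r}^k = \mathbf{P}^{\perp}_{\Lambda^k}\mathbf{y}$ lies in $\mathrm{span}(\mathbf{\Phi}_T)$, hence $\mathbf{r}^k = \mathbf{\Phi}\mathbf{x}'$ for some $\mathbf{x}'$ supported within $T$. Thus $\mathbf{r}^k$ is itself a noise-free measurement of a $K$-sparse signal through the \emph{same} sensing matrix, and the recovery condition $\delta_{K+1}<\frac{1}{\sqrt{K}+1}$ is unchanged; applying Corollary~\ref{cor:firstsuccess} to this fresh instance shows the newly chosen index $t^{k+1}$ belongs to $T$. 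Finally, because $\mathbf{r}^k \perp \mathbf{\Phi}_{\Lambda^k}$ (equation~\eqref{eq:brk33}), every column indexed by $\Lambda^k$ has zero correlation with $\mathbf{r}^k$, whereas at least one index in $T-\Lambda^k\neq\emptyset$ has strictly positive correlation (this is implicit in the estimate used for Corollary~\ref{cor:firstsuccess}, as $\mathbf{r}^k\neq\mathbf{0}$ when $T\not\subseteq\Lambda^k$), so the maximizer $t^{k+1}$ cannot lie in $\Lambda^k$. Hence $\Lambda^{k+1} = \Lambda^k \cup \{t^{k+1}\} \subseteq T$ with $|\Lambda^{k+1}| = k+1$, closing the induction.

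It remains to conclude exact recovery. After $K$ iterations the invariant gives $\Lambda^K \subseteq T$ and $|\Lambda^K| = K = |T|$, hence $\Lambda^K = T$. Since $\delta_{K+1} < 1$ implies $\delta_K < 1$ by Lemma~\ref{lem:mono}, $\mathbf{\Phi}_{\Lambda^K} = \mathbf{\Phi}_T$ has full column rank, so the output least-squares step $\hat{\mathbf{x}} = \arg\min_{\mathrm{supp}(\mathbf{x}) = \Lambda^K}\|\mathbf{y}-\mathbf{\Phi x}\|_2$ returns $\hat{\mathbf{x}}_{\Lambda^K} = \mathbf{\Phi}_{\Lambda^K}^{\dagger}\mathbf{y} = \mathbf{x}_T$, exactly as computed in~\eqref{eq:bff1}--\eqref{eq:bff3}, and therefore $\hat{\mathbf{x}} = \mathbf{x}$. (If $\mathbf{r}^k = \mathbf{0}$ happens for some $k<K$, the algorithm has already terminated with $T\subseteq\Lambda^k$ and the same least-squares argument applies, so this case causes no difficulty.)

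I do not anticipate a genuine obstacle: the analytic content is entirely in Theorem~\ref{thm:firstsuccess}/Corollary~\ref{cor:firstsuccess} and in the ``residual-as-fresh-measurement'' reduction of Lemma~\ref{lem:good1}. The only point needing care is making the non-reselection argument airtight — confirming that in each iteration some index of $T-\Lambda^k$ strictly outscores every (zero-correlation) index of $\Lambda^k$ — which follows immediately once $T-\Lambda^k$ is nonempty and $\mathbf{r}^k\neq\mathbf{0}$.
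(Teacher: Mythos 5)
Your proposal is correct and follows the same route as the paper, which proves the theorem by combining Corollary \ref{cor:firstsuccess} (first-iteration success) with Lemma \ref{lem:good1} (the residual-as-fresh-measurement inductive step) and the non-reselection property of the residual to conclude $\Lambda^K = T$. You merely spell out the induction and the final least-squares recovery step more explicitly than the paper's one-line ``immediate from'' proof.
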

\begin{proof}
Immediate from Corollary \ref{cor:firstsuccess} and Lemma \ref{lem:good1}.
\end{proof}

\vspace{10pt}

\begin{remark} [Comments on bounds of \eqref{eq:43} and \eqref{eq:nearoptimalbound}] \label{rem:whichisbetter}
The bounds of the gOMP in \eqref{eq:43} and the OMP in \eqref{eq:nearoptimalbound} cannot be directly compared since $\delta_{NK} \geq \delta_{K + 1}$ and $\frac{\sqrt N}{\sqrt{K} + 3
\sqrt N} \geq \frac{1}{\sqrt{K} + 1}$.
Nevertheless, the difference in the order might offer possible advantages to the OMP.
It should be noted that the RIP condition, analyzed based on the worse case scenario, is for the perfect recovery and hence offers too conservative bound.
This explains why the most of sparse recovery algorithms perform better than the bound predicts in practice.
It should also be noted that by allowing more iterations (larger
than $K$ iterations) for the OMP, one can improve performance
\cite{liu2012orthogonal,zhang2011sparse,foucart2011stability} and
achieve performance comparable to the gOMP. However, this may incur
large delay and higher computational cost.
\end{remark}

%
%
\section{Reconstruction of Sparse Signals from Noisy Measurements}
%
%
In this section, we consider the reconstruction performance of the
gOMP algorithm in the presence of noise.
Since the measurement is expressed as ${\mathbf{y}} = {\mathbf{\Phi x}} + {\mathbf{v}}$ in this scenario, perfect reconstruction of $\mathbf{x}$ cannot be guaranteed and hence we need to use the upper bound of $\ell_2$-norm distortion ${\left\| {{\mathbf{x}} - {\mathbf{\hat x}}} \right\|_2}$ as a performance measure.

Recall that the termination condition of the gOMP algorithm is either ${\| {{{\mathbf{r}}^s}} \|_2} < \epsilon $ or $k \geq  \min \left\{ {K,\frac{m} {N}} \right\}$.
Note that, since $\min \left\{ {K,\frac{m}{N}} \right\} = K$ under the condition that ${\mathbf{\Phi }}$ satisfies the RIP of order $NK$,\footnote{If ${\mathbf{\Phi }}$ satisfies the RIP of order $NK$, (i.e., $\delta_{NK} \in (0,1)$), then $0 < 1 - \delta_{NK} \leq \lambda_{\min} \left(\mathbf{\Phi}_D'\mathbf{\Phi}_D\right)$ for all index set $D$ with $|D| \leq NK$, which indicates that all eigenvalues of $\mathbf{\Phi}_D$ are positive. Thus, $\mathbf{\Phi}_D$ should be full column rank (i.e., $K \leq  m/N$).}
the stopping rule of the gOMP can be simplified to ${\| {{{\mathbf{r}}^k}} \|_2} < \epsilon $ or ${k} = K$.  In these two scenarios, we investigate the upper bound of ${\left\| {{\mathbf{x}} - {\mathbf{\hat x}}} \right\|_2}$.

The next theorem provides the upper bound of $\ell_2$-norm distortion when the gOMP algorithm is finished by ${\| {{{\mathbf{r}}^k}} \|_2} < \epsilon$.
%
%
\begin{theorem} \label{thm:errorbound1}
%
%
Let ${\mathbf{\Phi }}$ be the sensing matrix satisfying RIP of order $NK$. If ${\left\| {{{\mathbf{r}}^s}} \right\|_2} < \epsilon $ is satisfied after $s$ ($s < K$) iterations, then
\begin{eqnarray}
  {\left\| {{\mathbf{x}} - {\mathbf{\hat x}}} \right\|_2} \leq  \frac{\epsilon }
{{\sqrt {1 - {\delta _{NK}}} }} + \frac{{{{\left\| {\mathbf{v}} \right\|}_2}}}
{{\sqrt {1 - {\delta _{NK}}} }}.
\end{eqnarray}
\end{theorem}

\begin{proof}
See Appendix \ref{app:errorbound1}.
\end{proof}

The next theorem provides the upper bound of ${\left\| {{\mathbf{x}} - {\mathbf{\hat x}}} \right\|_2}$ for the second scenario (i.e., when the gOMP is terminated after $K$ iterations).

%
%
\begin{theorem} \label{thm:errorbound1_two111}
%
%
Let $\mathbf{\Phi}$ be the sensing matrix satisfying the RIP of
order $NK + K$ and ${\delta _{NK}} < \frac{\sqrt N} {3\sqrt N  +
\sqrt K }$.  Suppose the gOMP algorithm is terminated after $K$
iterations, then
\begin{eqnarray}
 {\left\| {{\mathbf{x}} - {\mathbf{\hat x}}} \right\|_2} &\leq&  \frac{{{{\left\| {\mathbf{v}} \right\|}_2}}} {{\sqrt {1 - {\delta _{NK}}} }}, ~~\mbox{if}~~ T \subset \Lambda^K,\\
 {\left\| {{\mathbf{x}} - {\mathbf{\hat x}}} \right\|_2} &\leq&  C_K {\left\| {\mathbf{v}} \right\|_2}, ~~~~~\mbox{if}~~ T \not\subset \Lambda^K,
\end{eqnarray}
where
\begin{eqnarray}
C_K & = & \frac{\left( 1-{{\delta }_{NK}} \right)\left( 1+{{\delta
}_{K}}+\sqrt{\frac{K}{N} \left( 1+{{\delta }_{N}} \right)\left(
1+{{\delta }_{K}} \right)} \right)}{\left( 1-3{{\delta
}_{NK}}-\sqrt{\frac{K}{N} }{{\delta }_{NK}} \right)\sqrt{1-{{\delta }_{NK+K}}}} \nonumber    \\
& & + \frac{2\left( 1-3{{\delta}_{NK}}-\sqrt{\frac{K}{N} }{{\delta
}_{NK}} \right)} {\left( 1-3{{\delta }_{NK}}-\sqrt{\frac{K}{N}
}{{\delta }_{NK}} \right)\sqrt{1-{{\delta }_{NK+K}}}}.  \nonumber
\end{eqnarray}
Since $C_K > \frac{1}{\sqrt{1 - \delta_{NK}}}$, one can get the simple upper bound as
$${\left\| {{\mathbf{x}} - {\mathbf{\hat x}}} \right\|_2} \leq  C_K {\left\| {\mathbf{v}} \right\|_2}.$$
\end{theorem}

\begin{remark}[Comments about $C_K$] \label{rem:ck}
Note that, by the hypothesis of the theorem, $0 < 1 - 3 \delta_{NK}
- \sqrt{\frac{K}{N} }{{\delta }_{NK}} < 1$, so $C_k$ for large $K$
is approximately
\begin{equation}
\label{eq:approx_ck}
C_K \approx \frac{ \left( 1-{{\delta }_{NK}} \right) \sqrt{
\left( 1+{{\delta }_{N}} \right)\left( 1+{{\delta }_{K}} \right)}
}{\left( 1-3{{\delta }_{NK}}-\sqrt{\frac{K}{N} }{{\delta }_{NK}} \right)\sqrt{1-{{\delta }_{NK+K}}}}
\sqrt{\frac{K}{N}}.
\end{equation}
This says that the $\ell_2$-norm distortion is essentially upper bounded
by the product of noise power and $c \sqrt{\frac{K}{N}}$ ($c$ is a constant).
It is clear from \eqref{eq:approx_ck} that $C_K$ decreases as $N$ increases. Hence,
by increasing $N$ (i.e., allowing more indices to be selected per
step), we may obtain a better (smaller) distortion bound. However,
since $NK \leq m$ needs to be satisfied, this bound is guaranteed only
for very sparse signal vectors.
\end{remark}

\vspace{10pt}

Before providing the proof of Theorem \ref{thm:errorbound1_two111}, we
analyze a sufficient condition of the gOMP to make success at the
$(k + 1)$-th iteration when the former $k$ iterations are
successful. In our analysis, we reuse the notation $\alpha_i$ and
$\beta_i$ of Lemma \ref{lem:upperbound} and \ref{lem:lowerbound}.
The following lemma provides an upper bound of $\alpha_N$ and a lower
bound of $\beta_1$.
%
%
\begin{lemma}  \label{lem:errorbound1_two}
$\alpha_N$ and $\beta_1$ satisfy
\begin{eqnarray}
{\alpha _N} &\leq&  \left( {{\delta _{N + K - l}} + \frac{{{\delta _{N + Nk}}{\delta _{Nk + K - l}}}}
{{1 - {\delta _{Nk}}}}} \right)\frac{{{{\left\| {{{\mathbf{x}}_{T - {\Lambda ^k}}}} \right\|}_2}}}
{{\sqrt N }} \nonumber  \\
& & + \frac{{\sqrt {1 + {\delta _N}} {{\left\| {\mathbf{v}} \right\|}_2}}}
{{\sqrt N }}
\end{eqnarray}
and
\begin{eqnarray}
{\beta _1} &\geq&  \left( {1 - {\delta _{K - l}} - \frac{{\sqrt {1 + {\delta _{K - l}}} \sqrt {1 + {\delta _{Nk}}} {\delta _{Nk + K - l}}}}
{{1 - {\delta _{Nk}}}}} \right) \nonumber   \\
& & \times \frac{{{{\left\| {{{\mathbf{x}}_{T - {\Lambda ^k}}}} \right\|}_2}}}
{{\sqrt {K - l} }} - \frac{{\sqrt {1 + {\delta _{K - l}}} {{\left\| {\mathbf{v}} \right\|}_2}}}
{{\sqrt {K - l} }}.
\end{eqnarray}
\end{lemma}

\begin{proof}
See Appendix \ref{app:errorbound1_two}.
\end{proof}

As mentioned, the gOMP will select at least one correct index from $T$ at the $(k + 1)$-th iteration provided that $\alpha_N < \beta_1$.

%
%
\begin{lemma} \label{lem:errorbound1_two1}
Suppose the gOMP has performed $k$ iterations ($1 \leq k < K$)
successfully. Then under the condition
\begin{equation} \label{eq:ggaalssl}
{{\left\| {{\mathbf{x}}_{T-{{\Lambda }^{k}}}} \right\|}_{2}}>\frac{\left( \sqrt{1+{{\delta }_{K}}}+\sqrt{\frac{K}{N} +{\frac{K}{N} {\delta }_{N}} } \right)\left( 1-{{\delta }_{NK}} \right)}{1-3{{\delta }_{NK}}-\sqrt{\frac{K}{N} }{{\delta }_{NK}}}{{\left\| \mathbf{v} \right\|}_{2}},
\end{equation}
the gOMP makes a success at the $(k + 1)$-th condition.
\end{lemma}

\begin{proof}
See Appendix \ref{app:errorbound1_two1}.
\end{proof}

We are now ready to prove Theorem \ref{thm:errorbound1_two111}.

\begin{proof}
We first consider the scenario where $\Lambda ^K$ contains all correct indices (i.e.,  $T \subset {\Lambda ^K}$). In this case,
\begin{eqnarray}
 \!\!\!\!\!\! {\left\| {{\mathbf{x}} - {\mathbf{\hat x}}} \right\|_2} \!\!\!\!\! & \leq & \!\!\!\!\! \frac{1}
{{\sqrt {1 - {\delta _{\left| {{\Lambda ^K} \cup T} \right|}}} }}{\left\| {{\mathbf{\Phi }}\left( {{\mathbf{x}} - {\mathbf{\hat x}}} \right)} \right\|_2} \\
&=& \!\!\!\!\! \frac{1}
{{\sqrt {1 - {\delta _{\left| {{\Lambda ^K}} \right|}}} }}{\left\| {{\mathbf{\Phi }}\left( {{\mathbf{x}} - {\mathbf{\hat x}}} \right)} \right\|_2} \\
&=& \!\!\!\!\! \frac{1}
{{\sqrt {1 - {\delta _{NK}}} }}{\left\| {{\mathbf{\Phi x}} - {{\mathbf{\Phi }}_{{\Lambda ^K}}}{\mathbf{\Phi }}_{{\Lambda ^K}}^\dag {\mathbf{y}}} \right\|_2}   \\
&=& \!\!\!\!\! \frac{\left\| {{\mathbf{\Phi x}} - {{\mathbf{\Phi }}_{{\Lambda ^K}}}{\mathbf{\Phi }}_{{\Lambda ^K}}^\dag \left( {{\mathbf{\Phi x}} + {\mathbf{v}}} \right)} \right\|_2}
{{\sqrt {1 - {\delta _{NK}}} }} \\
&=& \!\!\!\!\! \frac{\| {{\mathbf{\Phi x}} \!-\! {{\mathbf{\Phi }}_{{\Lambda ^K}}}{\mathbf{\Phi }}_{{\Lambda ^K}}^\dag {\mathbf{\Phi x}} \!-\! {{\mathbf{\Phi }}_{{\Lambda ^K}}}{\mathbf{\Phi }}_{{\Lambda ^K}}^\dag {\mathbf{v}}} \|_2}
{{\sqrt {1 - {\delta _{NK}}} }}   \\
\label{eq:llssls3}&=& \!\!\!\!\! \frac{\left\| {{{\mathbf{P}}_{{\Lambda ^K}}}{\mathbf{v}}} \right\|_2}
{{\sqrt {1 - {\delta _{NK}}} }} \\
 & \leq & \!\!\!\!\! \frac{{{{\left\| {\mathbf{v}} \right\|}_2}}}
{{\sqrt {1 - {\delta _{NK}}} }},
\end{eqnarray}
where \eqref{eq:llssls3} follows from $
{\mathbf{\Phi x}} - {{\mathbf{\Phi }}_{{\Lambda ^K}}}{\mathbf{\Phi }}_{{\Lambda ^K}}^\dag {\mathbf{\Phi x}} = {\mathbf{\Phi x}} - {{\mathbf{\Phi }}_{{\Lambda ^K}}}{\mathbf{\Phi }}_{{\Lambda ^K}}^\dag {{\mathbf{\Phi }}_{{\Lambda ^K}}}{{\mathbf{x}}_{{\Lambda ^K}}} = {\mathbf{0}}$
for $T \subset {\Lambda ^K}$.

Now we turn to the next scenario where $\Lambda ^K$ does not contain all the correct indices (i.e., $T \not\subset \Lambda^K $). Since the algorithm has performed $K$
iterations yet failed to find all correct indices, it is clear
that the gOMP algorithm does not make a success for some iteration (say this occurs at $(p+1)$-th iteration).
Then, by the contraposition of Lemma \ref{lem:errorbound1_two1},
\begin{equation} \label{eq:ggaall}
{\left\| {{{\mathbf{x}}_{T - {\Lambda ^p}}}} \right\|_2} \leq \frac{\left( \sqrt{1+{{\delta }_{K}}}+\sqrt{\frac{K}{N}+\frac{K}{N}{{\delta }_{N}}} \right)\left( 1-{{\delta }_{NK}} \right)}{1-3{{\delta }_{NK}}-\sqrt{\frac{K}{N}}{{\delta }_{NK}}}{{\left\| \mathbf{v} \right\|}_{2}}.
\end{equation}
Since ${{\mathbf{x}} - {\mathbf{\hat x}}}$ is at most $(NK +K)$-sparse,
\begin{eqnarray}
  {\left\| {{\mathbf{x}} - {\mathbf{\hat x}}} \right\|_2} &\leq&  \frac{1}
{{\sqrt {1 - {\delta _{\left| {{\Lambda ^K} \cup T} \right|}}} }}{\left\| {{\mathbf{\Phi }}\left( {{\mathbf{x}} - {\mathbf{\hat x}}} \right)} \right\|_2}   \\
&\leq&  \frac{\left\| {{\mathbf{\Phi x}} - {{\mathbf{\Phi }}_{{\Lambda ^K}}}{\mathbf{\Phi }}_{{\Lambda ^K}}^\dag {\mathbf{y}}} \right\|_2}
{{\sqrt {1 - {\delta _{NK + K}}} }}   \\
&=&  \frac{\left\| {{\mathbf{y - v}} - {{\mathbf{\Phi }}_{{\Lambda ^K}}}{\mathbf{\Phi }}_{{\Lambda ^K}}^\dag {\mathbf{y}}} \right\|_2}
{{\sqrt {1 - {\delta _{NK + K}}} }}   \\
&=& \frac{\left\| {{{\mathbf{r}}^K} - {\mathbf{v}}} \right\|_2}
{{\sqrt {1 - {\delta _{NK + K}}} }}   \\
&\leq&  \frac{{{{\left\| {{{\mathbf{r}}^K}} \right\|}_2} + {{\left\|
{\mathbf{v}} \right\|}_2}}} {{\sqrt {1 - {\delta _{NK + K}}} }}.
\end{eqnarray}
Also, ${\left\| {{{\mathbf{r}}^K}} \right\|_2} \leq  {\left\|
{{{\mathbf{r}}^p}} \right\|_2}$,\footnote{Due to the orthogonal projection of the gOMP, the magnitude of the residual decreases as iterations go on (${\left\| {{{\mathbf{r}}^i}} \right\|_2} \leq  {\left\| {{{\mathbf{r}}^j}} \right\|_2}$ for $i \geq j$).} and thus
\begin{eqnarray}
 \!\!\! {\left\| {{\mathbf{x}} - {\mathbf{\hat x}}} \right\|_2} \!\!\! &\leq& \!\!\! \frac{1}
{{\sqrt {1 - {\delta _{NK + K}}} }}\left( {{{\left\| {{{\mathbf{r}}^p}} \right\|}_2} + {{\left\| {\mathbf{v}} \right\|}_2}} \right)  \\
&=&  \frac{{{{\left\| {{\mathbf{P}}_{{\Lambda ^p}}^ \bot {\mathbf{y}}} \right\|}_2} + {{\left\| {\mathbf{v}} \right\|}_2}}}
{{\sqrt {1 - {\delta _{NK + K}}} }}     \\
&=&  \frac{{{{\left\| {{\mathbf{P}}_{{\Lambda ^p}}^ \bot {{\mathbf{\Phi }}_{T}}{{\mathbf{x}}_{T}} + {\mathbf{P}}_{{\Lambda ^p}}^ \bot {\mathbf{v}}} \right\|}_2} + {{\left\| {\mathbf{v}} \right\|}_2}}}
{{\sqrt {1 - {\delta _{NK + K}}} }} \\
\label{eq:hhshshhshs}  &=&  \frac{{{{\left\| {{\mathbf{P}}_{{\Lambda ^p}}^ \bot {{\mathbf{\Phi }}_{T - {\Lambda ^p}}}{{\mathbf{x}}_{T - {\Lambda ^p}}} + {\mathbf{P}}_{{\Lambda ^p}}^ \bot {\mathbf{v}}} \right\|}_2}}}
{{\sqrt {1 - {\delta _{NK + K}}} }} \nonumber   \\
& &  +\frac{ \left\| {\mathbf{v}} \right\|_2} {\sqrt {1 - {\delta _{NK + K}}}}  \\
&\leq&  \frac{\left\| {{\mathbf{P}}_{{\Lambda ^p}}^ \bot {{\mathbf{\Phi }}_{T - {\Lambda ^p}}}{{\mathbf{x}}_{T - {\Lambda ^p}}}} \right\|_2}
{\sqrt {1 - {\delta _{NK + K}}}}    \nonumber   \\
& &+\frac{ \left\| {{\mathbf{P}}_{{\Lambda ^p}}^ \bot {\mathbf{v}}} \right\|_2 + \left\| {\mathbf{v}} \right\|_2 }{\sqrt {1 - {\delta _{NK + K}}}}  \\
\label{eq:giays2}&\leq&  \frac{ {{{\left\| {{{\mathbf{\Phi }}_{T -
{\Lambda ^p}}}{{\mathbf{x}}_{T - {\Lambda ^p}}}} \right\|}_2} +
2{{\left\| {\mathbf{v}} \right\|}_2}}} {{\sqrt {1 - {\delta _{NK +
K}}} }},
\end{eqnarray}
where \eqref{eq:hhshshhshs} is because ${\mathbf{P}}_{\Lambda
^p}^\bot$ cancels all the components in
$span(\mathbf{\Phi}_{\Lambda^p})$.
Using the definition of the RIP, we have
\begin{eqnarray}
{\left\| {{{\mathbf{\Phi }}_{T - {\Lambda ^p}}}{{\mathbf{x}}_{T - {\Lambda ^p}}}} \right\|_2}
    &\leq&  \sqrt {1 + {\delta _{\left| {T - {\Lambda ^p}} \right|}}} {\left\| {{{\mathbf{x}}_{T - {\Lambda ^p}}}} \right\|_2}  \nonumber   \\
    &\leq&  \sqrt {1 + {\delta _K}} {\left\| {{{\mathbf{x}}_{T - {\Lambda ^p}}}} \right\|_2},
\end{eqnarray}
and hence 
\begin{equation}
\label{eq:vvv33}
{\left\| {{\mathbf{x}} - {\mathbf{\hat x}}} \right\|_2}
\leq
\frac{{\sqrt {1 + {\delta _K}} {{\left\| {{{\mathbf{x}}_{T - {\Lambda ^p}}}} \right\|}_2} + 2{{\left\| {\mathbf{v}} \right\|}_2}}}
{{\sqrt {1 - {\delta _{NK + K}}} }}.
\end{equation}

Combining \eqref{eq:vvv33} and \eqref{eq:ggaall}, we finally have
\begin{eqnarray}
{\left\| {{\mathbf{x}} - {\mathbf{\hat x}}} \right\|_2} \leq
C_K{\left\| {\mathbf{v}} \right\|_2},
\end{eqnarray}
where
\begin{eqnarray}
C_K & = & \frac{\left( 1-{{\delta }_{NK}} \right)\left( 1+{{\delta
}_{K}}+\sqrt{\frac{K}{N} \left( 1+{{\delta }_{N}} \right)\left(
1+{{\delta }_{K}} \right)} \right)}{\left( 1-3{{\delta
}_{NK}}-\sqrt{\frac{K}{N} }{{\delta }_{NK}} \right)\sqrt{1-{{\delta }_{NK+K}}}} \nonumber    \\
& & + \frac{2\left( 1-3{{\delta}_{NK}}-\sqrt{\frac{K}{N} }{{\delta
}_{NK}} \right)} {\left( 1-3{{\delta }_{NK}}-\sqrt{\frac{K}{N}
}{{\delta }_{NK}} \right)\sqrt{1-{{\delta }_{NK+K}}}}.  \nonumber
\end{eqnarray}
\end{proof}


%
%
\section{Discussions on Complexity}
%
In this section, we discuss the complexity of the gOMP algorithm.  Our analysis shows that the computational complexity is approximately $2s mn + (2N^2 + N)s^2m$ where $s$ is the number of iterations. We show by empirical simulations that the number of iterations $s$ is small so that the proposed gOMP algorithm is effective in running time and computational complexity.

%
%
%
The complexity for each step of the gOMP algorithm is summarized as
follows.

\begin{itemize}
\item \textbf{Identification}: The gOMP performs a
matrix-vector multiplication $\mathbf{\Phi}' \mathbf{r}^{k - 1}$, which requires $(2m - 1)n$ flops ($m$ multiplication and $m - 1$ additions).
Also, $\mathbf{\Phi}' \mathbf{r}^{k - 1}$ needs to be sorted to find $N$ best indices, which requires $nN - N(N + 1)/2$ flops.

\item \textbf{Estimation of $\hat{\mathbf{x}}_{\Lambda^k}$}:
The LS solution $\hat{\mathbf{x}}_{\Lambda^k}$ is obtained in this step.
Using the QR factorization of $\mathbf{\Phi}_{\Lambda^k}$
($\mathbf{\Phi}_{\Lambda^k} = \mathbf{Q R}$),  we have
\begin{equation}
{{\mathbf{\hat x}}_{\Lambda^{k}}} =
\left(\mathbf{\Phi}_{\Lambda^k}'\mathbf{\Phi}_{\Lambda^k}\right)^{-1}
\mathbf{\Phi}_{\Lambda^k}' \mathbf{y} = \left(\mathbf{R}'
\mathbf{R}\right)^{-1} \mathbf{R}' \mathbf{Q}' \mathbf{y}
\end{equation}
and this leads to a cost of $O(k^2 m)$ \cite{bj?rck1996numerical}.
Actually, since the elements of $\Lambda^{k}$ and $\Lambda^{k - 1}$
are largely overlapped, it is possible to recycle the part of the
previous QR factorization of $\mathbf{\Phi}_{\Lambda^{k - 1}}$ and then apply the modified Gram-Schmidt (MGS) algorithm.
In doing so, the LS solution can be solved efficiently (see Appendix
\ref{app:LS}),\footnote{We note that the fast approach of the MGS for
solving LS problem can also be applied to the OMP, ROMP, and StOMP, but with the exception of the CoSaMP. This is because the CoSaMP algorithm computes a completely new LS solution over the distinct subset of $\mathbf{\Phi}$ in each iteration. Nevertheless, other fast
approaches such as Richardson's iteration or conjugate gradient might be applied in the CoSaMP.}
and the associated cost is $4N^2 km + (- 2 N^2 + 5N )m + 2N^3 k^2 +
(-4N^3 + 5 N ^2)k + 3 N^3 - N ^2 - N$ flops.

\item \textbf{Residual update}: For the residual update, the gOMP performs the matrix-vector multiplication $\mathbf{\Phi}_{\Lambda^k} \hat{\mathbf{x}}_{\Lambda^k}$ ($(2Nk - 1)m$ flops) followed by the subtraction ($m$ flops).
\end{itemize}

\begin{figure}[t]
\begin{center}
\includegraphics[width=95mm]{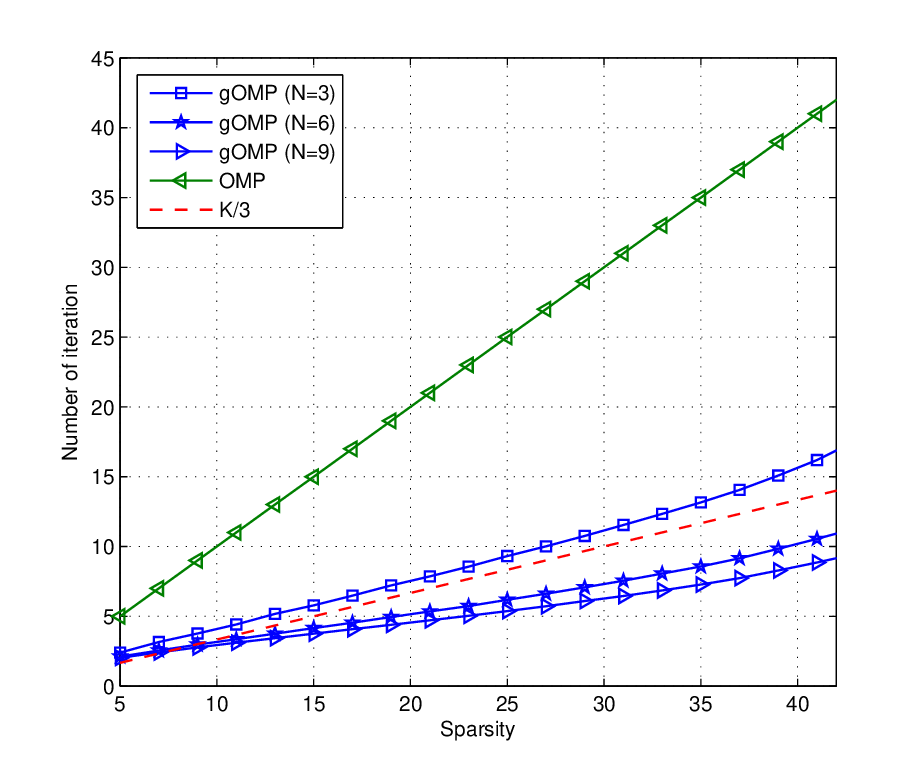}
 \caption{Number of iterations of the OMP and gOMP ($N = 5$) as a
function of sparsity $K$. The red dashed line is the reference curve
indicating $K/3$ iterations.}\label{fig:Niter}
\end{center}
\end{figure}

Table \ref{tab:gOMP_cost} summarizes the complexity of each
operation in the $k$-th iteration of the gOMP. The complexity of the
$k$-th iterations is approximately $2mn + (4N^2 + 2N)km$.
If the gOMP is finished in $s$ iterations, then the complexity of
the gOMP, denoted as $\mathcal{C}_{gOMP}(N,s,m,n)$, becomes
\begin{eqnarray}
\mathcal{C}_{gOMP}(N,s,m,n) & \approx & \sum_{k = 1}^s 2mn + (4N^2 +
2N)km \nonumber \\
    &=& 2s mn + (2N^2 + N)s^2m. \label{eq:costbound} \nonumber
\end{eqnarray}
Noting that $s \leq \min \{K, m/N\}$ and $N$ is a small constant, the complexity of the gOMP can be expressed as $O(Kmn)$.
%
%
%
\begin{table}[t]
\begin{center}
\caption{Complexity of the gOMP algorithm ($k$-th step)} \label{tab:gOMP_cost}
\begin{tabular}{c|c}
\hline Operation & Complexity \\ \hline \hline
Identification                   & $(2m - 1 + N)n - N(N + 1)/2 = O(mn)$ \\
\hline
Estimation of $\hat{\mathbf{x}}_{\Lambda^k}$ & $ \approx 4N^2 km = O(km)$ \\
\hline
Residual update                              & $2Nk m$ \\
\hline \hline
Total                   & $\approx 2mn + (4N^2 + 2N)km = O(mn)$ \\
\hline
\end{tabular}
\end{center}
\end{table}
%
%
%
In practice, however, the iteration number of the gOMP is much smaller than $K$ due to the inclusion of multiple correct indices for each iteration, which saves the complexity of the gOMP substantially.
Indeed, as shown in Fig. \ref{fig:Niter}, the number of iterations
is about $\frac{1}{3}$ of the OMP so that the gOMP has a computational advantage
over the OMP.

\section{Conclusion}
As a cost-effective solution for recovering sparse signals from
compressed measurements, the OMP algorithm has received much
attention in recent years. In this paper, we presented the
generalized version of the OMP algorithm for pursuing efficiency in
reconstructing sparse signals.
Since multiple indices can be identified with no additional
postprocessing operation, the proposed gOMP algorithm lends itself
to parallel-wise processing, which expedites the processing of the
algorithm and thereby reduces the running time. In fact, we
demonstrated in the empirical simulations that the gOMP has excellent
recovery performance comparable to $\ell_1$-minimization technique
with fast processing speed and competitive computational complexity.
We showed from the RIP analysis that if the isometry constant of the sensing matrix satisfies $\delta_{NK} < \frac{\sqrt{N}}{\sqrt{K} + 3 \sqrt{N}}$ then the gOMP algorithm can perfectly recover $K$-sparse signals ($K > 1$) from compressed measurements.
One important point we would like to mention is that the gOMP
algorithm is potentially more effective than what this analysis
tells. Indeed, the bound in \eqref{eq:allsuccess} is derived based
on the worst case scenario where the algorithm selects only one
correct index per iteration (hence requires maximum $K$
iterations). In reality, as observed in the empirical simulations,
it is highly likely that the multiple correct indices are identified for
each iteration and hence the number of iterations is usually much smaller
than that of the OMP.
Therefore, we believe that less strict or probabilistic analysis
will uncover the whole story of the CS recovery performance. Our
future work will be directed towards this avenue.


\appendices

%
%
%
\section{Proof of Lemma \ref{lem:upperbound}} \label{app:upperbound}
\begin{proof}
Let $ w_i $ be the index of the $i$-th largest correlation in magnitude between $\mathbf{r}^k$ and $\{ \varphi_j \}_{j \in F}$ (i.e., columns corresponding to remaining incorrect indices). Also, define the set of indices $W = \left\{ {{w_1},{w_2}, \cdots,{w_N}} \right\}$.
The $\ell_2$-norm of the correlation ${\mathbf{\Phi }}_W' {{\mathbf{r}}^k}$ is expressed as
\begin{eqnarray}
{\left\| {{\mathbf{\Phi }}_W' {{\mathbf{r}}^k}} \right\|_2} &=&
{\left\| {{\mathbf{\Phi }}_W' {\mathbf{P}}_{{\Lambda ^k}}^ \bot
{{\mathbf{\Phi }}_{T - {\Lambda ^k}}}{{\mathbf{x}}_{T - {\Lambda^k}}}} \right\|_2}  \nonumber  \\
&=&
\lefteqn{ \| {\mathbf{\Phi }}_W' {{\mathbf{\Phi }}_{T - {\Lambda
^k}}}{{\mathbf{x}}_{T - {\Lambda ^k}}} } \nonumber \\
& & - {\mathbf{\Phi }}_W' {{\mathbf{P}}_{{\Lambda ^k}}}{{\mathbf{\Phi }}_{T - {\Lambda^k}}}{{\mathbf{x}}_{T - {\Lambda ^k}}} \|_2  \nonumber \\
&\leq& \lefteqn{ \left\| {{\mathbf{\Phi }}_W' {{\mathbf{\Phi }}_{T -
{\Lambda ^k}}}{{\mathbf{x}}_{T - {\Lambda ^k}}}} \right\|_2 } \nonumber \\
& & + {\left\| {{\mathbf{\Phi }}_W' {{\mathbf{P}}_{{\Lambda
^k}}}{{\mathbf{\Phi }}_{T - {\Lambda ^k}}}{{\mathbf{x}}_{T -
{\Lambda ^k}}}} \right\|_2} \label{eq:54}
\end{eqnarray}
where ${\mathbf{P}}_{{\Lambda^k}}^ \bot  = {\mathbf{I}} - {{\mathbf{P}}_{{\Lambda^k}}}$.
Since $W$ and $T - \Lambda^k$ are disjoint (i.e., $W \cap (T -
{\Lambda ^k}) = \emptyset $) and $|W| + |T - \Lambda^k| = N + K -
l$ (note that the number of correct indices in $\Lambda^k$ is $l$ by hypothesis).
Using this together with Lemma \ref{lem:correlationrip}, we have
\begin{eqnarray}
{\left\| {{\mathbf{\Phi }}_W' {{\mathbf{\Phi }}_{T - {\Lambda
^k}}}{{\mathbf{x}}_{T - {\Lambda ^k}}}} \right\|_2} \leq  {\delta
_{N + K - l}}{\left\| {{{\mathbf{x}}_{T - {\Lambda ^k}}}}
\right\|_2}.
\label{eq:j1}
\end{eqnarray}
Similarly, noting that $W \cap {\Lambda ^k} = \emptyset$ and  $|W| + |{\Lambda ^k} |=N + Nk$, we have
\begin{eqnarray}
\lefteqn{ {\left\| {{\mathbf{\Phi }}_W' {{\mathbf{P}}_{{\Lambda
^k}}}{{\mathbf{\Phi }}_{T - {\Lambda ^k}}}{{\mathbf{x}}_{T -
{\Lambda ^k}}}} \right\|_2} } \nonumber \\
& & \leq {\delta _{N + Nk}}{\left\|
{{\mathbf{\Phi }}_{{\Lambda ^k}}^\dag {{\mathbf{\Phi }}_{T -
{\Lambda ^k}}}{{\mathbf{x}}_{T - {\Lambda ^k}}}} \right\|_2}
\label{eq:j2}
\end{eqnarray}
where
\begin{eqnarray} \label{eq:ji}
\lefteqn{ {\left\| {{\mathbf{\Phi }}_{{\Lambda ^k}}^\dag {{\mathbf{\Phi }}_{T
- {\Lambda ^k}}}{{\mathbf{x}}_{T - {\Lambda ^k}}}} \right\|_2} } \nonumber \\
&=&
{\left\| {{{\left( {{\mathbf{\Phi }}_{{\Lambda ^k}}'{{\mathbf{\Phi
}}_{{\Lambda ^k}}}} \right)}^{ - 1}}{\mathbf{\Phi }}_{{\Lambda
^k}}'{{\mathbf{\Phi }}_{T - {\Lambda ^k}}}{{\mathbf{x}}_{T -
{\Lambda ^k}}}} \right\|_2}  \\
& \leq & \label{eq:ghg1} \frac{1}{{1 -  \delta _{Nk} }}  {\left\| {{\mathbf{\Phi }}_{{\Lambda ^k}}'{{\mathbf{\Phi }}_{T - {\Lambda ^k}}}{{\mathbf{x}}_{T - {\Lambda ^k}}}} \right\|_2}   \\
& \leq & \label{eq:ghg2} \frac{{\delta _{Nk + K - l}}}{{1 - \delta
_{Nk} }} {\left\| {{\mathbf{x}}_{T - {\Lambda ^k}}} \right\|_2},
\end{eqnarray}
where (\ref{eq:ghg1}) and (\ref{eq:ghg2}) follow from Lemma \ref{lem:rips} and Lemma \ref{lem:correlationrip}, respectively.
Since ${\Lambda ^k}$ and $T - {\Lambda ^k}$ are disjoint, if the number of correct indices in ${\Lambda^k}$ is $l$, then $\left| {{\Lambda ^k} \cup \left( {T - {\Lambda ^k}} \right)} \right|
= Nk + K - l$.

Using (\ref{eq:54}), (\ref{eq:j1}), (\ref{eq:j2}), and (\ref{eq:ghg2}), we have
\begin{eqnarray}
{\left\| {{\mathbf{\Phi }}_W' {{\mathbf{r}}^k}} \right\|_2} \,\,\,\,\,\,\,\,\,\,\,\,\,\,\,\,\,\,\,\,\,\,\,\,\,\,\,\,\,\,\,\,\,\,\,\,\,\,\,\,\,\,\,\,\,\,\,\,\,\,\,\,\,\,\,\,\,\,\,\,\,\,\,\,\,\,\,\,\,\,\,\,\,\,\,\,\,\,\,\,\,\,\,\,\,\,\,\,\,\,\,
\nonumber \\
\leq
\left( {{\delta _{N + K - l}} + \frac{{{\delta _{N + Nk}}{\delta
_{Nk + K - l}}}} {{1 - {\delta _{Nk}}}}} \right){\left\|
{{{\mathbf{x}}_{T - {\Lambda ^k}}}} \right\|_2}.  \label{eq:left}
\end{eqnarray}

Since $\alpha_i = |\langle \varphi_{w_i}, \mathbf{r}^k \rangle|$, we have
$\|{{\mathbf{\Phi }}_W' {{\mathbf{r}}^k}}\|_1 = \sum\limits_{i = 1}^N { \alpha_i}$.
Now, using the norm inequality\footnote{$\left\| \mathbf{u}
\right\|_1 \leq \sqrt{\left\| \mathbf{u} \right\|_0} \left\|
\mathbf{u} \right\|_2$.}, we have
\begin{eqnarray} \label{eq:dontcare}
{\left\| {{\mathbf{\Phi }}_W' {{\mathbf{r}}^k}} \right\|_2}   \geq
   \frac{1}{{\sqrt N }}\sum\limits_{i = 1}^N { \alpha_i  }.
\end{eqnarray}
Since  $ \alpha_1  \geq  \alpha_2  \geq \cdots \geq  \alpha_N $, it is clear that
${\left\| {{\mathbf{\Phi }}_W' {{\mathbf{r}}^k}} \right\|_2}  \geq
\frac{1} {{\sqrt N }} N \alpha_N   = \sqrt N   \alpha_N$.
Hence, we have
\begin{eqnarray}
\left( {{\delta _{N + K - l}} + \frac{{{\delta _{N +
Nk}}{\delta _{Nk + K - l}}}} {{1 - {\delta _{Nk}}}}}
\right)  {\left\| {{{\mathbf{x}}_{T - {\Lambda
^k}}}} \right\|_2} \nonumber \\
\geq \sqrt N \alpha_N,
\end{eqnarray}
and
\begin{eqnarray}
\alpha_N  \leq \left( {{\delta _{N + K - l}} + \frac{{{\delta _{N +
Nk}}{\delta _{Nk + K - l}}}} {{1 - {\delta _{Nk}}}}}
\right) \frac{{{{\left\| {{{\mathbf{x}}_{T - {\Lambda ^k}}}}
\right\|}_2}}} {{\sqrt N }}. \label{eq:uppercw}
\end{eqnarray}
\end{proof}

%
%
%
%
\section{Proof of Lemma \ref{lem:lowerbound}} \label{app:lowerbound}
\begin{proof}
Since $\beta_1$ is the largest correlation in magnitude between $\mathbf{r}^k$ and $\{\varphi_j\}_{j \in   T - \Lambda^k}$, it is clear that
\begin{eqnarray}
\beta_1 \geq \left| {\left\langle {{\varphi _j},{{\mathbf{r}}^k}} \right\rangle
} \right|
\end{eqnarray}
for all $j \in T - \Lambda^k$, and hence
\begin{eqnarray}
  {\beta _1}  &\geq&  \frac{1}
{{\sqrt {K - l} }}{\left\| {{{{\mathbf{\Phi '}}}_{T - {\Lambda ^k}}}{{\mathbf{r}}^k}} \right\|_2} \\
& = & \frac{1}
{{\sqrt {K - l} }}{\left\| {{{{\mathbf{\Phi '}}}_{T - {\Lambda ^k}}}{\mathbf{P}}_{{\Lambda ^k}}^ \bot  {{\mathbf{\Phi x}}} } \right\|_2} \label{eq:ggeessaa}
\end{eqnarray}
where \eqref{eq:ggeessaa} follows from ${{\mathbf{r}}^k} = \mathbf{y} - \mathbf{\Phi}_{\Lambda^{k}} \mathbf{\Phi}_{\Lambda^{k}}^\dag \mathbf{y} = {\mathbf{P}}_{{\Lambda ^k}}^ \bot {\mathbf{\Phi x}}$.
Using the triangle inequality,
\begin{eqnarray}
 {\beta _1}  &\geq&  \frac{ \left\| {{{{\mathbf{\Phi '}}}_{T - {\Lambda ^k}}}{\mathbf{P}}_{{\Lambda ^k}}^ \bot {{\mathbf{\Phi }}_{T - {\Lambda ^k}}}{{\mathbf{x}}_{T - {\Lambda ^k}}}} \right\|_2 }
{{\sqrt {K - l} }} \\
&\geq& \lefteqn{ \frac{1}{ \sqrt{K - l} }  ( {{\left\| {{{\mathbf{\Phi '}}}_{T - {\Lambda ^k}}}{{\mathbf{\Phi }}_{T - {\Lambda ^k}}}{\mathbf{x}}_{T - {\Lambda ^k}} \right\|}_2} } \nonumber \\
& &- {{\left\| {{{{\mathbf{\Phi '}}}_{T - {\Lambda ^k}}}{{\mathbf{P}}_{{\Lambda ^k}}}{{\mathbf{\Phi }}_{T - {\Lambda ^k}}}{{\mathbf{x}}_{T - {\Lambda ^k}}}} \right\|}_2} ). \label{eq:ggeeewwwq}
\end{eqnarray}
Since $|T - {\Lambda ^k}| = K - l$,
\begin{eqnarray} \label{eq:geaig1aa}
\lefteqn{\left\| {{{{\mathbf{\Phi '}}}_{T - {\Lambda ^k}}}{{\mathbf{\Phi }}_{T - {\Lambda ^k}}}{{\mathbf{x}}_{T - {\Lambda ^k}}}} \right\|_2} \nonumber \\
& & \geq  \left( {1 - {\delta _{K - l}}} \right){\left\| {{{\mathbf{x}}_{T - {\Lambda ^k}}}} \right\|_2},
\end{eqnarray}
and
\begin{eqnarray}
  \lefteqn{\left\| {{{{\mathbf{\Phi '}}}_{T - {\Lambda ^k}}}{{\mathbf{P}}_{{\Lambda ^k}}}{{\mathbf{\Phi }}_{T - {\Lambda ^k}}}{{\mathbf{x}}_{T - {\Lambda ^k}}}} \right\|_2} \nonumber \\
  &\leq&  {\left\| {{{{\mathbf{\Phi '}}}_{T - {\Lambda ^k}}}} \right\|_2}{\left\| {{{\mathbf{P}}_{{\Lambda ^k}}}{{\mathbf{\Phi }}_{T - {\Lambda ^k}}}{{\mathbf{x}}_{T - {\Lambda ^k}}}} \right\|_2}   \\
\label{eq:llls233111} &\leq&  \sqrt {1 + {\delta _{K - l}}} {\left\| {{{\mathbf{P}}_{{\Lambda ^k}}}{{\mathbf{\Phi }}_{T - {\Lambda ^k}}}{{\mathbf{x}}_{T - {\Lambda ^k}}}} \right\|_2}
\end{eqnarray}
where \eqref{eq:llls233111} is from $${\left\| {{{{\mathbf{\Phi '}}}_{T - {\Lambda ^k}}}} \right\|_2} \leq  \sqrt {{\lambda _{\max }}\left( {{{{\mathbf{\Phi '}}}_{T - {\Lambda ^k}}}{{\mathbf{\Phi }}_{T - {\Lambda ^k}}}} \right)}  \leq  \sqrt {1 + {\delta _{K - l}}}. $$ 
Furthermore, we observe that
\begin{eqnarray}  \label{eq:33}
\lefteqn{ \left\| {{{\mathbf{P}}_{{\Lambda ^k}}}{{\mathbf{\Phi }}_{T -
{\Lambda ^k}}}{{\mathbf{x}}_{T - {\Lambda ^k}}}} \right\|_2 } \nonumber \\
& & = \left\| {{{\mathbf{\Phi }}_{{\Lambda ^k}}}{{\left( {{\mathbf{\Phi }}_{{\Lambda ^k}}'{{\mathbf{\Phi }}_{{\Lambda ^k}}}} \right)}^{ - 1}}{\mathbf{\Phi }}_{{\Lambda ^k}}'{{\mathbf{\Phi }}_{T - {\Lambda ^k}}}{{\mathbf{x}}_{T - {\Lambda ^k}}}} \right\|_2  \\
& & \leq  \label{eq:rrrr1}  \lefteqn{\sqrt{{1 + {\delta _{Nk}}}}} \nonumber \\ & & \times \left\| {{{\left( {{\mathbf{\Phi }}_{{\Lambda ^k}}'{{\mathbf{\Phi }}_{{\Lambda ^k}}}} \right)}^{ - 1}}{\mathbf{\Phi }}_{{\Lambda ^k}}'{{\mathbf{\Phi }}_{T - {\Lambda ^k}}}{{\mathbf{x}}_{T - {\Lambda ^k}}}} \right\|_2 \\
& & \leq \label{eq:rrrr2} \frac{\sqrt{{1 + {\delta _{Nk}}}}}
{{{{  {1 - {\delta _{Nk}}}  } }}}\left\| {{\mathbf{\Phi }}_{{\Lambda ^k}}'{{\mathbf{\Phi }}_{T - {\Lambda ^k}}}{{\mathbf{x}}_{T - {\Lambda ^k}}}} \right\|_2   \\
& & \leq  \label{eq:rrrr3} \frac{{\delta _{Nk + K - l} }   \sqrt{1 +
\delta_{Nk}} } { 1 - {\delta _{Nk}} }  \left\| {{\mathbf{x}}_{T -
{\Lambda ^k}}} \right\|_2,
\end{eqnarray}
where (\ref{eq:rrrr1}) is from the definition of RIP and
(\ref{eq:rrrr2}) and (\ref{eq:rrrr3}) follow from Lemma \ref{lem:rips} and \ref{lem:correlationrip}, respectively (${\Lambda^k}$ and $T - \Lambda^k$ are disjoint sets and $\left| {{\Lambda ^k}
\cup \left( {T - {\Lambda ^k}} \right)} \right| = Nk + K - l$).
By combining \eqref{eq:llls233111} and \eqref{eq:rrrr3}, we obtain
\begin{eqnarray}
  \lefteqn{\left\| {{{{\mathbf{\Phi '}}}_{T - {\Lambda ^k}}}{{\mathbf{P}}_{{\Lambda ^k}}}{{\mathbf{\Phi }}_{T - {\Lambda ^k}}}{{\mathbf{x}}_{T - {\Lambda ^k}}}} \right\|_2} \nonumber \\
  & \leq & \!\!\!\! \frac{  \sqrt {1 \!+\! {\delta _{K - l}}}  \sqrt{1 \!+\! \delta_{Nk}} {\delta _{Nk + K - l} }} { 1 - {\delta _{Nk}} }  \left\| {{\mathbf{x}}_{T - {\Lambda ^k}}} \right\|_2 \!.\label{eq:jiafenhaha}
\end{eqnarray}

Finally, by combining \eqref{eq:ggeeewwwq} \eqref{eq:geaig1aa}, and \eqref{eq:jiafenhaha}, we obtain
\begin{eqnarray}
\lefteqn{ \beta_1 \geq \frac{{{{\left\| {{{\mathbf{x}}_{T - {\Lambda ^k}}}} \right\|}_2}}}{{\sqrt {K - l} }} } \nonumber \\
 & \times & \!\!\!\!\!\!\! \left( \!1 \!-\! {\delta _{K - l}} \!-\! \frac{ \sqrt {1 \!+\! {\delta _{K - l}}} \sqrt {1 \!+\! \delta_{Nk}} \delta_{Nk + K - l} }{ 1 - \delta _{Nk} } \! \right)\!\! .
\end{eqnarray}

\end{proof}

%
%
%
%
%
%
%
%
%
%
%
%
\section{Proof of Theorem \ref{thm:errorbound1}} \label{app:errorbound1}
\begin{proof}
We observe that
\begin{eqnarray}
\label{eq:ripggood1}{\left\| {{\mathbf{x}} - {\mathbf{\hat x}}} \right\|_2} &\leq&  \frac{ \left\| {{\mathbf{\Phi }}\left( {{\mathbf{x}} - {\mathbf{\hat x}}} \right)} \right\|_2 }
{{\sqrt {1 - {\delta _{\left| {{\Lambda ^s} \cup T} \right|}}} }} \\
\label{eq:ripggood2} &\leq&  \frac{ \left\| {{\mathbf{\Phi }}\left( {{\mathbf{x}} - {\mathbf{\hat x}}} \right)} \right\|_2 }
{{\sqrt {1 - {\delta _{Ns + K}}} }}\\
\label{eq:ripggood3} &\leq&  \frac{ \left\| {{\mathbf{\Phi x}} - {{\mathbf{\Phi }}_{{\Lambda ^s}}}{\mathbf{\Phi }}_{{\Lambda ^s}}^\dag {\mathbf{y}}} \right\|_2 }
{{\sqrt {1 - {\delta _{Ns + K}}} }} \\
\label{eq:ripggood4} & = & \frac{ \left\| {{\mathbf{y}} - {\mathbf{v}} - {{\mathbf{\Phi
}}_{{\Lambda ^s}}}{\mathbf{\Phi }}_{{\Lambda ^s}}^\dag {\mathbf{y}}}
\right\|_2 } {{\sqrt {1 - {\delta _{Ns + K}}}
}}
\end{eqnarray}
where \eqref{eq:ripggood1} is due to the definition of the RIP,
\eqref{eq:ripggood2} follows from the fact that ${{\mathbf{x}} -
{\mathbf{\hat x}}}$ is at most $(Ns + K)$-sparse ($\delta _{\left|
{{\Lambda ^s} \cup T} \right|} \leq \delta _{Ns + K}$), and
\eqref{eq:ripggood4} is from $\mathbf{\hat x}_{{\Lambda ^s}} =
{\mathbf{\Phi }}_{{\Lambda ^s}}^\dag {\mathbf{y}}$.

Since ${\mathbf{y}} - {{\mathbf{\Phi }}_{{\Lambda ^s}}}{\mathbf{\Phi }}_{{\Lambda ^s}}^\dag {\mathbf{y}} = {\mathbf{y}} - {{\mathbf{P}}_{{\Lambda ^s}}}{\mathbf{y}} = {\mathbf{P}}_{{\Lambda ^s}}^ \bot {\mathbf{y}} = {{\mathbf{r}}^s}$, we further have
\begin{eqnarray}
  {\left\| {{\mathbf{x}} - {\mathbf{\hat x}}} \right\|_2} \!\! &\leq& \!\!\frac{1}
{{\sqrt {1 - {\delta _{Ns + K}}} }}{\left\| {{{\mathbf{r}}^s} - {\mathbf{v}}} \right\|_2}  \\
&\leq& \!\! \frac{1}
{{\sqrt {1 - {\delta _{Ns + K}}} }}\left( {{{\left\| {{{\mathbf{r}}^s}} \right\|}_2} + {{\left\| {\mathbf{v}} \right\|}_2}} \right)  \\
&\leq& \!\! \frac{\epsilon } {{\sqrt {1 - {\delta _{NK}}} }} +
\frac{{{{\left\| {\mathbf{v}} \right\|}_2}}} {{\sqrt {1 - {\delta
_{NK}}} }}
\end{eqnarray}
where the last inequality is due to $Ns + K \leq NK$ (and hence ${\delta _{Ns + K}} \leq  {\delta _{NK}}$) and ${\left\| {{{\mathbf{r}}^s}} \right\|}_2 < \epsilon$.
\end{proof}

%
%
%
%
\section{Proof of Lemma \ref{lem:errorbound1_two}} \label{app:errorbound1_two}
\begin{proof}
Using a triangle inequality,
\begin{eqnarray} \label{eq:jigeren}
    \label{eq:ripggo2od1} {\left\| {{{{\mathbf{\Phi '}}}_W}{{\mathbf{r}}^k}} \right\|_2} & = & {\left\| {{{{\mathbf{\Phi '}}}_W}{\mathbf{P}}_{{\Lambda ^k}}^ \bot \left( {{\mathbf{\Phi x}} + {\mathbf{v}}} \right)} \right\|_2} \\
 \label{eq:ripggo2od2}&\leq& {\left\| {{{{\mathbf{\Phi '}}}_W}{\mathbf{P}}_{{\Lambda ^k}}^ \bot {{\mathbf{\Phi }}_{T - {\Lambda ^k}}}{{\mathbf{x}}_{T - {\Lambda ^k}}}} \right\|_2} \nonumber \\
 & & + {\left\| {{{{\mathbf{\Phi '}}}_W}{\mathbf{P}}_{{\Lambda ^k}}^ \bot {\mathbf{v}}} \right\|_2}  \\
\label{eq:ripggo2od3}  &\leq&  {\left\| {{{{\mathbf{\Phi
'}}}_W}{{\mathbf{\Phi }}_{T - {\Lambda ^k}}}{{\mathbf{x}}_{T -
{\Lambda ^k}}}} \right\|_2} \nonumber \\
& & + {\left\| {{{{\mathbf{\Phi
'}}}_W}{{\mathbf{P}}_{{\Lambda ^k}}}{{\mathbf{\Phi }}_{T - {\Lambda
^k}}}{{\mathbf{x}}_{T - {\Lambda ^k}}}} \right\|_2} \nonumber \\
& & + {\left\|
{{{{\mathbf{\Phi '}}}_W}{\mathbf{P}}_{{\Lambda ^k}}^ \bot
{\mathbf{v}}} \right\|_2}.
\end{eqnarray}
Using (\ref{eq:j1}), (\ref{eq:j2}), and (\ref{eq:ghg2}), ${\left\| {{{{\mathbf{\Phi '}}}_W}{{\mathbf{r}}^k}} \right\|_2}$ is upper bounded by
\begin{eqnarray} \label{eq:JSJSG}
\lefteqn{\left\| {{{{\mathbf{\Phi '}}}_W}{{\mathbf{r}}^k}} \right\|_2} \nonumber \\
& & \leq
\left( {{\delta _{N + K - l}} + \frac{{{\delta _{N + Nk}}{\delta
_{Nk + K - l}}}} {{1 - {\delta _{Nk}}}}} \right){\left\|
{{{\mathbf{x}}_{T - {\Lambda ^k}}}} \right\|_2} \nonumber \\
& & + {\left\|
{{{{\mathbf{\Phi '}}}_W}{\mathbf{P}}_{{\Lambda ^k}}^ \bot
{\mathbf{v}}} \right\|_2}.
\end{eqnarray}
Further, we have
\begin{eqnarray}
 \label{eq:lll1} {\left\| {{{{\mathbf{\Phi '}}}_W}{\mathbf{P}}_{{\Lambda ^k}}^ \bot {\mathbf{v}}} \right\|_2} \!\!\!\! &\leq& \!\!\!\! {\left\| {{{{\mathbf{\Phi '}}}_W}} \right\|_2}{\left\| {{\mathbf{P}}_{{\Lambda ^k}}^ \bot {\mathbf{v}}} \right\|_2}  \\
\label{eq:lll2} &\leq& \!\!\!\! \sqrt {{\lambda _{\max }}\left( {{{{\mathbf{\Phi '}}}_W}{{\mathbf{\Phi }}_W}} \right)} \nonumber \\
& & \times {\left\| {{\mathbf{P}}_{{\Lambda ^k}}^ \bot {\mathbf{v}}} \right\|_2}  \\
\label{eq:lll3}&\leq& \!\!\!\! \sqrt {1 + {\delta _N}} {\left\| {{\mathbf{P}}_{{\Lambda ^k}}^ \bot {\mathbf{v}}} \right\|_2} \\
\label{eq:lll4} &\leq& \!\!\!\! \sqrt {1 + {\delta _N}} {\left\|
{\mathbf{v}} \right\|_2}.
\end{eqnarray}
Plugging \eqref{eq:lll4} into \eqref{eq:JSJSG}, we have
\begin{eqnarray} \label{eq:rwwwwight}
  \lefteqn{\left\| {{{{\mathbf{\Phi '}}}_W}{{\mathbf{r}}^k}} \right\|_2} \nonumber \\
  & & \leq \left( {{\delta _{N + K - l}} + \frac{{{\delta _{N + Nk}}{\delta _{Nk + K - l}}}}
{{1 - {\delta _{Nk}}}}} \right){\left\| {{{\mathbf{x}}_{T - {\Lambda
^k}}}} \right\|_2} \nonumber \\
& & + \sqrt {1 + {\delta _N}} {\left\| {\mathbf{v}}
\right\|_2}.
\end{eqnarray}
Using ${\left\| {{\mathbf{\Phi }}_W' {{\mathbf{r}}^k}} \right\|_2}  \geq \sqrt N \alpha_N$ and \eqref{eq:rwwwwight}, we have
\begin{eqnarray}
\alpha _N & \leq &  \left( {\delta _{N + K - l}} + \frac{{{\delta _{N + Nk}}{\delta _{Nk + K - l}}}}{1 - {\delta _{Nk}}} \right)\frac{{{{\left\| {{{\mathbf{x}}_{T - {\Lambda ^k}}}} \right\|}_2}}} {{\sqrt N }} \nonumber \\
 & & + \frac{{\sqrt {1 + {\delta _N}} {{\left\| {\mathbf{v}} \right\|}_2}}}{{\sqrt N }}.
\end{eqnarray}

Next, we derive a lower bound for $\beta_1$. First, recalling that ${\beta _1} = \mathop {\max }\limits_{j:j \in T - {\Lambda ^k}} \left| {\left\langle {{\varphi _j},{{\mathbf{r}}^k}} \right\rangle } \right|$, we have
\begin{eqnarray}
  {\beta _1} \!\!\! &\geq& \!\!\! \frac{1}
{{\sqrt {K - l} }}{\left\| {{{{\mathbf{\Phi '}}}_{T - {\Lambda ^k}}}{{\mathbf{r}}^k}} \right\|_2} \\
& = & \!\!\! \frac{1} {{\sqrt {K - l} }}{\left\| {{{{\mathbf{\Phi '}}}_{T -
{\Lambda ^k}}}{\mathbf{P}}_{{\Lambda ^k}}^ \bot \left(
{{\mathbf{\Phi x}} + {\mathbf{v}}} \right)} \right\|_2}.
\end{eqnarray}
Using a triangle inequality,
\begin{eqnarray}
 {\beta _1} \!\!\! &\geq& \!\!\! \frac{\left\| {{{{\mathbf{\Phi '}}}_{T - {\Lambda ^k}}}{\mathbf{P}}_{{\Lambda ^k}}^ \bot {{\mathbf{\Phi }}_{T - {\Lambda ^k}}}{{\mathbf{x}}_{T - {\Lambda ^k}}}} \right\|_2}{{\sqrt {K - l} }} \nonumber \\
& & - \frac{\left\| {{{{\mathbf{\Phi '}}}_{T - {\Lambda ^k}}}{\mathbf{P}}_{{\Lambda ^k}}^ \bot {\mathbf{v}}} \right\|_2}{{\sqrt {K - l} }}   \\
&\geq& \!\!\! \frac{{{\left\| {{{{\mathbf{\Phi '}}}_{T - {\Lambda ^k}}}{{\mathbf{\Phi }}_{T - {\Lambda ^k}}}{{\mathbf{x}}_{T - {\Lambda ^k}}}} \right\|}_2}} {{\sqrt {K - l} }} \nonumber \\
& & - \frac{{\left\| {{{{\mathbf{\Phi '}}}_{T - {\Lambda^k}}}{{\mathbf{P}}_{{\Lambda ^k}}}{{\mathbf{\Phi }}_{T - {\Lambda^k}}}{{\mathbf{x}}_{T - {\Lambda ^k}}}} \right\|}_2}{{\sqrt {K - l} }} \nonumber \\
& & - \frac{{\left\|{{{{\mathbf{\Phi '}}}_{T - {\Lambda ^k}}}{\mathbf{P}}_{{\Lambda^k}}^ \bot {\mathbf{v}}} \right\|}_2}{{\sqrt {K - l} }} . \label{eq:111llls3sssss}
\end{eqnarray}
From \eqref{eq:geaig1aa} and \eqref{eq:jiafenhaha}, we have
${\left\| {{{{\mathbf{\Phi '}}}_{T - {\Lambda ^k}}}{{\mathbf{\Phi }}_{T - {\Lambda ^k}}}{{\mathbf{x}}_{T - {\Lambda ^k}}}} \right\|_2} \geq  \left( {1 - {\delta _{K - l}}} \right){\left\| {{{\mathbf{x}}_{T - {\Lambda ^k}}}} \right\|_2}$,
and
\begin{eqnarray} \label{eq:llls3sssss}
  \lefteqn{\left\| {{{{\mathbf{\Phi '}}}_{T - {\Lambda ^k}}}{{\mathbf{P}}_{{\Lambda ^k}}}{{\mathbf{\Phi }}_{T - {\Lambda ^k}}}{{\mathbf{x}}_{T - {\Lambda ^k}}}} \right\|_2} \nonumber \\
  & & \leq \!\! \frac{{\sqrt {1 \!+\! {\delta _{K \!-\! l}}} \! \sqrt {1 \!+\! {\delta _{Nk}}} {\delta _{Nk + K - l}}}}
{{1 - {\delta _{Nk}}}}{\left\| \! {{{\mathbf{x}}_{T \!-\! {\Lambda ^k}}}} \! \right\|_2}.
\end{eqnarray}
Also,
\begin{eqnarray}
  {\left\| {{{{\mathbf{\Phi '}}}_{T - {\Lambda ^k}}}{\mathbf{P}}_{{\Lambda ^k}}^ \bot {\mathbf{v}}} \right\|_2} \!\!\! &\leq& \!\!\! {\left\| {{{{\mathbf{\Phi '}}}_{T - {\Lambda ^k}}}} \right\|_2}{\left\| {{\mathbf{P}}_{{\Lambda ^k}}^ \bot {\mathbf{v}}} \right\|_2}  \\
&\leq& \!\!\! \sqrt {1 + {\delta _{K - l}}} {\left\| {{\mathbf{P}}_{{\Lambda ^k}}^ \bot {\mathbf{v}}} \right\|_2} \\
\label{eq:geaig3}  &\leq& \!\!\! \sqrt {1 + {\delta _{K - l}}} {\left\|
{\mathbf{v}} \right\|_2}.
\end{eqnarray}
Finally, by combining \eqref{eq:111llls3sssss}, \eqref{eq:llls3sssss} and \eqref{eq:geaig3}, we obtain
\begin{eqnarray}
{\beta _1} &\geq & \left(\!\! {1 - {\delta _{K - l}} - \frac{{\sqrt {1 +
{\delta _{K - l}}} \sqrt {1 + {\delta _{Nk}}} {\delta _{Nk + K -
l}}}} {{1 - {\delta _{Nk}}}}} \!\! \right) \nonumber \\
& & \times \frac{{{{\left\|
{{{\mathbf{x}}_{T - {\Lambda ^k}}}} \right\|}_2}}} {{\sqrt {K - l}
}} 
- \frac{{\sqrt {1 + {\delta _{K - l}}} {{\left\| {\mathbf{v}}
\right\|}_2}}} {{\sqrt {K - l} }}.
\end{eqnarray}
\end{proof}

\section{Proof of Lemma \ref{lem:errorbound1_two1}} \label{app:errorbound1_two1}

\begin{proof}
Using the relaxations of the isometry constants in (\ref{eq:monoto}),
we have
\begin{eqnarray}
  {\alpha _N} &\leq&  \left( {{\delta _{N + K - l}} + \frac{{{\delta _{N + Nk}}{\delta _{Nk + K - l}}}}
{{1 - {\delta _{Nk}}}}} \right) \nonumber   \\
& & \times \frac{{{{\left\| {{{\mathbf{x}}_{T - {\Lambda ^k}}}} \right\|}_2}}} {{\sqrt N }} + \frac{{\sqrt {1 +{\delta _N}} {{\left\| {\mathbf{v}} \right\|}_2}}}
{{\sqrt N }}    \nonumber   \\
&\leq&  \left( {{\delta _{NK}} + \frac{{{\delta _{NK}}{\delta
_{NK}}}} {{1 - {\delta _{NK}}}}} \right)\frac{{{{\left\|
{{{\mathbf{x}}_{T - {\Lambda ^k}}}} \right\|}_2}}} {{\sqrt N }} \nonumber   \nonumber    \\
& & +
\frac{{\sqrt {1 + {\delta _N}} {{\left\| {\mathbf{v}} \right\|}_2}}}
{{\sqrt N }}    \nonumber   \\
& = & \frac{{{\delta _{NK}}}} {{1 - {\delta _{NK}}}}\frac{{{{\left\|
{{{\mathbf{x}}_{T - {\Lambda ^k}}}} \right\|}_2}}} {{\sqrt N }}
+ \frac{{\sqrt {1 + {\delta _N}} {{\left\| {\mathbf{v}} \right\|}_2}}}
{{\sqrt N }}
\end{eqnarray}
and
\begin{eqnarray}
  {\beta _1} &\geq&  \left( {1 - {\delta _{K - l}} - \frac{{\sqrt {1 + {\delta _{K - l}}} \sqrt {1 + {\delta _{Nk}}} {\delta _{Nk + K - l}}}}
{{1 - {\delta _{Nk}}}}} \right) \nonumber   \\
    & & \times \frac{{{{\left\| {{{\mathbf{x}}_{T -
{\Lambda ^k}}}} \right\|}_2}}} {{\sqrt {K - l} }} - \frac{{\sqrt {1
+ {\delta _{K - l}}} {{\left\| {\mathbf{v}} \right\|}_2}}}
{{\sqrt {K - l} }} \nonumber \\
&\geq&  \left( {1 - {\delta _{NK}} - \frac{{\sqrt {1 + {\delta
_{NK}}} \sqrt {1 + {\delta _{NK}}} {\delta _{NK}}}} {{1 - {\delta
_{NK}}}}} \right)   \nonumber   \\
& & \times \frac{{{{\left\| {{{\mathbf{x}}_{T - {\Lambda
^k}}}} \right\|}_2}}} {{\sqrt {K - l} }} - \frac{{\sqrt {1 + {\delta
_{K - l}}} {{\left\| {\mathbf{v}} \right\|}_2}}}
{{\sqrt {K - l} }} \nonumber \\
&=& \frac{{1 - 3{\delta _{NK}}}} {{1 - {\delta
_{NK}}}}\frac{{{{\left\| {{{\mathbf{x}}_{T - {\Lambda ^k}}}}
\right\|}_2}}} {{\sqrt {K - l} }} - \frac{{\sqrt {1 + {\delta _{K -
l}}} {{\left\| {\mathbf{v}} \right\|}_2}}} {{\sqrt {K - l} }}.
\end{eqnarray}

The bounds on $\alpha_N$ and $\beta_1$ imply that a sufficient condition of $\alpha_N < \beta_1$ is
\begin{eqnarray} \frac{{1 - 3{\delta _{NK}}}}
{{1 - {\delta _{NK}}}}\frac{{{{\left\| {{{\mathbf{x}}_{T - {\Lambda
^k}}}} \right\|}_2}}} {{\sqrt {K - l} }} - \frac{{\sqrt {1 + {\delta
_{K - l}}} {{\left\| {\mathbf{v}} \right\|}_2}}} {{\sqrt {K - l} }}
 \nonumber  \\
 > \frac{{{\delta _{NK}}}} {{1 - {\delta _{NK}}}}\frac{{{{\left\|
{{{\mathbf{x}}_{T - {\Lambda ^k}}}} \right\|}_2}}} {{\sqrt N }} +
\frac{{\sqrt {1 + {\delta _N}} {{\left\| {\mathbf{v}} \right\|}_2}}}
{{\sqrt N }}.
\end{eqnarray}
After some manipulations, we have
\begin{eqnarray}
\lefteqn{{\left\| {{\mathbf{x}}_{T-{{\Lambda }^{k}}}} \right\|}_{2} >}  \nonumber   \\
& & \!\!\!\!\!\!\!\!\!\!\!\! \frac{\left(\! \sqrt{1 \!+\! {{\delta}_{K-l}}} \!+\!\! \sqrt{\frac{K}{N} \!+\! \frac{K-l}{N} {{\delta }_{N}}} \! \right) \!\! \left( 1 \!-\! {{\delta }_{NK}} \right)}{1-3{{\delta
}_{NK}}-\sqrt{\frac{K-l}{N} }{{\delta }_{NK}}} \! {{\| \mathbf{v} \|}_{2}}.
\end{eqnarray}
Since $K-l \leq K$, this condition is guaranteed if
\begin{align} \label{eq:ggaalss1111l}
&{{\left\| {{\mathbf{x}}_{T-{{\Lambda }^{k}}}} \right\|}_{2}}>      \nonumber   \\
&\frac{\left( \sqrt{1+{{\delta
}_{K}}}+\sqrt{\frac{K}{N} \left({1 + {\delta }_{N}}\right) }
\right)\left( 1-{{\delta }_{NK}} \right)}{1-3{{\delta
}_{NK}}-\sqrt{\frac{K}{N} }{{\delta }_{NK}}}{{\left\| \mathbf{v}
\right\|}_{2}}.
\end{align}
\end{proof}

%
%

\section{Computational cost for the ``estimate'' step of gOMP} \label{app:LS}

In the $k$-th iteration, the gOMP estimates the nonzero elements of
$\mathbf{x}$ by solving an LS problem,
\begin{equation}
\label{eq:LS} {{\mathbf{\hat x}}_{\Lambda^{k}}} = \arg \mathop
{\min}\limits_{\mathbf{x}}{\left\|\mathbf{y}-\mathbf{\Phi}_{\Lambda^{k}}\mathbf{x}\right\|}_{2}
= \left(\mathbf{\Phi}_{\Lambda^k}'
\mathbf{\Phi}_{\Lambda^k}\right)^{-1} \mathbf{\Phi}_{\Lambda^k}'
\mathbf{y}.
\end{equation}
To solve \eqref{eq:LS}, we employ the MGS algorithm in which the QR
decomposition of previous iteration is maintained and, therefore,
the computational cost can be reduced.

Without loss of generality, we assume $\mathbf{\Phi}_{\Lambda^k} = {\left(
\begin{matrix}
\varphi_1 & \varphi_2 & \cdots &
\varphi_{Nk}\\
\end{matrix} \right)}$.
The QR decomposition of $\mathbf{\Phi}_{\Lambda^k}$ is given by
\begin{eqnarray*}
\mathbf{\Phi}_{\Lambda^k} = \mathbf{Q R}
\end{eqnarray*}
where $\mathbf{Q} = {\left(
\begin{matrix}
\mathbf{q}_1& \mathbf{q}_2& \cdots &
\mathbf{q}_{Nk}\\
\end{matrix} \right)} \in \mathbb{R}^{m \times {Nk}}$ consists of ${Nk}$
orthonormal columns and $\mathbf{R}\in
\mathbb{R}^{{Nk} \times {Nk}}$ is an upper triangular matrix,
$$ \mathbf{R} = {\left(
\begin{matrix}
   \langle \mathbf{q}_1, \varphi_1 \rangle & \langle \mathbf{q}_2, \varphi_2 \rangle & \cdots  & \langle \mathbf{q}_1,  \varphi_{Nk} \rangle  \\
   0   & \langle \mathbf{q}_2, \varphi_2 \rangle & \cdots  & \langle \mathbf{q}_2, \varphi_{Nk} \rangle  \\
       &   & \cdots  &  \\
   0   &  0& \cdots  & \langle \mathbf{q}_{Nk}, \varphi_{Nk} \rangle  \\
\end{matrix} \right)}.$$
For notation simplicity we denote $R_{i,j} = \langle \mathbf{q}_i,
\varphi_j \rangle $ and $p = N(k - 1)$. In addition, we denote the
QR decomposition of the $(k - 1)$-th iteration as
$\mathbf{\Phi}_{\Lambda^{k - 1}} = \mathbf{Q_{-1}} \mathbf{R_{-1}}$.
Then it is clear that
\begin{eqnarray}
\mathbf{Q} = {\left(
\begin{matrix}
\mathbf{Q_{-1}} & \mathbf{Q_0}
\end{matrix} \right)} \hspace{2mm} \mbox{and} \hspace{2mm}
\mathbf{R}  =  {\left(
\begin{matrix}
\mathbf{R_{-1}} & \mathbf{R_a}  \\
\mathbf{0} & \mathbf{R_b}  \\
\end{matrix} \right)}.
\end{eqnarray}
where $\mathbf{Q_0} = {\left(
\begin{matrix}
\mathbf{q}_{p + 1} & \cdots & \mathbf{q}_{Nk}
\end{matrix} \right)} \in \mathbb{R}^{m \times N}$
and $\mathbf{R_a}$ and $\mathbf{R_b}$ are given by
\begin{eqnarray}
\mathbf{R_a} &=& {\left(
\begin{matrix}
R_{1, p + 1}          &\cdots & R_{1, Nk}         \\
           \vdots   &   &     \vdots      \\
 R_{p, p + 1} &\cdots & R_{p, Nk}\\
\end{matrix} \right)}, \nonumber    \\
\mathbf{R_b} &=& {\left(
\begin{matrix}
 R_{p + 1, p + 1} &\cdots & R_{p + 1, Nk}\\
                             &\ddots & \vdots    \\
               \mathbf{0}            &       & R_{Nk, Nk}        \\
\end{matrix} \right)}.
\end{eqnarray}
Applying $\mathbf{\Phi}_{\Lambda^k} = \mathbf{Q R}$ to
\eqref{eq:LS}, we have
\begin{eqnarray}
\label{eq:LS1} {{\mathbf{\hat x}}_{\Lambda^{k}}} = \left(\mathbf{R}'
\mathbf{R}\right)^{-1} \mathbf{R}' \mathbf{Q}' \mathbf{y}.
\end{eqnarray}
We count the cost of solving \eqref{eq:LS1} in the following steps.
Here we assess the computational cost by counting
floating-point operations (flops). That is, each $ +, -, *, /,\sqrt{\hspace{3mm}} $ counts as one flop.

\begin{itemize}
\item \textbf{Cost of QR decomposition}: \\
To obtain $\mathbf{Q}$ and $\mathbf{R}$, one only needs to compute
$\mathbf{Q_0}$, $\mathbf{R_a}$ and $\mathbf{R_b}$ since the previous
data, $\mathbf{Q_{-1}}$ and $ \mathbf{R_{-1}}$ are stored. For $j =
1$ to $N$, we sequentially calculate
\begin{equation*}
\{R_{i, p + j}\}_{i = 1,2,\cdots, p + j - 1} = \{\left\langle
\mathbf{q}_i, \varphi_j
\right\rangle\}_{i = 1,2,\cdots, p + j - 1},
\end{equation*}
\begin{eqnarray*}
\hat{\mathbf{q}}_{p + j} &=& \varphi_{p + j} - \sum_{i = 1}^{p + j -1} R_{i,p + j} \mathbf{q}_{i}, \\
 \mathbf{q}_{p + j} &=& \frac{\hat{\mathbf{q}}_{p + j}}{\| \hat{\mathbf{q}}_{p + j} \|_2},  \\
\\
R_{p + j, p + j} &=& \left\langle \mathbf{q}_{p + j},
\varphi_{p + j} \right\rangle.
\end{eqnarray*}
Taking $j = 1$ for example. One first computes $\{R_{i, p + 1}\}_{i
= 1, 2, \cdots, p}$ using $\mathbf{Q_{-1}}$ (requires $p(2m - 1)$
flops) and then computes $ \hat{\mathbf{q}}_{p + 1} = \varphi_{p +
1} - \sum_{i = 1}^p R_{i,p + 1} \mathbf{q}_{i}$ (requires $2mp$
flops). Then, normalization of $\hat{\mathbf{q}}_{p + 1}$ requires $3m$
flops.
Finally, computing $R_{p + 1, p + 1}$ requires $2m - 1$
flops.
The cost of this example amounts to $4mp + 5m - p - 1$.
Similarly, one can calculate the other data in $\mathbf{Q_0}$ and
${\left(
\begin{matrix}
\mathbf{R_a}  & \mathbf{R_b}  \\
\end{matrix} \right)}'$.
In summary, the cost for this QR factorization becomes
\begin{eqnarray*}
\mathcal{C}_{\mathbf{QR}} =  4{N}^{2}mk - 2m{N}^{2} + 3mN -
{N}^{2}k + \frac{1}{2} {N}^{2} -  \frac{1}{2}N.
\end{eqnarray*}

\item \textbf{Cost of calculating} $\mathbf{Q}' \mathbf{y}$ \\
Since $\mathbf{Q} = {\left(
\begin{matrix}
\mathbf{Q_{-1}} & \mathbf{Q_0}
\end{matrix} \right)}$, we have
\begin{eqnarray*}
\mathbf{Q}' \mathbf{y} = {\left(
\begin{matrix}
\mathbf{Q'_{- 1}} \mathbf{y} \\
\mathbf{Q'_0} \mathbf{y}  \\
\end{matrix} \right)}.
\end{eqnarray*}
By reusing the data $\mathbf{Q}'_{- 1} \mathbf{y}$, $\mathbf{Q}'
\mathbf{y}$ is solved with
\begin{eqnarray*}
\mathcal{C}_{1} = N(2m - 1).
\end{eqnarray*}

\item \textbf{Cost of calculating} $\mathbf{R}' \mathbf{Q}' \mathbf{y}$: \\
Applying $\mathbf{R}'$ to the vector $\mathbf{Q}' \mathbf{y}$, we
have
\begin{eqnarray*}
\mathbf{R}' \mathbf{Q}' \mathbf{y} = {\left(
\begin{matrix}
\mathbf{R'_{-1}} \mathbf{Q'_{- 1}} \mathbf{y} \\
\mathbf{R'_a} \mathbf{Q}'_{- 1} \mathbf{y} + \mathbf{R'_b} \mathbf{Q}'_{0} \mathbf{y}\\
\end{matrix} \right)}.
\end{eqnarray*}
Since the data $\mathbf{R'_{-1}} \mathbf{Q'_{- 1}} \mathbf{y}$ can
be reused, $\mathbf{R}' \mathbf{Q}' \mathbf{y}$ is solved with
\begin{eqnarray*}
\mathcal{C}_{2} = 2N^2k - N^2.
\end{eqnarray*}

\item \textbf{Cost of calculating} $\left(\mathbf{R}'
\mathbf{R}\right)^{-1}$ \\
Since $\mathbf{R}$ is an upper triangular matrix, $
\left(\mathbf{R}' \mathbf{R}\right)^{-1} = (\mathbf{R}')^{-1}
\mathbf{R}^{-1}.$
Using the block matrix inversion formula, we have
\begin{eqnarray*} \label{eq:Rj4}
\mathbf{R}^{-1}  &=& \left(
\begin{matrix}
\mathbf{R_{-1}} & \mathbf{R_a}  \\
\mathbf{0} & \mathbf{R_b}  \\
\end{matrix} \right)^{-1}       \\
&=&
{\left(
\begin{matrix}
(\mathbf{R_{-1}})^{-1} & -(\mathbf{R_{-1}})^{-1} \mathbf{R_a} (\mathbf{R_b})^{-1} \\
\mathbf{0} & (\mathbf{R_b})^{-1} \\
\end{matrix} \right)}.
\end{eqnarray*}
\normalsize Then we calculate $\left(\mathbf{R}'
\mathbf{R}\right)^{-1} = (\mathbf{R}')^{-1} \mathbf{R}^{-1} $, i.e.,
\begin{eqnarray*}
\left(\mathbf{R}' \mathbf{R}\right)^{-1} = {\left(
\begin{matrix}
M_1 & M_2\\
M_3 & M_4 \\
\end{matrix} \right)}
\end{eqnarray*}
where
\begin{eqnarray}
M_1 &=& (\mathbf{R'_{-1}})^{-1}  (\mathbf{R_{-1}})^{-1},\nonumber \\
M_2 &=& - (\mathbf{R'_{-1}})^{-1} (\mathbf{R_{-1}})^{-1} \mathbf{R_a} (\mathbf{R_b})^{-1},\nonumber \\
M_3 &=& -(\mathbf{R'_b})^{-1}  \mathbf{R'_a} (\mathbf{R'_{-1}})^{-1}  (\mathbf{R_{-1}})^{-1}, \nonumber \\
M_4 &=& (\mathbf{R'_b})^{-1}  (\mathbf{R_b})^{-1}. \nonumber
\end{eqnarray}
We can reuse the data $(\mathbf{R'_{-1}})^{-1}
(\mathbf{R_{-1}})^{-1}$ so that the cost of calculating
$(\mathbf{R'_b})^{-1}$, $ (\mathbf{R'_b})^{-1} (\mathbf{R_b})^{-1}$,
and $ -(\mathbf{R'_{-1}})^{-1} (\mathbf{R_{-1}})^{-1} \mathbf{R_a}
(\mathbf{R_b})^{-1}$ becomes $N(N + 1)(2N + 1)/3$ (using Gaussian
elimination method), $N(N + 1)(2N + 1)/6$, and $2 N^3 k^2 - 4N^3 k +
2N^3$, respectively.  The cost for computing $\left(\mathbf{R}'
\mathbf{R}\right)^{-1}$ is
\begin{eqnarray*}
\mathcal{C}_{3} = 2N^3k^2  - 4N^3  k + 3 N^3 + \frac{3}{2} N ^ 2 +
\frac{1}{2} N.
\end{eqnarray*}

\item \textbf{Cost of calculating} $ {{\mathbf{\hat x}}_{\Lambda^{k}}} = \left(\mathbf{R}'
\mathbf{R}\right)^{-1} \mathbf{R}' \mathbf{Q}' \mathbf{y} $ \\
Applying $\left(\mathbf{R}' \mathbf{R}\right)^{-1}$ to the vector
$\mathbf{R}' \mathbf{Q}' \mathbf{y} $, we obtain
\begin{eqnarray*}
{{\mathbf{\hat x}}_{\Lambda^{k}}}  
= {\left(
\begin{matrix}
(\mathbf{R'_{-1}})^{-1}  (\mathbf{R_{-1}})^{-1} \mathbf{R'_{-1}} \mathbf{Q'_{- 1}} \mathbf{y} + \xi_1 \\
\xi_2 + \xi_3 \\
\end{matrix} \right)}
\end{eqnarray*}
where \small
\begin{eqnarray}
\xi_1 & = & - (\mathbf{R'_{-1}})^{-1} (\mathbf{R_{-1}})^{-1}
\mathbf{R_a} (\mathbf{R_b})^{-1} \mathbf{R'_a} \mathbf{Q}'_{- 1}
\mathbf{y} \nonumber \\
    & & + \mathbf{R'_b} \mathbf{Q}'_{0} \mathbf{y}, \nonumber \\
\xi_2 & = & -(\mathbf{R'_b})^{-1}  \mathbf{R'_a}
(\mathbf{R'_{-1}})^{-1}  (\mathbf{R_{-1}})^{-1} \mathbf{R'_{-1}}
\mathbf{Q'_{- 1}} \mathbf{y}, \nonumber \\
\xi_3 & = & (\mathbf{R'_b})^{-1}  (\mathbf{R_b})^{-1} \mathbf{R'_a}
\mathbf{Q}'_{- 1} \mathbf{y} + \mathbf{R'_b} \mathbf{Q}'_{0}
\mathbf{y}. \nonumber
\end{eqnarray}
\normalsize We can reuse $(\mathbf{R'_{-1}})^{-1}
(\mathbf{R_{-1}})^{-1} \mathbf{R'_{-1}} \mathbf{Q'_{- 1}}
\mathbf{y}$ so that the computation of $\xi_1$, $\xi_2$ and $\xi_3$
requires $(2N - 1)N(k - 1)$, $(2N(k - 1) - 1)N$ and $(2N - 1)N$
flops, respectively. The cost of this step becomes
\begin{eqnarray*}
\mathcal{C}_{4} = 4 N^2 k -2 N^2.
\end{eqnarray*}
\end{itemize}

In summary, whole cost of solving LS problem in the $k$-th iteration
of the gOMP is the sum of the above and is given by
\begin{eqnarray*}
\mathcal{C}_{LS} &=&
\mathcal{C}_{QR} + \mathcal{C}_{1} + \mathcal{C}_{2} +
\mathcal{C}_{3} + \mathcal{C}_{4} \\
    &=& 4N^2 km + (- 2 N^2 + 5N )m + 2N^3 k^2  \\
    & & + (-4N^3 + 5 N ^2)k + 3 N^3- N ^2 - N.
\end{eqnarray*}

\section*{Acknowledgment}
The authors would like to thank the anonymous reviewers and for their valuable suggestions that improved the presentation of the paper.

\bibliographystyle{IEEEbib}
\bibliography{CS_refs}

\begin{thebibliography}{10}

\bibitem{donoho2001uncertainty}
D.~L. Donoho and X.~Huo,
\newblock ``Uncertainty principles and ideal atomic decomposition,''
\newblock {\em IEEE Trans. Inform. Theory}, vol. 47, no. 7, pp. 2845--2862,
  Nov. 2001.

\bibitem{donoho2006sparse}
D.~L. Donoho, I.~Drori, Y.~Tsaig, and J.~L. Starck,
\newblock {\em {Sparse solution of underdetermined linear equations by
  stagewise orthogonal matching pursuit}},
\newblock Citeseer, 2006.

\bibitem{candes2005decoding}
E.~J. Cand{\`e}s and T.~Tao,
\newblock ``{Decoding by linear programming},''
\newblock {\em IEEE Trans. Inform. Theory}, vol. 51, no. 12, pp. 4203--4215,
  Dec. 2005.

\bibitem{candes2006robust}
E.~J. Cand{\`e}s, J.~Romberg, and T.~Tao,
\newblock ``{Robust uncertainty principles: Exact signal reconstruction from
  highly incomplete frequency information},''
\newblock {\em IEEE Trans. Inform. Theory}, vol. 52, no. 2, pp. 489--509, Feb.
  2006.

\bibitem{candes2007sparsity}
E.~J. Cand{\`e}s and J.~Romberg,
\newblock ``{Sparsity and incoherence in compressive sampling},''
\newblock {\em Inverse problems}, vol. 23, pp. 969, Apr. 2007.

\bibitem{candes2008restricted}
E.~J. Cand{\`e}s,
\newblock ``{The restricted isometry property and its implications for
  compressed sensing},''
\newblock {\em Comptes Rendus Mathematique}, vol. 346, no. 9-10, pp. 589--592,
  2008.

\bibitem{davenport2010analysis}
M.~A. Davenport and M.~B. Wakin,
\newblock ``{Analysis of Orthogonal Matching Pursuit using the restricted
  isometry property},''
\newblock {\em IEEE Trans. Inform. Theory}, vol. 56, no. 9, pp. 4395--4401,
  Sept. 2010.

\bibitem{tropp2007signal}
J.~A. Tropp and A.~C. Gilbert,
\newblock ``{Signal recovery from random measurements via orthogonal matching
  pursuit},''
\newblock {\em IEEE Trans. Inform. Theory}, vol. 53, no. 12, pp. 4655--4666,
  Dec. 2007.

\bibitem{baraniuk2008simple}
R.~Baraniuk, M.~Davenport, R.~DeVore, and M.~Wakin,
\newblock ``{A simple proof of the restricted isometry property for random
  matrices},''
\newblock {\em Constructive Approximation}, vol. 28, no. 3, pp. 253--263, 2008.

\bibitem{raginsky2010compressed}
M.~Raginsky, R.~M. Willett, Z.~T. Harmany, and R.~F. Marcia,
\newblock ``Compressed sensing performance bounds under poisson noise,''
\newblock {\em IEEE Trans. Signal Process.}, vol. 58, no. 8, pp. 3990--4002,
  Aug. 2010.

\bibitem{wang2011exact}
J.~Wang and B.~Shim,
\newblock ``Exact reconstruction of sparse signals via generalized orthogonal
  matching pursuit,''
\newblock in {\em Proc. Asilomar Conf. on Signals, Systems and Computers,
  Monterey, CA, Nov. 2011}, pp. 1139--1142.

\bibitem{gao2011compressive}
K.~Gao, S.~N. Batalama, D.~A. Pados, and B.~W. Suter,
\newblock ``Compressive sampling with generalized polygons,''
\newblock {\em IEEE Trans. Signal Process.}, vol. 59, no. 10, pp. 4759 --4766,
  Oct. 2011.

\bibitem{khajehnejad2011sparse}
M.~A. Khajehnejad, A.~G. Dimakis, W.~Xu, and B.~Hassibi,
\newblock ``Sparse recovery of nonnegative signals with minimal expansion,''
\newblock {\em IEEE Trans. Signal Process.}, vol. 59, no. 1, pp. 196--208, Jan.
  2011.

\bibitem{sarvotham2006compressed}
S.~Sarvotham, D.~Baron, and R.~G. Baraniuk,
\newblock ``{Compressed sensing reconstruction via belief propagation},''
\newblock {\em preprint}, 2006.

\bibitem{needell2010signal}
D.~Needell and R.~Vershynin,
\newblock ``{Signal recovery from incomplete and inaccurate measurements via
  regularized orthogonal matching pursuit},''
\newblock {\em IEEE J. Sel. Topics Signal Process.}, vol. 4, no. 2, pp.
  310--316, Apr. 2010.

\bibitem{dai2009subspace}
W.~Dai and O.~Milenkovic,
\newblock ``{Subspace pursuit for compressive sensing signal reconstruction},''
\newblock {\em IEEE Trans. Inform. Theory}, vol. 55, no. 5, pp. 2230--2249, May
  2009.

\bibitem{needell2009cosamp}
D.~Needell and J.~A. Tropp,
\newblock ``Cosamp: Iterative signal recovery from incomplete and inaccurate
  samples,''
\newblock {\em Applied and Computational Harmonic Analysis}, vol. 26, no. 3,
  pp. 301--321, March 2009.

\bibitem{wang2012Recovery}
J.~Wang and B.~Shim,
\newblock ``On the reoovery limit of sparse signals using orthogonal matching
  pursuit,''
\newblock {\em IEEE Trans. Signal Process.}, vol. 60, no. 9, pp. 4973--4976,
  Sep. 2012.

\bibitem{needell2009uniform}
D.~Needell and R.~Vershynin,
\newblock ``Uniform uncertainty principle and signal recovery via regularized
  orthogonal matching pursuit,''
\newblock {\em Foundations of Computational Mathematics}, vol. 9, no. 3, pp.
  317--334, 2009.

\bibitem{liu2012orthogonal}
E.~Liu and V.~N. Temlyakov,
\newblock ``The orthogonal super greedy algorithm and applications in
  compressed sensing,''
\newblock {\em IEEE Trans. Inform. Theory}, vol. 58, no. 4, pp. 2040--2047,
  Apr. 2012.

\bibitem{maleh2011improved}
R.~Maleh,
\newblock ``Improved rip analysis of orthogonal matching pursuit,''
\newblock {\em arXiv:1102.4311}, 2011.

\bibitem{candes2005error}
E.~Candes, M.~Rudelson, T.~Tao, and R.~Vershynin,
\newblock ``Error correction via linear programming,''
\newblock in {\em IEEE Symposium on Foundations of Computer Science (FOCS).},
  2005, pp. 668--681.

\bibitem{wang2012Near}
J.~Wang, S.~Kwon, and B.~Shim,
\newblock ``Near optimal bound of orthogonal matching pursuit using restricted
  isometric constant,''
\newblock {\em EURASIP Journal on Advances in Signal Processing}, vol. 2012,
  no. 1, pp. 8, 2012.

\bibitem{zhang2011sparse}
T.~Zhang,
\newblock ``Sparse recovery with orthogonal matching pursuit under rip,''
\newblock {\em IEEE Trans. Inform. Theory}, vol. 57, no. 9, pp. 6215--6221,
  Sept. 2011.

\bibitem{foucart2011stability}
S.~Foucart,
\newblock ``Stability and robustness of weak orthogonal matching pursuits,''
\newblock {\em Submitted for publication}, 2011.

\bibitem{bj?rck1996numerical}
{\AA}.~Bj{\"o}rck,
\newblock {\em Numerical methods for least squares problems},
\newblock Number~51. Society for Industrial Mathematics, 1996.

\end{thebibliography}

\vspace{-1.0cm}
\begin{biography}{Jian Wang  (Student Member, IEEE) }
received the B.S. degree in Material Engineering and M.S. degree in Information and Communication Engineering from Harbin Institute of Technology, China, in 2006 and 2009, respectively. He is currently working toward the Ph.D. degree in Electrical and Computer Engineering in Korea University. His research interests include compressive sensing, wireless communications, and statistical learning.
\end{biography}

\vspace{-1.0cm}
\begin{biography}{Seokbeop Kwon (Student Member, IEEE) }
received the B.S. and M.S degrees in the School of Information and Communication, Korea University, in 2008 and 2010, where he is currently working toward the Ph.D. degree. His research interests include compressive sensing and signal processing.
\end{biography}

\vspace{-1.0cm}
\begin{biography}{Byonghyo Shim  (Senior Member, IEEE) }
received the B.S. and M.S. degrees in control and instrumentation engineering (currently electrical engineering) from Seoul National University, Korea, in 1995 and 1997, respectively and the M.S. degree in mathematics and the Ph.D. degree in electrical and computer engineering from the University of Illinois at Urbana-Champaign, in 2004 and 2005, respectively.

From 1997 and 2000, he was with the department of electronics engineering at the Korean Air Force Academy as an Officer (First Lieutenant) and an Academic Full-time Instructor. From 2005 to 2007, he was with the Qualcomm Inc., San Diego, CA, as a staff member. Since September 2007, he has been with the school of information and communication, Korea University, where he is currently an associate professor. His research interests include wireless communications, compressive sensing, applied linear algebra, and information theory.

Dr. Shim was the recipient of 2005 M. E. Van Valkenburg research award from ECE department of University of Illinois and 2010 Haedong young engineer award from IEEK.
\end{biography}

\end{document}